\newtheorem{thm}{Theorem}
\newtheorem{cor}{Corollary}
\newtheorem{lem}{Lemma}
\newtheorem{proof}{proof}
\newtheorem{defn}{Definition}
\newtheorem{rem}{Remark}
\newtheorem{exam}{Example}
\begin{document}

\title{Sequence-Subset Distance and Coding for Error Control in
DNA-based Data Storage}

\author{Wentu~Song,
        Kui~Cai,~\IEEEmembership{senior member,~IEEE}
        and~Kees~A.~Schouhamer~Immink,~\IEEEmembership{Fellow,~IEEE}
\thanks{Wentu Song and Kui Cai are with Singapore University of
        Technology and Design, Singapore, e-mail:
        \{wentu$\_$song, cai$\_$kui\}@sutd.edu.sg.}
\thanks{Kees A. Schouhamer Immink is with Turing Machines Inc,
        Willemskade 15d, 3016 DK Rotterdam, The Netherlands,
        e-mails: immink@turing-machines.com.}
        }


\maketitle

\begin{abstract}
The process of DNA-based data storage (DNA storage for short) can
be mathematically modelled as a communication channel, termed DNA
storage channel, whose inputs and outputs are sets of unordered
sequences. To design error correcting codes for DNA storage
channel, a new metric, termed the \emph{sequence-subset distance},
is introduced, which generalizes the Hamming distance to a
distance function defined between any two sets of unordered
vectors and helps to establish a uniform framework to design error
correcting codes for DNA storage channel. We further introduce a
family of error correcting codes, referred to as
\emph{sequence-subset codes}, for DNA storage and show that the
error-correcting ability of such codes is completely determined by
their minimum distance. We derive some upper bounds on the size of
the sequence-subset codes including a tight bound for a special
case, a Singleton-like bound and a Plotkin-like bound. We also
propose some constructions, including an optimal construction for
that special case, which imply lower bounds on the size of such
codes.
\end{abstract}

\begin{IEEEkeywords}
DNA data storage, error-correcting codes, Singleton bound, Plotkin
bound.
\end{IEEEkeywords}

\IEEEpeerreviewmaketitle

\section{Introduction}
The idea of storing data in synthetic DNA strands (sequences) has
been around since $1988$ \cite{Davis96} and DNA-based data storage
has been progressing rapidly in recent years with the development
of DNA synthesis and sequencing technology. Compared to
traditional magnetic and optical media, DNA storage has some
competing advantages such as extreme high density, long durability
\cite{Bornholt16}, and low energy consumption \cite{Church12}.

A DNA strand is mathematically represented by a quaternary
sequence, each symbol represents one of the four types of base
nucleotides: adenine $(${\small\textsf{A}}$)$, cytosine
$(${\small\textsf{C}}$)$, guanine $(${\small\textsf{G}}$)$ and
thymine $(${\small\textsf{T}}$)$. Basically, in a DNA-based
storage system, the original binary data is first encoded to a set
of quaternary sequences. Then the corresponding DNA strands are
synthesized and stored in DNA pools. To retrieve (read) the
original data, the stored DNA strands are sequenced to generate a
set of quaternary sequences, which are then decoded to the
original binary data. The process of DNA synthesizing, storing and
sequencing can be mathematically modelled as a communication
channel, called the DNA storage channel, which can be depicted by
Fig. \ref{Systm-model}.

\renewcommand\figurename{Fig}
\begin{figure*}[htbp]
\begin{center}
\includegraphics[height=2.3cm]{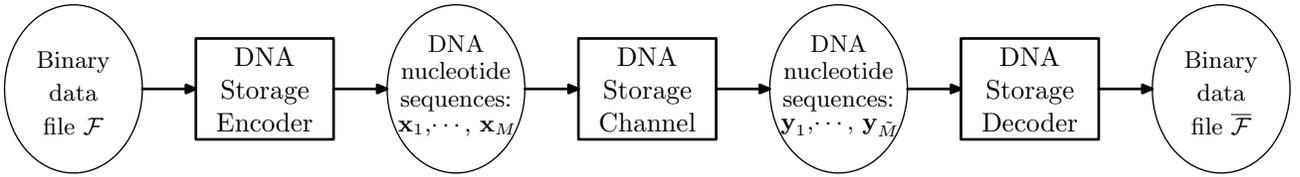}
\end{center}
\vspace{-0.2cm}\caption{System model of the DNA storage: The DNA
storage channel is the mathematical model of the process of DNA
synthesizing, storing and sequencing. A reliable system should
guarantee that with sufficiently high probability the decoded file
$\overline{\mathcal F}$ equals to the original file $\mathcal
F$.}\label{Systm-model}
\end{figure*}

The sequencing process can be modelled as a randomly sampling and
reading of molecules with replacement from the DNA pool
\cite{Heckel}. Some DNA strands may have many copies that are
sequenced while some strands may never be sequenced. Moreover,
since the synthesis/sequencing process is prone to errors, a
specific DNA strand in the pool may have many noisy copies that
are contained in the sequencing output. These sequenced strands
are clustered according to their Levenshtein distance or by some
other methods $($e.g., see \cite{Organick18,Rashtchian17}$)$, and
then the clustered sequences are reconstructed by performing an
estimate for each cluster \cite{Levenshtein01}. All different
estimated sequences form the output of the DNA storage channel.
Another characteristic of DNA storage channel is that unlike the
conventional magnetic or optical recording systems, the DNA
sequences are stored in ``pools'', where structured addressing is
not allowed. Therefore, the inputs and outputs of the DNA storage
channel can be viewed as sets of unordered DNA sequences.

The output of the DNA storage channel may be distorted by the
following five types of errors:
\begin{itemize}
 \item[$\bullet$] \emph{Sequence deletion}: One or more of the input
 sequences are lost. A DNA strand is lost if
 it is never sequenced. Another case of sequence deletion is
 when there are $t~(>1)$ strands
 that are changed to the same strand by substitution errors, then any
 $t-1$ of them are viewed as lost sequences.
 As a result, the number of output sequences is
 smaller than the number of input sequences.
 \item[$\bullet$] \emph{Sequence insertion}: One or more sequences
 that do not belong to the set of input sequences are added into the
 output sequences.
 If the output of the channel contains $t~(>1)$ different
 noisy copies of an input sequence, then any $t-1$ of them can be viewed
 as inserted sequences.
 As a result, the number of output sequences is
 larger than the number of input sequences.
 \item[$\bullet$] \emph{Symbol deletion}: One or more
 symbols in a sequence are removed. As a result, the length of the
 erroneous sequence is decreased.
 \item[$\bullet$] \emph{Symbol insertion}: One or more symbols are
 added into a sequence. As a result, the length of the erroneous
 sequence is increased.
 \item[$\bullet$] \emph{Symbol substitution}: One or more symbols
 in a sequence are replaced by other symbols. In this case, the
 length of the erroneous sequence remains unchanged.
\end{itemize}
Note that sequence deletion and sequence insertion can take place
simultaneously. If the number of sequence deletions equals the
number of sequence insertions, then the total number of input
sequences remain unchanged.

To combat different types of errors in DNA synthesizing and
sequencing, various coding techniques are used by DNA storage.
Most demonstration research works employ constrained coding
combined with classical error correcting codes (e.g. Reed-Solomon
codes) \cite{Church12}-\cite{Immink18}. In addition, to combat the
lack of ordering of the transmitted sequences, a unique address
(index) is added to each sequence.

Codes that can correct $s$ (or fewer) losses of sequences and $e$
(or fewer) substitutions in each of $t$ (or fewer) sequences were
studied in \cite{Lenz18} by considering the so-called error ball.
Codes dealing with insertion/deletion errors were also studied in
\cite{Lenz18}. Codes that can correct a total of $K$ substitution
errors were studied in \cite{Sima18}.

\subsection{Our Contribution}
In this paper, we consider error control for DNA storage channel
by introducing a new metric, termed the \emph{sequence-subset
distance}, over the power set of the set of all vectors of fixed
length over a finite alphabet, which is the space of the
inputs/outputs of the DNA storage channel. This metric is a
generalization of the classical Hamming distance and can help to
establish a uniform framework to design codes for DNA storage
channel that can correct errors of sequence deletion, sequence
insertion and symbol substitution.

We study error correcting codes with respect to the
sequence-subset distance, which we refer to as
\emph{sequence-subset codes}, for DNA-based data storage. We show
that similar to codes with respect to the classical Hamming
distance, a sequence-subset code $\mathcal C$ can correct any
number of $n_{\text{D}}$ sequence deletions, $n_{\text{I}}$
sequence insertions, and totally $n_{\text{S}}$ symbol
substitutions, provided that
$n_{\text{S}}+L\cdot\max\{n_{\text{I}},n_{\text{D}}\}\leq
\frac{d_{\text{S}}(\mathcal C)-1}{2}$, where $L$ is the length of
the sequences and $d_{\text{S}}(\mathcal C)$ is the minimum
distance of $\mathcal C$.

We derive some upper bounds on the size of the sequence-subset
codes including a tight bound for the special case that $d=LM$, a
Singleton-like bound and a Plotkin-like bound, where $M$ is the
codeword size (i.e., the number of sequences in each codeword of
the sequence-subset codes).

We give a construction of optimal codes (with respect to size) for
the special case that $d=LM$ and $M^{\frac{1}{L}}$ is an integer,
where $d$ is the minimum distance of the code. We also give some
general constructions of sequence-subset codes, which imply lower
bounds of the size of such codes.

\subsection{Related Work}

The similar channel model for DNA storage was also studied in
\cite{Lenz18}, \cite{Sima18} and \cite{Heckel}.

In \cite{Lenz18} and \cite{Sima18}, data is stored in an unordered
set of $M$ strings of length $L$ (the input of the DNA storage
channel), where $M$ and $L$ are some fixed positive integers. The
work of \cite{Lenz18} considered the error-correcting problem by
restricting that $s$ sequences are lost during the
synthesizing/sequencing process and the output of the channel is a
subset of $M-s$ input sequences, among which $M-s-t$ sequences are
correctly reconstructed and $t$ sequences are reconstructed with
errors such that each sequence has at most $\epsilon$ errors,
where possible errors are symbol insertion/deletion and/or
substitution. In \cite{Sima18}, the channel was studied under the
assumption that the values of a total of $K$ different positions
in the $M$ input sequences are changed (i.e., there are totally
$K$ symbol substitutions). Since the erroneous sequence may be
equal to another existing sequence, which in fact induces sequence
deletion, the output of the channel is a set of $T$ strings of
length $L$ for some $T$ such that $M-K\leq T\leq M$.

In \cite{Heckel}, DNA storage is modelled as a channel whose
inputs are multisets of $M$ sequences of length $L$ while the
output of the channel is a multiset of $N$ sequences of length
$L$, which is obtained by drawing $N$ samples independently and
uniformly at random, with replacement, from the $M$ input
sequences, where $M,L,N$ are the fixed parameters of the channel.
It also assumes that each sampled molecule is read error-free.

Comparison of our model with the models of \cite{Lenz18},
\cite{Sima18} and \cite{Heckel} is given in Table 1.

\begin{table*}
\begin{center}\renewcommand\arraystretch{1.3}
\begin{tabular}{|p{2.2cm}|p{4.3cm}|p{4.3cm}|p{4.2cm}|}
\hline ~~~~~Model & ~~~~~~~~~~~~Inputs
& ~~~~~~~~~~~~~Outputs & ~~~~~~~~~~~Error types\\
\hline Model of \cite{Lenz18} & A set of $M$ DNA sequences of
length $L$, for some fixed $M$ and $L$ & A set of $M-s$ DNA
sequences with at most $t$ erroneous sequences & Sequence
deletion, symbol insertion, deletion and substitution \\
\hline Model of \cite{Sima18} & A set of $M$ DNA sequences of
length $L$, for some fixed $M$ and $L$ & A set of $T$ DNA
sequences with totally $K$ substitutions, where $M-K\leq T\leq M$
& Symbol substitution, sequence deletion induced by
symbol substitution \\
\hline Model of \cite{Heckel} & A multi-set of $M$ DNA sequences
of length $L$ for some fixed $M$ and $L$ & A multi-set of $N$ DNA
sequences drawn randomly with replacement from the $M$ input
sequences & Sequence deletion \\
\hline Our Model & A set of $M$ DNA sequences of length $L$, where
$L$ is fixed but $M$ is not necessarily fixed & A set of
$\tilde{M}$ DNA sequences, which may include erroneous sequences
and additional inserted sequences & Sequence
deletion, sequence insertion and symbol substitution \\
\hline
\end{tabular}\\
\vspace{2.5mm} Table 1. Comparison of different models for DNA
storage channel.
\end{center}
\end{table*}

\vspace{2.5mm} Another model for DNA storage channel, which
focuses on modelling the process of synthesis and sequencing of
single DNA strand, was consider in \cite{Kiah-IT16}. Different
from our model, the input of this channel is a single DNA sequence
(rather than a set of sequences), and through the process of
synthesis and sequencing, a set of DNA fragments along with their
frequency count is obtained, which can be represented by a profile
vector. Three types of errors, namely, substitution errors due to
synthesis, coverage errors, and $\ell$-gram substitution errors
due to sequencing, are considered in \cite{Kiah-IT16}.

There are still some other communication channels similar to DNA
storage channel. The permutation channel considered in
\cite{Langberg-IT} has input and output as vectors over a finite
alphabet and the transmitted vector is corrupted by a permutation
on its coordination. The permutation channel with impairments was
considered in \cite{Kovacevic17}, where the input and output are
multi-sets, rather than vectors, of symbols from a finite
alphabet. Unlike the DNA storage channel, the structure
information of the sequences (i.e., the Hamming distance between
sequences when the sequence length $L>1$) is not considered in
such models.

\subsection{Organization}
The rest of the paper is organized as follows. In Section
\uppercase\expandafter{\romannumeral 2}, we introduce the
sequence-subset distance and provide the basic properties of codes
with sequence-subset distance. We analyze the upper bound on the
size of sequence-subset codes in Section
\uppercase\expandafter{\romannumeral 3} and give some
constructions of such codes in Section
\uppercase\expandafter{\romannumeral 4}. The paper is concluded in
Section \uppercase\expandafter{\romannumeral 5}.

\subsection{Notations}
The following notations will be used in this paper:
\begin{itemize}
 \item [1)] For any positive integer $n$, $[n]:=\{1,2,\cdots,n\}$.
 \item [2)] For any set $\mathbb A$, $|\mathbb A|$ denotes the size
 (i.e., cardinality) of $\mathbb A$ and $\mathcal
 P(\mathbb A)$ denotes the power set of $\mathbb A~($i.e.,
 the collection of all subsets of $\mathbb A)$.
 \item [3)] For any two sets $\textbf{\text{X}}$ and $\textbf{\text{Y}}$,
 $\textbf{\text{X}}\backslash\textbf{\text{Y}}$ is the set of all
 elements of $\textbf{\text{X}}$ that do not belong to
 $\textbf{\text{Y}}$.
 \item [4)] For any $n$-tuple
 $\textbf{\text{x}}\in\mathbb A^n$ and any $i\in[n]$,
 $\textbf{\text{x}}(i)$ denotes the $i$th coordinate of
 $\textbf{\text{x}}$, and hence $\textbf{\text{x}}$ is denoted as
 $\textbf{\text{x}}=(\textbf{\text{x}}(1), \textbf{\text{x}}(2),
 \cdots, \textbf{\text{x}}(n))$.
\end{itemize}

\section{Preliminary}
We first introduce the concept of sequence-subset distance. Then
we discuss the error pattern and error-correcting in DNA storage
channel using codes with sequence-subset distance.

\subsection{Sequence-Subset Distance}
Let $\mathbb A$ be a fixed finite alphabet. For DNA data storage,
typically $\mathbb A=\{${\small\textsf{A}}, {\small\textsf{T}},
{\small\textsf{C}}, {\small\textsf{G}}$\}$, representing the four
types of base nucleotides. In this work, for generality, we assume
that $\mathbb A$ is any fixed finite alphabet of size $q\geq 2$.

Let $L$ be a positive integer. For any
$\textbf{x}_1,\textbf{x}_2\in\mathbb A^L$, the Hamming distance
between $\textbf{x}_1$ and $\textbf{x}_2$, denoted by
$d_{\text{H}}(\textbf{x}_1,\textbf{x}_2)$, is defined as the
number of coordinates where $\textbf{x}_1$ and $\textbf{x}_2$
differ, that is,
$$d_{\text{H}}(\textbf{x}_1,\textbf{x}_2):=|\{i\in[L];
\textbf{x}_1(i)\neq \textbf{x}_2(i)\}|.$$

For any two subsets $\textbf{\text{X}}_1$ and
$\textbf{\text{X}}_2$ of $\mathbb A^L$ such that
$|\textbf{\text{X}}_1|\leq|\textbf{\text{X}}_2|$ and any injection
$\chi:\textbf{\text{X}}_1\rightarrow\textbf{\text{X}}_2$, denote
\begin{align}\label{eq-def-chi-d}
d_{\chi}(\textbf{\text{X}}_1,\textbf{\text{X}}_2)\!:=\!\!
\sum_{\textbf{\text{x}}\in\textbf{\text{X}}_1}d_{\text{H}}(\textbf{\text{x}},
\chi(\textbf{\text{x}}))\!+\!L(|\textbf{\text{X}}_2|\!-
\!|\textbf{\text{X}}_1|).\end{align} Then a natural way to
generalize the Hamming distance to the space of all subsets of
$\mathbb A^L$ is as follows.
\begin{defn}\label{def-dst}
For any $\textbf{\text{X}}_1,\textbf{\text{X}}_2\subseteq\mathbb
A^L$, without loss of generality, assuming
$|\textbf{\text{X}}_1|\leq|\textbf{\text{X}}_2|$, the
\emph{sequence-subset distance} between $\textbf{\text{X}}_1$ and
$\textbf{\text{X}}_2$ is defined as
\begin{align}\label{eq-def-dst}
d_{\text{S}}(\textbf{\text{X}}_1,\textbf{\text{X}}_2)=
d_{\text{S}}(\textbf{\text{X}}_2,\textbf{\text{X}}_1):=\min_{\chi\in\mathscr
X}d_{\chi}(\textbf{\text{X}}_1,\textbf{\text{X}}_2),\end{align}
where $\mathscr X$ is the set of all injections
$\chi:\textbf{\text{X}}_1\rightarrow\textbf{\text{X}}_2$.\footnote{A
more accurate notation for the set $\mathscr X$ is $\mathscr
X_{\textbf{\text{X}}_1,\textbf{\text{X}}_2}$ because it is related
to the subsets $\textbf{\text{X}}_1$ and $\textbf{\text{X}}_2$.
However, we can omit the subscripts safely because they can be
easily specified by the context.}
\end{defn}

\begin{exam}
Suppose $\mathbb A=\{0,1\}$ and $L=4$. Consider
$\textbf{X}_1=\{\textbf{x}_1,\textbf{x}_2,\textbf{x}_3\}$ and
$\textbf{X}_2=\{\textbf{y}_1,\textbf{y}_2,\textbf{y}_3,\textbf{y}_4\}$,
where $\textbf{x}_1=1010$, $\textbf{x}_2=0010$,
$\textbf{x}_3=1101$, $\textbf{y}_1=1101$, $\textbf{y}_2=0011$,
$\textbf{y}_3=1011$ and $\textbf{y}_4=1100$. Let
$\chi_0:\textbf{X}_1\rightarrow\textbf{X}_2$ be such that
$\chi_0(\textbf{x}_1)=\textbf{y}_3$,
$\chi_0(\textbf{x}_2)=\textbf{y}_2$ and
$\chi_0(\textbf{x}_3)=\textbf{y}_1$. Then by \eqref{eq-def-chi-d},
we can obtain $d_{\chi_0}(\textbf{X}_1,\textbf{X}_2)=6$. We can
further verify that $d_{\chi}(\textbf{X}_1,\textbf{X}_2)\geq 6$
for all injections $\chi: \textbf{X}_1\rightarrow \textbf{X}_2$.
Hence by \eqref{eq-def-dst}, we have
$d_{\text{S}}(\textbf{X}_1,\textbf{X}_2)
=d_{\chi_0}(\textbf{X}_1,\textbf{X}_2)=6$.
\end{exam}

\begin{rem}
Given any subsets $\textbf{X}_1,\textbf{X}_2$ of
 $\mathbb A^L$ such that
$|\textbf{X}_1|\leq|\textbf{X}_2|$, let $V_1=\textbf{X}_1\cup V_0$
and $V_2=\textbf{X}_2$, where $V_0$ is a set disjoint with
$\textbf{X}_1$. We can construct a complete bipartite weighted
graph $G$ with bipartition $(V_1, V_2)$ such that for each
$\textbf{x}_1\in V_1$ and $\textbf{x}_2\in V_2$, the weight of the
edge $(\textbf{x}_1,\textbf{x}_2)$ is
$d_{\text{H}}(\textbf{x}_1,\textbf{x}_2)$, where we define
$d_{\text{H}}(\textbf{x}_1,\textbf{x}_2)=L$ for any
$\textbf{x}_1\in V_0$ and $\textbf{x}_2\in V_2$. Then by
Definition \ref{def-dst}, the sequence-subset distance between
$\textbf{X}_1$ and $\textbf{X}_2$ can be computed from a minimum
weight perfect matching of $G$, which can be done in time
$O(|\textbf{X}_2|^3)$ using the Kuhn-Munkres algorithm
\cite{Munkres}.
\end{rem}

We now prove some important properties of the function
$d_{\text{S}}(\cdot,\cdot)$ and then prove that it is really a
distance function.

First, intuitively, the elements in
$\textbf{\text{X}}_1\cap\textbf{\text{X}}_2$ should have no effect
on the sequence-subset distance between $\textbf{\text{X}}_1$ and
$\textbf{\text{X}}_2$. This is shown to be true by the following
lemma and corollary.
\begin{lem}\label{lem-dst}
For any $\textbf{\text{X}}_1,\textbf{\text{X}}_2\subseteq\mathbb
A^L$ such that $|\textbf{\text{X}}_1|\leq|\textbf{\text{X}}_2|$,
there exists an injection $\chi_0\in\mathscr X$ such that
$d_{\text{S}}(\textbf{\text{X}}_1,\textbf{\text{X}}_2)=
d_{\chi_0}(\textbf{\text{X}}_1,\textbf{\text{X}}_2)$ and
$\chi_0(\textbf{\text{x}})=\textbf{\text{x}}$ for all
$\textbf{\text{x}}\in\textbf{\text{X}}_1\cap\textbf{\text{X}}_2$.
\end{lem}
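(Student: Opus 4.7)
The plan is to start with any optimal injection achieving the minimum in \eqref{eq-def-dst} and modify it by elementary swaps until every element of $\textbf{X}_1\cap\textbf{X}_2$ becomes a fixed point, without losing optimality. Since $\mathscr{X}$ is a finite set, I may pick an optimal injection $\chi_0$ that, among all optimal injections, maximizes the size of $F(\chi):=\{\textbf{x}\in\textbf{X}_1\cap\textbf{X}_2:\chi(\textbf{x})=\textbf{x}\}$. The goal is to show that this maximizer automatically satisfies $F(\chi_0)=\textbf{X}_1\cap\textbf{X}_2$; if not, I will construct another optimal injection whose fixed-point set on $\textbf{X}_1\cap\textbf{X}_2$ is strictly larger, contradicting the choice of $\chi_0$.

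Suppose, toward contradiction, that some $\textbf{x}\in\textbf{X}_1\cap\textbf{X}_2$ satisfies $\chi_0(\textbf{x})=\textbf{y}\neq\textbf{x}$. I would split into two cases according to whether $\textbf{x}$ lies in the image $\chi_0(\textbf{X}_1)$. If $\textbf{x}\notin\chi_0(\textbf{X}_1)$, then redefining $\chi'(\textbf{x}):=\textbf{x}$ and keeping all other values gives an injection with $d_{\chi'}(\textbf{X}_1,\textbf{X}_2)=d_{\chi_0}(\textbf{X}_1,\textbf{X}_2)-d_{\text{H}}(\textbf{x},\textbf{y})<d_{\chi_0}(\textbf{X}_1,\textbf{X}_2)$, directly contradicting optimality of $\chi_0$. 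Hence this case is vacuous.

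The interesting case is $\textbf{x}\in\chi_0(\textbf{X}_1)$, where I set $\textbf{x}':=\chi_0^{-1}(\textbf{x})\neq\textbf{x}$ and define $\chi'$ by swapping the images of $\textbf{x}$ and $\textbf{x}'$: $\chi'(\textbf{x}):=\textbf{x}$, $\chi'(\textbf{x}'):=\textbf{y}$, and $\chi'(\textbf{z}):=\chi_0(\textbf{z})$ for all other $\textbf{z}\in\textbf{X}_1$. The map $\chi'$ is clearly still an injection into $\textbf{X}_2$. Using the triangle inequality for the Hamming distance, $d_{\text{H}}(\textbf{x}',\textbf{y})\leq d_{\text{H}}(\textbf{x}',\textbf{x})+d_{\text{H}}(\textbf{x},\textbf{y})$, a direct evaluation from \eqref{eq-def-chi-d} gives
\[
d_{\chi'}(\textbf{X}_1,\textbf{X}_2)-d_{\chi_0}(\textbf{X}_1,\textbf{X}_2)=d_{\text{H}}(\textbf{x}',\textbf{y})-d_{\text{H}}(\textbf{x},\textbf{y})-d_{\text{H}}(\textbf{x}',\textbf{x})\leq 0,
\]
so $\chi'$ is also optimal by the minimality of $\chi_0$.

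To close the argument I just need to verify that the swap strictly enlarges the fixed-point set. Any element of $F(\chi_0)$ other than $\textbf{x}'$ remains fixed by $\chi'$, and $\textbf{x}$ is newly fixed; moreover $\textbf{x}'\notin F(\chi_0)$ because $\chi_0(\textbf{x}')=\textbf{x}\neq\textbf{x}'$, so nothing previously fixed is disturbed. Consequently $F(\chi')\supseteq F(\chi_0)\cup\{\textbf{x}\}$ with strict inclusion, contradicting maximality of $|F(\chi_0)|$. I expect the main (minor) obstacle to be the bookkeeping for the swap, namely checking injectivity of $\chi'$ and confirming that $\textbf{x}'\notin F(\chi_0)$; both fall out immediately from $\chi_0(\textbf{x}')=\textbf{x}$ and the fact that $\chi_0$ is an injection, so the triangle inequality does essentially all of the work.
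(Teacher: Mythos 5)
Your proof is correct and follows essentially the same route as the paper's: the paper likewise observes that an element of $\textbf{X}_1\cap\textbf{X}_2$ lying outside the image of an optimal injection could be redirected to itself for a strict improvement, and otherwise performs the same image swap justified by the triangle inequality for $d_{\text{H}}$. The only cosmetic difference is that you organize the termination via an extremal choice (an optimal injection maximizing the fixed-point set), whereas the paper iterates the swap, decreasing the number of non-fixed intersection elements by one at each step.
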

\begin{proof}
The proof is given in Appendix A.
\end{proof}

\begin{cor}\label{cor-dst}
For any two subsets $\textbf{\text{X}}_1$ and
$\textbf{\text{X}}_2$ of $\mathbb A^L$,
\begin{align*}
d_{\text{S}}(\textbf{\text{X}}_1,\textbf{\text{X}}_2)=
d_{\text{S}}(\textbf{\text{X}}_1\backslash\textbf{\text{X}}_2,
\textbf{\text{X}}_2\backslash\textbf{\text{X}}_1).\end{align*}
\end{cor}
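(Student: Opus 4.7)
The plan is to deduce the corollary as a direct consequence of Lemma \ref{lem-dst} by removing the common part $\textbf{X}_1\cap\textbf{\text{X}}_2$ from an optimal injection and, conversely, extending an optimal injection on the symmetric difference by the identity on the intersection. Without loss of generality, assume $|\textbf{X}_1|\le|\textbf{X}_2|$ (note that this is equivalent to $|\textbf{X}_1\setminus\textbf{X}_2|\le|\textbf{X}_2\setminus\textbf{X}_1|$, since they differ by $|\textbf{X}_1\cap\textbf{X}_2|$ on each side). Write $I=\textbf{X}_1\cap\textbf{X}_2$, $\textbf{X}_1'=\textbf{X}_1\setminus\textbf{X}_2$, and $\textbf{X}_2'=\textbf{X}_2\setminus\textbf{X}_1$.

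For the direction $d_{\text{S}}(\textbf{X}_1,\textbf{X}_2)\ge d_{\text{S}}(\textbf{X}_1',\textbf{X}_2')$, I would invoke Lemma \ref{lem-dst} to pick an optimal injection $\chi_0:\textbf{X}_1\to\textbf{X}_2$ that fixes every element of $I$. Since $\chi_0$ is an injection and fixes $I$ pointwise, its restriction to $\textbf{X}_1'$ necessarily lands in $\textbf{X}_2'$, giving an injection $\chi_0|_{\textbf{X}_1'}:\textbf{X}_1'\to\textbf{X}_2'$. The identity pairs in $I$ contribute $0$ to $\sum d_{\text{H}}$, and $|\textbf{X}_2|-|\textbf{X}_1|=|\textbf{X}_2'|-|\textbf{X}_1'|$, so the formula \eqref{eq-def-chi-d} gives $d_{\chi_0}(\textbf{X}_1,\textbf{X}_2)=d_{\chi_0|_{\textbf{X}_1'}}(\textbf{X}_1',\textbf{X}_2')$, which is at least $d_{\text{S}}(\textbf{X}_1',\textbf{X}_2')$.

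For the reverse direction, I would take any injection $\psi:\textbf{X}_1'\to\textbf{X}_2'$ realizing $d_{\text{S}}(\textbf{X}_1',\textbf{X}_2')$ and extend it to an injection $\chi:\textbf{X}_1\to\textbf{X}_2$ by letting $\chi(\textbf{x})=\textbf{x}$ for $\textbf{x}\in I$ and $\chi(\textbf{x})=\psi(\textbf{x})$ for $\textbf{x}\in\textbf{X}_1'$. Injectivity holds because $\psi$ maps into $\textbf{X}_2'$, which is disjoint from $I$. Again the identity pairs contribute nothing and the cardinality surplus is the same, so $d_{\chi}(\textbf{X}_1,\textbf{X}_2)=d_{\psi}(\textbf{X}_1',\textbf{X}_2')=d_{\text{S}}(\textbf{X}_1',\textbf{X}_2')$, whence $d_{\text{S}}(\textbf{X}_1,\textbf{X}_2)\le d_{\text{S}}(\textbf{X}_1',\textbf{X}_2')$.

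There is really no obstacle here once Lemma \ref{lem-dst} is in hand; the only point that needs care is checking that the restricted/extended maps remain injections and that the cardinality correction term $L(|\textbf{X}_2|-|\textbf{X}_1|)$ is unchanged after stripping $I$. The substance of the argument is already absorbed into Lemma \ref{lem-dst}, so the corollary reduces to a short bookkeeping check.
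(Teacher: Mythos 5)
Your proposal is correct and follows exactly the route the paper intends: the paper's proof of this corollary is simply the one-line remark that it is "a direct consequence of Definition \ref{def-dst} and Lemma \ref{lem-dst}," and your two-directional argument (restricting an optimal intersection-fixing injection to the symmetric difference, and conversely extending an optimal injection on the difference by the identity on the intersection) is precisely the bookkeeping that remark leaves implicit.
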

\begin{proof}
This corollary is just a direct consequence of Definition
\ref{def-dst} and Lemma \ref{lem-dst}.
\end{proof}

\begin{lem}\label{lem-dst-subset}
Suppose $\textbf{\text{X}}_1,\textbf{\text{X}}_2\subseteq\mathbb
A^L$ such that $|\textbf{\text{X}}_1|\leq|\textbf{\text{X}}_2|$.
Suppose $\textbf{X}_2'\subseteq \textbf{X}_2$ such that
$|\textbf{X}_1|\leq|\textbf{X}_2'|$. Then
$$d_{\text{S}}(\textbf{\text{X}}_1,\textbf{\text{X}}_2')\leq
d_{\text{S}}(\textbf{\text{X}}_1,\textbf{\text{X}}_2).$$
\end{lem}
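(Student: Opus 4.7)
The plan is to take an optimal injection for the pair $(\textbf{X}_1, \textbf{X}_2)$ and surgically modify it into an injection $\textbf{X}_1 \to \textbf{X}_2'$ whose cost is no larger. Concretely, let $\chi \in \mathscr{X}$ be an injection achieving $d_{\text{S}}(\textbf{X}_1, \textbf{X}_2) = d_{\chi}(\textbf{X}_1, \textbf{X}_2)$, and partition $\textbf{X}_1$ into $S := \{\textbf{x} \in \textbf{X}_1 : \chi(\textbf{x}) \notin \textbf{X}_2'\}$ and $\textbf{X}_1 \setminus S$. The elements of $\textbf{X}_1 \setminus S$ are already mapped into $\textbf{X}_2'$, so I keep them fixed; I need to redirect the elements of $S$ into $\textbf{X}_2' \setminus \chi(\textbf{X}_1)$.

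For this to be possible, I first check that $|\textbf{X}_2' \setminus \chi(\textbf{X}_1)| \geq |S|$. Since $\chi$ is an injection, $|\chi(\textbf{X}_1) \cap \textbf{X}_2'| = |\textbf{X}_1| - |S|$, so $|\textbf{X}_2' \setminus \chi(\textbf{X}_1)| = |\textbf{X}_2'| - |\textbf{X}_1| + |S| \geq |S|$ using the hypothesis $|\textbf{X}_2'| \geq |\textbf{X}_1|$. Pick an arbitrary injection from $S$ into $\textbf{X}_2' \setminus \chi(\textbf{X}_1)$; glue it with $\chi|_{\textbf{X}_1 \setminus S}$ to form an injection $\chi': \textbf{X}_1 \to \textbf{X}_2'$.

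Now I compare the two costs term by term using \eqref{eq-def-chi-d}. The contributions from $\textbf{X}_1 \setminus S$ are identical. For each $\textbf{x} \in S$, the contribution under $\chi'$ is $d_{\text{H}}(\textbf{x}, \chi'(\textbf{x})) \leq L$. The gap term changes from $L(|\textbf{X}_2| - |\textbf{X}_1|)$ to $L(|\textbf{X}_2'| - |\textbf{X}_1|)$, saving $L(|\textbf{X}_2| - |\textbf{X}_2'|)$. The decisive combinatorial input is that $\chi$ maps $S$ into $\textbf{X}_2 \setminus \textbf{X}_2'$, so $|S| \leq |\textbf{X}_2| - |\textbf{X}_2'|$. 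Therefore the increase on the $S$-block, at most $L|S|$, is absorbed by the savings in the gap term, and $d_{\chi'}(\textbf{X}_1, \textbf{X}_2') \leq d_{\chi}(\textbf{X}_1, \textbf{X}_2)$.

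Finally, $d_{\text{S}}(\textbf{X}_1, \textbf{X}_2') \leq d_{\chi'}(\textbf{X}_1, \textbf{X}_2')$ by Definition \ref{def-dst}, giving the claim. I do not expect any genuine obstacle; the only thing that requires care is the double-counting argument in the last paragraph, namely verifying $|S| \leq |\textbf{X}_2 \setminus \textbf{X}_2'|$ and checking that the savings $L(|\textbf{X}_2| - |\textbf{X}_2'|)$ really dominate the worst-case cost $L|S|$ incurred by the remapped edges, since the original $\chi$-contributions of elements of $S$ are nonnegative and can simply be dropped.
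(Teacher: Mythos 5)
Your proof is correct and rests on the same mechanism as the paper's: redirect the elements of $\textbf{X}_1$ whose images under an optimal injection fall outside $\textbf{X}_2'$, pay at most $L$ for each redirected element, and absorb that cost into the $L(|\textbf{X}_2|-|\textbf{X}_2'|)$ saved in the gap term, the decisive observation being $|S|\leq|\textbf{X}_2\setminus\textbf{X}_2'|$ because $\chi$ maps $S$ injectively into $\textbf{X}_2\setminus\textbf{X}_2'$. The only difference is presentational: the paper first reduces to the case $|\textbf{X}_2'|=|\textbf{X}_2|-1$ and then splits into two cases according to whether the removed element lies in the image of the optimal injection, whereas you handle all removed elements in a single step with a global counting argument; both are valid.
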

\begin{proof}
The proof is given in Appendix B.
\end{proof}

Now we prove that $d_{\text{S}}(\cdot,\cdot)$ is really a distance
function (metric) over $\mathcal P(\mathbb A^L)$.
\begin{thm}\label{dst-mtrc} The function
$d_{\text{S}}(\cdot,\cdot)$ is a distance function over the power
set $\mathcal P(\mathbb A^L)$.
\end{thm}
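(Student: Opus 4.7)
The plan is to verify the three axioms of a distance function: positivity (with equality iff the sets coincide), symmetry, and the triangle inequality. Positivity is immediate from \eqref{eq-def-chi-d}, since both $d_{\text{H}}(\textbf{x},\chi(\textbf{x}))$ and $L(|\textbf{X}_2|-|\textbf{X}_1|)$ are non-negative; if their sum vanishes then $|\textbf{X}_1|=|\textbf{X}_2|$ and every Hamming distance is zero, forcing the minimizing $\chi$ to be the identity, so $\textbf{X}_1=\textbf{X}_2$. Symmetry is built into \eqref{eq-def-dst}.

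The substantive step is the triangle inequality $d_{\text{S}}(\textbf{X}_1,\textbf{X}_3)\le d_{\text{S}}(\textbf{X}_1,\textbf{X}_2)+d_{\text{S}}(\textbf{X}_2,\textbf{X}_3)$. The main obstacle is that the three sets may have different cardinalities, so the optimal injections on the right-hand side may point in opposite directions and cannot simply be composed. To normalize sizes I would pad each $\textbf{X}_i$ with fresh placeholder symbols $\bot_{i,1},\ldots,\bot_{i,N-|\textbf{X}_i|}$ to obtain a set $\tilde{\textbf{X}}_i$ of size $N:=\max_j|\textbf{X}_j|$, and extend $d_{\text{H}}$ to an augmented distance $\bar d$ by declaring $\bar d(\textbf{x},\bot)=L$ for every $\textbf{x}\in\mathbb A^L$ and $\bar d(\bot,\bot')=0$. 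A short case check on whether each of the three arguments is real or a placeholder confirms that $\bar d$ satisfies the triangle inequality on the augmented alphabet.

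Next, define the matching distance $\tilde d(\tilde{\textbf{X}}_a,\tilde{\textbf{X}}_b):=\min_\sigma\sum_{x\in\tilde{\textbf{X}}_a}\bar d(x,\sigma(x))$ over bijections $\sigma:\tilde{\textbf{X}}_a\to\tilde{\textbf{X}}_b$. I would first show $\tilde d(\tilde{\textbf{X}}_a,\tilde{\textbf{X}}_b)=d_{\text{S}}(\textbf{X}_a,\textbf{X}_b)$ by a swap argument: if the smaller set is $\tilde{\textbf{X}}_a$ and an optimal $\sigma$ contains a real-$\bot$ pair $(\textbf{x},\bot_{b,j})$, then (by counting) it must also contain a $\bot$-real pair $(\bot_{a,i},\textbf{y})$, and replacing these two pairs with $(\textbf{x},\textbf{y})$ and $(\bot_{a,i},\bot_{b,j})$ changes the cost from $2L$ to $d_{\text{H}}(\textbf{x},\textbf{y})+0\le L$. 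Hence the optimal $\sigma$ pairs every real element of $\tilde{\textbf{X}}_a$ with a real element of $\tilde{\textbf{X}}_b$ via some injection $\chi$, and its cost coincides precisely with \eqref{eq-def-chi-d}.

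Finally, $\tilde d$ itself satisfies the triangle inequality by the standard composition argument: given optimal bijections $\sigma_{12}$ and $\sigma_{23}$, their composition $\sigma_{23}\circ\sigma_{12}$ is a bijection $\tilde{\textbf{X}}_1\to\tilde{\textbf{X}}_3$ whose cost is bounded pointwise by the triangle inequality for $\bar d$, giving $\sum_x\bar d(x,\sigma_{23}(\sigma_{12}(x)))\le\sum_x[\bar d(x,\sigma_{12}(x))+\bar d(\sigma_{12}(x),\sigma_{23}(\sigma_{12}(x)))]$; reindexing the second sum via the bijection $\sigma_{12}$ yields $\tilde d(\tilde{\textbf{X}}_1,\tilde{\textbf{X}}_2)+\tilde d(\tilde{\textbf{X}}_2,\tilde{\textbf{X}}_3)$. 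Combining this with the identification $\tilde d=d_{\text{S}}$ establishes the triangle inequality for $d_{\text{S}}$ and completes the proof that $d_{\text{S}}$ is a metric on $\mathcal P(\mathbb A^L)$.
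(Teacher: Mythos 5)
Your proof is correct, but it takes a genuinely different route from the paper's. The paper proves the triangle inequality by a three-way case analysis on the ordering of $|\textbf{X}_1|,|\textbf{X}_2|,|\textbf{X}_3|$: in the case $|\textbf{X}_1|\leq|\textbf{X}_2|\leq|\textbf{X}_3|$ it first invokes Lemma \ref{lem-dst-subset} to replace $\textbf{X}_3$ by a subset of size $|\textbf{X}_2|$, and in each case it writes out the elements with indices aligned to the optimal injections and chains the Hamming triangle inequality term by term, bookkeeping the $L\cdot(\text{size difference})$ contributions separately. You instead pad all three sets to a common size with placeholders, extend $d_{\text{H}}$ to a cost $\bar d$ that still satisfies the triangle inequality (with $\bar d(\textbf{x},\bot)=L$ and $\bar d(\bot,\bot')=0$), identify $d_{\text{S}}$ with the minimum-cost perfect-matching distance $\tilde d$ on the padded sets via a swap argument, and then get the triangle inequality for $\tilde d$ from the single standard composition-of-bijections argument. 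I checked the three load-bearing steps: the case check for $\bar d$ goes through (the only nontrivial cases reduce to $d_{\text{H}}\leq L$); the counting in the swap argument is right (if $k\geq 1$ real elements of the smaller padded set map to placeholders, then $|\textbf{X}_b|-|\textbf{X}_a|+k\geq 1$ placeholders map to real elements, so a swap partner exists, and each swap replaces cost $2L$ by at most $L$); and the reindexing in the composition step is licensed by $\sigma_{12}$ being a bijection. Your identification $\tilde d=d_{\text{S}}$ also holds for any common padding size, so using $N=\max_j|\textbf{X}_j|$ for the triple is consistent with applying it to each pair. What your approach buys is the elimination of both the case analysis and the dependence on Lemma \ref{lem-dst-subset}, and it makes transparent that $d_{\text{S}}$ is an instance of the general assignment/matching metric construction (consistent with the paper's own Remark on computing $d_{\text{S}}$ by minimum-weight perfect matching); the cost is the overhead of introducing the augmented space and proving the identification lemma, whereas the paper's argument stays entirely inside $\mathcal P(\mathbb A^L)$ and uses only the original definition.
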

\begin{proof}
The proof is given in Appendix C.
\end{proof}

\subsection{Error Pattern of DNA Storage Channel}
In this paper we consider DNA storage channel with sequence
deletion/insertion and symbol substitution. The input of the
channel is a set of unordered sequences
$$\textbf{\text{X}}=\{\textbf{\text{x}}_1,\textbf{\text{x}}_2,\cdots,
\textbf{\text{x}}_M\}\subseteq\mathbb A^L$$ and the output is
another set of unordered sequences
$$\textbf{\text{Y}}=\{\textbf{\text{y}}_1,\textbf{\text{y}}_2,\cdots,
\textbf{\text{y}}_{\tilde{M}}\}\subseteq\mathbb A^L,$$ where $L$
is the length of the sequences. Usually,
$\textbf{\text{Y}}\neq\textbf{\text{X}}$ because of the channel
noise. Sequences in the subset
$\textbf{\text{X}}\cap\textbf{\text{Y}}$ are correctly
transmitted; Sequences in $\textbf{X}\backslash\textbf{Y}$ are
either lost (sequence deletion) or changed to sequences in
$\textbf{Y}\backslash\textbf{X}$ (symbol substitution); Sequences
in $\textbf{Y}\backslash\textbf{X}$ are either excessive (sequence
insertion) or obtained from some sequences in
$\textbf{X}\backslash\textbf{Y}$ (symbol substitution). Let
$n_{\text{I}}$, $n_{\text{D}}$ and $n_{\text{S}}$ denote the total
number of sequence insertions, sequence deletions and symbol
substitutions, respectively, in $\textbf{Y}$. Then we call the
$3$-tuple $(n_{\text{I}}, n_{\text{D}}, n_{\text{S}})$ the
\emph{error pattern} of $\textbf{Y}$. Furthermore, we have the
following lemma.

\begin{lem}\label{lem-ep-dst}
Suppose the channel input is $\textbf{X}$ and output is
$\textbf{Y}$. If the error pattern of $\textbf{Y}$ is
$(n_{\text{I}}, n_{\text{D}}, n_{\text{S}})$, then
$$d_{\text{S}}(\textbf{X},\textbf{Y})\leq
n_{\text{S}}+L\cdot\max\{n_{\text{I}}, n_{\text{D}}\}.$$
\end{lem}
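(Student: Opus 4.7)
The plan is to exhibit an explicit injection $\chi$ for which $d_\chi(\textbf{X},\textbf{Y})$ attains the claimed bound; by Definition~\ref{def-dst}, $d_{\text{S}}(\textbf{X},\textbf{Y})\leq d_\chi(\textbf{X},\textbf{Y})$, so this suffices. The key is to read the injection directly off of the physical process that produced $\textbf{Y}$ from $\textbf{X}$.

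First I would decompose the symmetric differences according to the error pattern. Let $t$ denote the number of sequences in $\textbf{X}\backslash\textbf{Y}$ that are changed, by symbol substitutions, into sequences in $\textbf{Y}\backslash\textbf{X}$; these pair up one-to-one with their corrupted outputs. By the description of the error pattern, $|\textbf{X}\backslash\textbf{Y}|=t+n_{\text{D}}$ and $|\textbf{Y}\backslash\textbf{X}|=t+n_{\text{I}}$, so $|\textbf{Y}|-|\textbf{X}|=n_{\text{I}}-n_{\text{D}}$. Moreover, summing the Hamming distances over the $t$ substituted pairs yields at most $n_{\text{S}}$, since each of the $n_{\text{S}}$ symbol substitutions contributes at most $1$ to the Hamming distance of the pair in which it occurs. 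Assuming without loss of generality $n_{\text{D}}\leq n_{\text{I}}$, so $|\textbf{X}|\leq|\textbf{Y}|$, I would then build $\chi:\textbf{X}\to\textbf{Y}$ in three pieces: (i) the identity on $\textbf{X}\cap\textbf{Y}$; (ii) the natural pairing sending each of the $t$ substituted sequences to its corrupted image in $\textbf{Y}\backslash\textbf{X}$; (iii) an arbitrary injection from the $n_{\text{D}}$ remaining deleted sequences into the $n_{\text{I}}$ still-unused inserted sequences, which is possible because $n_{\text{D}}\leq n_{\text{I}}$.

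Summing the contributions to $d_\chi(\textbf{X},\textbf{Y})$ gives $0$ from (i), at most $n_{\text{S}}$ from (ii), and at most $Ln_{\text{D}}$ from (iii) via the trivial bound $d_{\text{H}}\leq L$; adding the cardinality term $L(|\textbf{Y}|-|\textbf{X}|)=L(n_{\text{I}}-n_{\text{D}})$ telescopes to $d_\chi\leq n_{\text{S}}+Ln_{\text{I}}=n_{\text{S}}+L\max\{n_{\text{I}},n_{\text{D}}\}$, as required. The case $n_{\text{D}}>n_{\text{I}}$ is handled by the mirror construction of an injection $\chi:\textbf{Y}\to\textbf{X}$, which gives the symmetric bound $n_{\text{S}}+Ln_{\text{D}}$. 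The only point requiring any care is the clean partition of $\textbf{X}\backslash\textbf{Y}$ and $\textbf{Y}\backslash\textbf{X}$ into the substituted and deleted/inserted categories so that $\chi$ is a well-defined injection; once that bookkeeping is in place, the rest is a one-line computation.
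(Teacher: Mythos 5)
Your proof is correct and follows essentially the same route as the paper: the same decomposition of $\textbf{X}\backslash\textbf{Y}$ and $\textbf{Y}\backslash\textbf{X}$ into substituted versus deleted/inserted parts, the same explicit injection pairing substituted sequences with their corrupted images and deleted with inserted ones, and the same case split on $\max\{n_{\text{I}},n_{\text{D}}\}$. The only cosmetic difference is that you extend the injection by the identity on $\textbf{X}\cap\textbf{Y}$ rather than first reducing to the symmetric differences via Corollary~\ref{cor-dst}, which is an equivalent bookkeeping choice.
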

\begin{proof}
Note that we can always partition the two subsets
$\textbf{X}\backslash \textbf{Y}$ and $\textbf{Y}\backslash
\textbf{X}$ as
$$\textbf{X}\backslash \textbf{Y}=\textbf{X}_{\text{D}}\cup
\textbf{X}_{\text{S}}~~~\text{and}~~~\textbf{Y}\backslash
\textbf{X}=\textbf{Y}_{\text{I}}\cup \textbf{Y}_{\text{S}},$$
where $\textbf{X}_{\text{D}}$ is the set of lost input sequences,
$\textbf{X}_{\text{S}}$ is the set of input sequences that are
changed to $\textbf{Y}_{\text{S}}$ by symbol substitution, and
$\textbf{Y}_{\text{I}}$ is the set of sequences that are inserted
to $\textbf{Y}$. Clearly, we have
$$n_{\text{I}}=|\textbf{Y}_{\text{I}}|~~~\text{and}~~~
n_{\text{D}}=|\textbf{X}_{\text{D}}|.$$ Moreover,
$|\textbf{X}_{\text{S}}|=|\textbf{Y}_{\text{S}}|$ and there exists
a bijection
$\chi:\textbf{X}_{\text{S}}\rightarrow\textbf{Y}_{\text{S}}$ such
that for each $\textbf{x}\in\textbf{X}_{\text{S}}$,
$\chi(\textbf{x})$ is the erroneous sequence of $\textbf{x}$ by
symbol substitution. Hence, we have
$$n_{\text{S}}=\sum_{\textbf{x}\in\textbf{X}_{\text{S}}}
d_{\text{H}}(\textbf{x}, \chi(\textbf{x})).$$ For further
discussion, we need to consider the following two cases.

Case 1: $n_{\text{I}}\leq n_{\text{D}}$. In this case, we have
$|\textbf{Y}_{\text{I}}|=n_{\text{I}}\leq
n_{\text{D}}=|\textbf{X}_{\text{D}}|$ and
$|\textbf{Y}\backslash\textbf{X}|\leq|\textbf{X}
\backslash\textbf{Y}|$. Let
$\chi':\textbf{Y}_{\text{I}}\rightarrow\textbf{X}_{\text{D}}$ be
any fixed injection. Then we can obtain an injection
$\bar{\chi}:\textbf{Y}\backslash\textbf{X}\rightarrow\textbf{X}
\backslash\textbf{Y}$ such that
\begin{equation*}
\bar{\chi}(\textbf{y})=\left\{\begin{aligned}
&\chi^{-1}(\textbf{y})& ~ ~ _{~}\text{if}~\textbf{y}\in\textbf{Y}_{\text{S}};\\
&\chi'(\textbf{y})& ~ ~\text{if}~\textbf{y}\in\textbf{Y}_{\text{I}}.\\
\end{aligned} \right. 
\end{equation*}
Since $|\textbf{X}
\backslash\textbf{Y}|-|\textbf{Y}\backslash\textbf{X}|=|\textbf{X}_{\text{D}}|
-|\textbf{Y}_{\text{I}}|=n_{\text{D}}-n_{\text{I}}$, then by
\eqref{eq-def-chi-d},
\begin{align*}
d_{\bar{\chi}}(\textbf{Y}\backslash\textbf{X},\textbf{X}\backslash\textbf{Y})
&=\sum_{\textbf{y}\in\textbf{Y}\backslash\textbf{X}}
d_{\text{H}}(\textbf{y},\bar{\chi}(\textbf{y}))+L\cdot\left(|\textbf{X}
\backslash\textbf{Y}|-|\textbf{Y}\backslash\textbf{X}|\right)\\
&=\sum_{\textbf{y}\in\textbf{Y}_{\text{S}}}
d_{\text{H}}(\textbf{y},\chi(\textbf{y}))
+\sum_{\textbf{y}\in\textbf{Y}_{\text{I}}}
d_{\text{H}}(\textbf{y},\chi'(\textbf{y}))\\
&~~~~+L\cdot\left(n_{\text{D}}-n_{\text{I}}\right)\\
&\leq n_{\text{S}}+L\cdot n_{\text{I}}+
L\cdot (n_{\text{D}}-n_{\text{I}})\\ &= n_{\text{S}}+L\cdot n_{\text{D}}\\
&=n_{\text{S}}+L\cdot\max\{n_{\text{I}},
n_{\text{D}}\}\end{align*} where the inequality comes from the
simple fact that $d_{\text{H}}(\textbf{z},\textbf{z}')\leq L$ for
any $\textbf{z},\textbf{z}'\in\mathbb A^L$. Hence, by Corollary
\ref{cor-dst} and Definition \ref{def-dst}, we have
\begin{align*}
d_{\text{S}}(\textbf{X},\textbf{Y})
&=d_{\text{S}}(\textbf{X}\backslash\textbf{Y},
\textbf{Y}\backslash\textbf{X})\\&\leq
d_{\bar{\chi}}(\textbf{Y}\backslash\textbf{X},
\textbf{X}\backslash\textbf{Y})\\
&\leq n_{\text{S}}+L\cdot\max\{n_{\text{I}},
n_{\text{D}}\}.\end{align*}

Case 2: $n_{\text{I}}>n_{\text{D}}$. In this case, there exists an
injection
$\chi':\textbf{X}_{\text{D}}\rightarrow\textbf{Y}_{\text{I}}$ and
we can let
$\bar{\chi}:\textbf{X}\backslash\textbf{Y}\rightarrow\textbf{Y}
\backslash\textbf{X}$ be such that
\begin{equation*}
\bar{\chi}(\textbf{x})=\left\{\begin{aligned}
&\chi(\textbf{x})& ~ ~\text{if}~\textbf{x}\in\textbf{X}_{\text{S}}; \\
&\chi'(\textbf{x})& ~ ~ _{~}\text{if}~\textbf{x}\in\textbf{X}_{\text{D}}.\\
\end{aligned} \right. \label{eqn:8}
\end{equation*}
Since $|\textbf{Y}
\backslash\textbf{X}|-|\textbf{X}\backslash\textbf{Y}|=|\textbf{Y}_{\text{I}}|
-|\textbf{X}_{\text{D}}|=n_{\text{I}}-n_{\text{D}}$, then by
\eqref{eq-def-chi-d},
\begin{align*}
d_{\bar{\chi}}(\textbf{X}\backslash\textbf{Y},\textbf{Y}\backslash\textbf{X})
&=\sum_{\textbf{x}\in\textbf{X}\backslash\textbf{Y}}
d_{\text{H}}(\textbf{x},\bar{\chi}(\textbf{x}))+L\cdot\left(|\textbf{Y}
\backslash\textbf{X}|-|\textbf{X}\backslash\textbf{Y}|\right)\\
&=\sum_{\textbf{x}\in\textbf{X}_{\text{S}}}
d_{\text{H}}(\textbf{x},\chi(\textbf{x}))
+\sum_{\textbf{x}\in\textbf{X}_{\text{D}}}
d_{\text{H}}(\textbf{x},\chi'(\textbf{x}))\\
&~~~~+L\cdot\left(n_{\text{I}}-n_{\text{D}}\right)\\
&\leq n_{\text{S}}+L\cdot
n_{\text{D}}+L\cdot (n_{\text{I}}-n_{\text{D}})\\
&= n_{\text{S}}+L\cdot n_{\text{I}}\\
&=n_{\text{S}}+L\cdot\max\{n_{\text{I}},
n_{\text{D}}\}.\end{align*} Hence, similar to Case 1, we have
\begin{align*}
d_{\text{S}}(\textbf{X},\textbf{Y})
&=d_{\text{S}}(\textbf{Y}\backslash\textbf{X},
\textbf{X}\backslash\textbf{Y})\\&\leq
d_{\bar{\chi}}(\textbf{X}\backslash\textbf{Y},
\textbf{Y}\backslash\textbf{X})\\
&\leq n_{\text{S}}+L\cdot\max\{n_{\text{I}},
n_{\text{D}}\}.\end{align*}

In both cases, we have $d_{\text{S}}(\textbf{X},\textbf{Y})\leq
n_{\text{S}}+L\cdot\max\{n_{\text{I}}, n_{\text{D}}\}$, which
completes the proof.
\end{proof}

Equality in the bound of $d_{\text{S}}(\textbf{X},\textbf{Y})$ in
Lemma \ref{lem-ep-dst} can be achieved. As an example, consider
$\mathbb A=\{0,1\}$ and $L=4$, and let the input
$\textbf{X}=\{0011,1010\}$ and output
$\textbf{Y}=\{0111,1010,1100\}$, where $0111$ is an erroneous copy
of $0011$ with one substitution and $1100$ is an inserted
sequence. Then the error pattern of $\textbf{Y}$ is
$(n_{\text{I}},n_{\text{D}},n_{\text{S}})=(1,0,1)$ and so
$n_{\text{S}}+L\cdot\max\{n_{\text{I}}, n_{\text{D}}\}$. On the
other hand, by \eqref{eq-def-dst}, we can easily obtain
$d_{\text{S}}(\textbf{X},\textbf{Y})=5$. Hence, we have
$d_{\text{S}}(\textbf{X},\textbf{Y})=n_{\text{S}}+L\cdot\max\{n_{\text{I}},
n_{\text{D}}\}$.

For the decoder, when receiving a subset
$\textbf{\text{Y}}\subseteq\mathbb A^L$, its task is to find a
possible input subset $\hat{\textbf{\text{X}}}\subseteq\mathbb
A^L$ that is most similar to $\textbf{\text{Y}}$. By the above
discussion and Corollary \ref{cor-dst}, clearly, the
sequence-subset distance is a good choice of metric for similarity
between $\textbf{\text{Y}}$ and $\hat{\textbf{\text{X}}}$. In the
next subsection, we will discuss error-correcting in DNA storage
channel using codes with respect to sequence-subset distance.

\subsection{Codes with Sequence-Subset Distance}
A \emph{sequence-subset code} over $\mathbb A^L$ is a subset
$\mathcal C$ of the power set $\mathcal P(\mathbb A^L)$ of the set
$\mathbb A^L$. 
We call each element of $\mathbb A^L$ a \emph{sequence} and call
$L$ the \emph{sequence length} of $\mathcal C$. The size
$|\mathcal C|$ of $\mathcal C$ is called the \emph{code size} of
$\mathcal C$. In contrast, for each codeword
$\textbf{\text{X}}\in\mathcal C$, the size of
$\textbf{\text{X}}~($i.e., the number of sequences contained in
$\textbf{\text{X}})$ is called the \emph{codeword size} of
$\mathcal C$.

Note that a sequence-subset code $\mathcal C$ may contain
codewords of different sizes. The maximum of codeword sizes of
$\mathcal C$, i.e., $M=\max\{|\textbf{\text{X}}|;
\textbf{\text{X}}\in\mathcal C\}$, is called the \emph{maximal
codeword size} of $\mathcal C$. A sequence-subset code $\mathcal
C$ is said to have \emph{constant codeword size} (a
\emph{constant-codeword-size code}) if all codewords of $\mathcal
C$ have the same codeword size. Real DNA storage systems usually
use codes with constant codeword size. In fact, using
constant-codeword-size codes, the decoder knows how many sequences
are stored and hence can easily determine how many sequences are
lost during the synthesizing/sequencing process. However, in this
work, for the generality of the theory, we allow that different
codewords may have different sizes.

The rate of $\mathcal C$ is defined as
$$R(\mathcal C)=\frac{\log_q|\mathcal C|}
{\log_q\left(\sum_{m=0}^M{q^L\choose m}\right)}$$ and the
redundancy of $\mathcal C$ is defined as
$$r(\mathcal C)=\log_q\left(\sum_{m=0}^M{q^L\choose
m}\right)-\log_q|\mathcal C|,$$ where $q=|\mathbb A|$ and
$\sum_{m=0}^M{q^L\choose m}$ is the number of all subsets of
$\mathbb A^L$ of size not greater than $M$.\footnote{In
\cite{Heckel}, the storage rate of a code $\mathcal C$ is defined
as $\frac{\log|\mathcal C|}{ML}$, where $M$ is the constant
codeword size. The definition of the rate $R(\mathcal C)$ in this
work is slightly different from the traditional definition of code
rate. This is because for general sequence-subset codes, the size
of different codewords may be different. Hence we use
$\log_q\left(\sum_{m=0}^M{q^L\choose m}\right)$ rather than $ML$
in the definition of $R(\mathcal C)$, where $M$ is the maximal
codeword size of $\mathcal C$.} If $\mathcal C\subseteq\mathbb
A^L$ is a code with constant codeword size $M$, then the rate and
redundancy of $\mathcal C$ are defined as
$$R(\mathcal C)=\frac{\log_q|\mathcal C|}
{\log_q{q^L\choose M}}$$ and
$$r(\mathcal C)=\log_q{q^L\choose M}-\log_q|\mathcal C|,$$
respectively, where ${q^L\choose M}$ is the number of all subsets
of $\mathbb A^L$ of size $M$.

The \emph{minimum distance} of a sequence-subset code $\mathcal
C$, denoted by $d_{\text{S}}(\mathcal C)$, is the minimum of the
sequence-subset distance between any two distinct codewords of
$\mathcal C$, that is,
$$d_{\text{S}}(\mathcal C)=\min\{d_{\text{S}}
(\textbf{\text{X}},\textbf{\text{X}}');
\textbf{\text{X}},\textbf{\text{X}}'\in\mathcal C \text{~and~}
\textbf{\text{X}}\neq\textbf{\text{X}}'\}.$$

In general, $L,M,|\mathcal C|$ and $d_{\text{S}}(\mathcal C)$ are
four main parameters of $\mathcal C$, and we will call $\mathcal
C$ an $\left(L,M,|\mathcal C|, d_{\text{S}}(\mathcal C)\right)_q$
code, where $q$ is the size of the alphabet $\mathbb A$.

\vspace{0.1cm}Suppose $\mathcal C\subseteq\mathcal P(\mathbb A^L)$
is a sequence-subset code. We denote $\overline{\mathcal
C}=\{\overline{\textbf{\text{X}}}; \textbf{\text{X}}\in\mathcal
C\}$, where $\overline{\textbf{\text{X}}}=\mathbb
A^L\backslash\textbf{\text{X}}$. By Corollary \ref{cor-dst}, for
any $\textbf{\text{X}}_1,\textbf{\text{X}}_2\in\mathcal C$, we
have $d_{\text{S}}(\textbf{\text{X}}_1,\textbf{\text{X}}_2)=
d_{\text{S}}(\textbf{\text{X}}_1\backslash\textbf{\text{X}}_2,
\textbf{\text{X}}_2\backslash\textbf{\text{X}}_1)=
d_{\text{S}}(\overline{\textbf{\text{X}}_1},\overline{\textbf{\text{X}}_2})$.
Hence, $\overline{\mathcal C}$ and $\mathcal C$ have the same
sequence length $L$, code size $|\overline{\mathcal C}|=|\mathcal
C|$ and minimum distance $d_{\text{S}}(\mathcal
C)=d_{\text{S}}(\overline{\mathcal C})$. For sequence-subset code
with constant codeword size $M$, it is assumed that
$M\leq\frac{|\mathbb A|^L}{2}$. Otherwise, we can consider
$\overline{\mathcal C}$, which has constant codeword size
$\overline{M}=|\mathbb A|^L-M\leq\frac{|\mathbb A|^L}{2}$.

A\emph{minimum-distance decoder} for $\mathcal C$ is a function
$D:\mathcal P(\mathbb A^L)\rightarrow\mathcal C$ such that for any
$\textbf{\text{Y}}\in\mathcal P(\mathbb A^L)$,
\begin{align*}D(\textbf{\text{Y}})=\arg\min_{\textbf{X}'\in\mathcal
C}d_{\text{S}}(\textbf{X}',\textbf{Y}).\end{align*}

{\vskip 3pt}\begin{thm}\label{thm-E-C-Ability} Suppose $\mathcal
C$ has minimum distance $d_{\text{S}}(\mathcal C)$ and
\begin{align}\label{eq1-E-C-Ability}
n_{\text{S}}+L\cdot\max\{n_{\text{I}},n_{\text{D}}\}\leq
\frac{d_{\text{S}}(\mathcal C)-1}{2}.\end{align} Then any error of
pattern $(n_{\text{I}},n_{\text{D}},n_{\text{S}})$ can be
corrected by the minimum-distance decoder for $\mathcal C$.
\end{thm}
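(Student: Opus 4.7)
The plan is to mimic the classical Hamming-distance argument, using the two ingredients already established in the excerpt: Lemma \ref{lem-ep-dst}, which bounds $d_{\text{S}}(\textbf{X},\textbf{Y})$ in terms of the error pattern, and Theorem \ref{dst-mtrc}, which tells us that $d_{\text{S}}(\cdot,\cdot)$ is a genuine metric on $\mathcal{P}(\mathbb{A}^L)$ and so satisfies the triangle inequality.

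First I would fix an arbitrary transmitted codeword $\textbf{X}\in\mathcal{C}$ and any received set $\textbf{Y}\in\mathcal{P}(\mathbb{A}^L)$ whose error pattern relative to $\textbf{X}$ is $(n_{\text{I}},n_{\text{D}},n_{\text{S}})$ satisfying \eqref{eq1-E-C-Ability}. The goal is to show $D(\textbf{Y})=\textbf{X}$. Combining Lemma \ref{lem-ep-dst} with the hypothesis immediately gives
\begin{align*}
d_{\text{S}}(\textbf{X},\textbf{Y})\leq n_{\text{S}}+L\cdot\max\{n_{\text{I}},n_{\text{D}}\}\leq \frac{d_{\text{S}}(\mathcal{C})-1}{2}.
\end{align*}

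Next I would argue by contradiction: suppose the minimum-distance decoder returns some $\textbf{X}'\in\mathcal{C}$ with $\textbf{X}'\neq \textbf{X}$. By the definition of $D$, we have $d_{\text{S}}(\textbf{X}',\textbf{Y})\leq d_{\text{S}}(\textbf{X},\textbf{Y})\leq \frac{d_{\text{S}}(\mathcal{C})-1}{2}$. Applying the triangle inequality guaranteed by Theorem \ref{dst-mtrc},
\begin{align*}
d_{\text{S}}(\textbf{X},\textbf{X}')\leq d_{\text{S}}(\textbf{X},\textbf{Y})+d_{\text{S}}(\textbf{Y},\textbf{X}')\leq d_{\text{S}}(\mathcal{C})-1<d_{\text{S}}(\mathcal{C}),
\end{align*}
contradicting the definition of the minimum distance of $\mathcal{C}$. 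Hence $D(\textbf{Y})=\textbf{X}$ as desired.

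I do not anticipate any genuine obstacle: the whole argument is just the standard ``half-the-minimum-distance'' decoding proof, and the nontrivial work has already been absorbed into Lemma \ref{lem-ep-dst} (which converts the channel-level error pattern into a distance bound) and Theorem \ref{dst-mtrc} (which supplies the triangle inequality). The only point worth flagging explicitly is that $D$ is well-defined as a function whose codomain is $\mathcal{C}$, so the candidate $\textbf{X}'$ in the contradiction is itself a codeword, which is exactly what is needed to invoke the minimum-distance inequality.
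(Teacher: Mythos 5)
Your proposal is correct and follows essentially the same route as the paper: invoke Lemma \ref{lem-ep-dst} to get $d_{\text{S}}(\textbf{X},\textbf{Y})\leq \frac{d_{\text{S}}(\mathcal C)-1}{2}$ and then conclude that the minimum-distance decoder returns $\textbf{X}$. In fact you are slightly more careful than the paper, which asserts $\textbf{X}=\arg\min_{\textbf{X}'\in\mathcal C}d_{\text{S}}(\textbf{X}',\textbf{Y})$ without spelling out the triangle-inequality step that you make explicit.
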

\begin{proof}
Let $\textbf{\text{X}}$ be the set of input sequences and
$\textbf{\text{Y}}$ be the set of output sequences of the DNA
storage channel. By Lemma \ref{lem-ep-dst}, if $\textbf{\text{Y}}$
has error pattern $(n_{\text{I}},n_{\text{D}},n_{\text{S}})$, then
$$d_{\text{S}}(\textbf{\text{X}},\textbf{\text{Y}})\leq
n_{\text{S}}+L\cdot\max\{n_{\text{I}},n_{\text{D}}\}.$$ Combining
this with \eqref{eq1-E-C-Ability}, we have
$$d_{\text{S}}(\textbf{\text{X}},\textbf{\text{Y}})\leq
\frac{d_{\text{S}}(\mathcal C)-1}{2}.$$ Hence,
$\textbf{\text{X}}=\arg\min_{\textbf{X}'\in\mathcal
C}d_{\text{S}}(\textbf{X}',\textbf{\text{Y}})=D(\textbf{\text{Y}})$,
that is, $\textbf{\text{X}}$ can be correctly recovered by the
minimum-distance decoder.
\end{proof}

\begin{exam}\label{exam-model}
Consider $\mathbb A=\{0,1\}$ and $L=5$. Let $\mathcal
C=\{\textbf{X}_1,\textbf{X}_2,\textbf{X}_3\}$, where
$\textbf{X}_1=\{00101,10001\}$, $\textbf{X}_2=\{01011,10110\}$ and
$\textbf{X}_3=\{01000,11100\}$. According to \eqref{eq-def-dst},
we can obtain $d_{\text{S}}(\textbf{X}_1,\textbf{X}_2)=
d_{\text{S}}(\textbf{X}_1,\textbf{X}_3)=6$ and
$d_{\text{S}}(\textbf{X}_2,\textbf{X}_3)=4$, hence we have
$d_{\text{S}}(\mathcal C)=4$. By Theorem \ref{thm-E-C-Ability}, if
$\textbf{X}\in\mathcal C$ is stored $($i.e., $\textbf{X}$ is the
input of the DNA storage channel$)$, $\textbf{Y}\subseteq\mathbb
A^5$ is the read result $($i.e., the output of the channel$)$ and
the error pattern $(n_{\text{I}},n_{\text{D}},n_{\text{S}})$ of
$\textbf{Y}$ satisfies
$n_{\text{S}}+L\cdot\max\{n_{\text{I}},n_{\text{D}}\}\leq
\frac{d_{\text{S}}(\mathcal C)-1}{2}=1$, then the minimum-distance
decoder will recover $\textbf{X}$ correctly from $\textbf{Y}$. For
example, suppose $\textbf{X}_1$ is stored and the read result is
$\textbf{Y}=\{01101,10001\}$, where $00101$ is changed to $01101$
by one substitution. We have
$d_{\text{S}}(\textbf{X}_1,\textbf{Y})=1$ and
$d_{\text{S}}(\textbf{X}_2,\textbf{Y})=
d_{\text{S}}(\textbf{X}_3,\textbf{Y})=5$, and then by the
minimum-distance decoder,
$D(\textbf{\text{Y}})=\arg\min_{\textbf{X}'\in\mathcal
C}d_{\text{S}}(\textbf{X}',\textbf{Y})=\textbf{X}_1$.
\end{exam}

Similar to the classical error-correcting codes, the inequality
\eqref{eq1-E-C-Ability} is a sufficient condition for an output
with error pattern $(n_{\text{I}},n_{\text{D}},n_{\text{S}})$ to
be corrected by the minimum-distance decoder but not a necessary
condition. To illustrate this, reconsider Example
\ref{exam-model}. Now, suppose $\textbf{X}_1$ is stored and the
read result is $\textbf{Y}=\{01101\}$, where $10001$ is lost and
$00101$ is changed to $01101$ by one substitution. In this case,
we have $(n_{\text{I}},n_{\text{D}},n_{\text{S}})=(0,1,1)$, and so
$n_{\text{S}}+L\cdot\max\{n_{\text{I}},n_{\text{D}}\}=6>\frac{3}{2}=
\frac{d_{\text{S}}(\mathcal C)-1}{2}$. However, $\textbf{X}_1$ can
be correctly recovered from $Y$ by the minimum-distance decoder
because $d_{\text{S}}(\textbf{X}_1,\textbf{Y})=6<7
=d_{\text{S}}(\textbf{X}_2,\textbf{Y})=
d_{\text{S}}(\textbf{X}_3,\textbf{Y})$.

\begin{rem}
Suppose the channel input is $\textbf{X}$ and output is
$\textbf{Y}$ such that the error pattern of $\textbf{Y}$ is
$(n_{\text{I}}, n_{\text{D}}, n_{\text{S}})$. It is sufficient to
assume that $n_{\text{I}}=0$ or $n_{\text{D}}=0$. In fact, suppose
$n_{\text{I}}\geq n_{\text{D}}$. Denoted by
$\textbf{x}_1,\cdots,\textbf{x}_{n_{\text{D}}}$ the $n_{\text{D}}$
deleted sequences and
$\textbf{y}_1,\cdots,\textbf{y}_{n_{\text{I}}}$ the $n_{\text{I}}$
inserted sequences. For each $i\in[n_{\text{D}}]$, we view
$\textbf{y}_i$ as obtained from $\textbf{x}_i$ by symbol
substitutions. Then the error pattern is
$(n_{\text{I}}',n_{\text{D}}',n_{\text{S}}')
=(n_{\text{I}}-n_{\text{D}},0,n_{\text{S}}+\overline{n_{\text{S}}})$,
where $\overline{n_{\text{S}}}$ is the total substitutions in
$\textbf{y}_1,\cdots,\textbf{y}_{n_{\text{D}}}$. Noticing that the
number of substitutions in each sequence is not greater than its
length $L$, then the total substitutions
$\overline{n_{\text{S}}}\leq n_{\text{D}}L$, and hence
$n_{\text{S}}'+L\max\{n_{\text{I}}',n_{\text{D}}'\}
=n_{\text{S}}+\overline{n_{\text{S}}}+(n_{\text{I}}-n_{\text{D}})L\leq
n_{\text{S}}+n_{\text{D}}L+(n_{\text{I}}-n_{\text{D}})L=
n_{\text{S}}+n_{\text{I}}L=n_{\text{S}}+
L\max\{n_{\text{I}},n_{\text{D}}\}$. By Theorem
\ref{thm-E-C-Ability}, if $\textbf{Y}$ with error pattern
$(n_{\text{I}},n_{\text{D}},n_{\text{S}})$ can be corrected, then
$\textbf{Y}$ with error pattern
$(n_{\text{I}}',n_{\text{D}}',n_{\text{S}}')$ can also be
corrected. Similarly, if $n_{\text{I}}<n_{\text{D}}$, then by
viewing each $\textbf{y}_i$, $i\in[n_{\text{I}}]$, as obtained
from $\textbf{x}_i$ by symbol substitutions, we can obtain an
error pattern $(n_{\text{I}}',n_{\text{D}}',n_{\text{S}}')
=(0,n_{\text{D}}-n_{\text{I}},n_{\text{S}}+\overline{n_{\text{S}}})$
such that if $\textbf{Y}$ with error pattern
$(n_{\text{I}},n_{\text{D}},n_{\text{S}})$ can be corrected, then
$\textbf{Y}$ with error pattern
$(n_{\text{I}}',n_{\text{D}}',n_{\text{S}}')$ can also be
corrected. Thus, we can always assume that $n_{\text{I}}=0$ or
$n_{\text{D}}=0$.
\end{rem}

In \cite{Lenz18} and \cite{Sima18}, it was assumed that the number
of output sequences is not greater than the number of input
sequences. In this work, considering the sequence insertion
errors, we allow that the number of output sequences of the DNA
storage channel can be larger than the number of input sequences.
This assumption is also of interest for the generality of the
theory.

Usually, correction of sequence insertion/deletion requires codes
with larger minimum distance. To see this, consider two special
cases of error pattern. The first special case is
$n_{\text{I}}=n_{\text{D}}=0$. In this case, by Theorem 2, a
sequence-subset code $\mathcal C$ with minimum distance
$d_{\text{S}}(\mathcal C)$ can correct a total of at most
$\frac{d_{\text{S}}(\mathcal C)-1}{2}$ substitution errors. The
second special case is $n_{\text{I}}=n_{\text{S}}=0$. A
sequence-subset code $\mathcal C$ with minimum distance
$d_{\text{S}}(\mathcal C)$ can correct a total of at most
$\frac{d_{\text{S}}(\mathcal C)-1}{2L}$ sequence-deletions, which
is $\frac{1}{L}$ of the number of correctable substitution errors.

\section{Bounds on the Size of Sequence Subset Codes}

In this Section, we consider codes with constant codeword size.
Let $S_q(L,M,d)$ denote the maximum number of codewords in a
sequence-subset code over a $q$-ary alphabet with sequence length
$L$, constant codeword size $M$ and minimum sequence-subset
distance at least $d$. A $q$-ary sequence-subset code is said to
be optimal (with respect to code size) if it has the largest
possible code size of any $q$-ary sequence-subset code of the
given parameters $L,M$ and $d$. In this section, we always assume
that $\mathbb A$ is an alphabet of size $q$. We will derive some
upper bounds on $S_q(L,M,d)$.

Clearly, for any sequence-subset code $\mathcal C\subseteq\mathcal
P(\mathbb A^{L})$ with constant codeword size $M$, its minimum
distance $d_{\text{S}}(\mathcal C)\leq LM$, and hence
$M\geq\frac{d_{\text{S}}(\mathcal C)}{L}$.
For this reason, in the following, we always assume that $d\leq
LM$, or equivalently, $M\geq\left\lceil\frac{d}{L}\right\rceil$.

\subsection{Upper Bound for the Special Case $d=LM$}

First, consider the special case that \textcolor{blue}{$d=LM$}. We
have the following upper bound on $S_q(L,M,d).$

\begin{thm}\label{thm-bnd-S-M0}
Suppose $d=LM$. Then
\begin{align}\label{eq-bnd-S-M0}S_q(L,M,d)\leq
\left\lfloor qM^{-\frac{1}{L}}\right\rfloor.\end{align}
\end{thm}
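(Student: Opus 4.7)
The plan is to exploit the extremal nature of the condition $d = LM$: for codewords of common size $M$, equation \eqref{eq-def-chi-d} reduces to $d_{\chi}(\textbf{X}_1,\textbf{X}_2)=\sum_{\textbf{x}\in\textbf{X}_1}d_{\text{H}}(\textbf{x},\chi(\textbf{x}))$ with no boundary term, and since each summand is at most $L$, we have $d_{\chi}\le LM$ automatically. So demanding $d_{\text{S}}(\mathcal C)\ge LM$ actually forces every matching to be maximal.

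First I would extract the following structural consequence: for any two distinct codewords $\textbf{X}_1,\textbf{X}_2\in\mathcal C$ and any bijection $\chi:\textbf{X}_1\to\textbf{X}_2$, equality $d_{\chi}=LM$ must hold termwise, i.e.\ $d_{\text{H}}(\textbf{x},\chi(\textbf{x}))=L$ for every $\textbf{x}\in\textbf{X}_1$. Since for any prescribed pair $(\textbf{x},\textbf{y})\in\textbf{X}_1\times\textbf{X}_2$ one can build a bijection sending $\textbf{x}$ to $\textbf{y}$, this upgrades to the much stronger statement that \emph{every} sequence in $\textbf{X}_1$ differs from \emph{every} sequence in $\textbf{X}_2$ in all $L$ coordinates.

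Next I would reinterpret this coordinatewise. For each $i\in[L]$ and each $\textbf{X}\in\mathcal C$, set $V_i(\textbf{X})=\{\textbf{x}(i):\textbf{x}\in\textbf{X}\}\subseteq\mathbb A$. The previous step gives that the sets $V_i(\textbf{X})$ for $\textbf{X}\in\mathcal C$ are pairwise disjoint subsets of $\mathbb A$ at every fixed $i$, so
\[
\sum_{\textbf{X}\in\mathcal C}|V_i(\textbf{X})|\ \le\ q\qquad\text{for each }i\in[L].
\]
On the other hand, since $\textbf{X}\subseteq V_1(\textbf{X})\times\cdots\times V_L(\textbf{X})$, we have $\prod_{i=1}^L|V_i(\textbf{X})|\ge M$, and AM--GM then yields $\sum_{i=1}^L|V_i(\textbf{X})|\ge L\,M^{1/L}$.

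Finally I would combine both estimates by double counting: summing the AM--GM bound over all codewords and swapping the order of summation gives
\[
|\mathcal C|\cdot L\cdot M^{1/L}\ \le\ \sum_{\textbf{X}\in\mathcal C}\sum_{i=1}^L|V_i(\textbf{X})|\ =\ \sum_{i=1}^L\sum_{\textbf{X}\in\mathcal C}|V_i(\textbf{X})|\ \le\ Lq,
\]
from which $|\mathcal C|\le qM^{-1/L}$ and hence \eqref{eq-bnd-S-M0} follow upon taking the integer part. The only delicate step is the first one: correctly deducing from the minimum-distance hypothesis that every cross-pair of sequences disagrees in all coordinates. After that, the coordinate projection $V_i(\textbf{X})$ together with AM--GM is essentially a one-line double-counting argument.
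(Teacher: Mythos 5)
Your proposal is correct and follows essentially the same route as the paper's proof: the same key structural fact (the minimum-distance hypothesis $d=LM$ forces every cross-pair of sequences from distinct codewords to be at Hamming distance exactly $L$, hence the coordinate projections $V_i(\textbf{X})$ — the paper's $W_{i,\ell}$ — are pairwise disjoint for each fixed coordinate), the same two counting inequalities $\sum_{\textbf{X}}|V_i(\textbf{X})|\leq q$ and $\prod_i|V_i(\textbf{X})|\geq M$, and an AM--GM step to combine them. The only cosmetic difference is that you apply AM--GM across the $L$ coordinates of each fixed codeword and then sum over codewords, whereas the paper applies it across the $N$ codewords for each fixed coordinate and then multiplies over coordinates; both yield $N\leq qM^{-1/L}$.
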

\begin{proof}
Let $\mathcal C=\{\textbf{\text{X}}_1, \textbf{\text{X}}_2,
\cdots, \textbf{\text{X}}_N\}\subseteq\mathcal P(\mathbb A^{L})$
be an arbitrary sequence-subset code with constant codeword size
$M$ and minimum distance $d$, where for each $i\in[N]$,
$\textbf{\text{X}}_i=\{\textbf{\text{x}}_{i,1},
\textbf{\text{x}}_{i,2}, \cdots,
\textbf{\text{x}}_{i,M}\}\subseteq\mathbb A^{L}$. We will prove
that $N\leq qM^{-\frac{1}{L}}$.

For each fixed $\ell\in[L]$ and $i\in[N]$, let
$$W_{i,\ell}=\bigcup_{j\in[M]}\{\textbf{\text{x}}_{i,j}(\ell)\}.$$
Then we have the following Claim.

\emph{Claim 1}: For each fixed $\ell\in[L]$ and $i\in[N]$,
$W_{1,\ell}, W_{2,\ell}, \cdots, W_{N,\ell}$ are mutually disjoint
subsets of $\mathbb A$.

To prove Claim 1, we first notice that for any distinct
$i_1,i_2\in[N]$ and any (not necessarily distinct)
$j_1,j_2\in[M]$,
\begin{align}\label{eq0-vlu-S-M0}
d_{\text{H}}(\textbf{x}_{i_1,j_1},
\textbf{x}_{i_2,j_2})=L,\end{align} which can be proved as
follows. Since both $\textbf{x}_{i_1,j_1}$ and
$\textbf{x}_{i_2,j_2}$ have length $L$, we have
$d_{\text{H}}(\textbf{x}_{i_1,j_1}, \textbf{x}_{i_2,j_2})\leq L$.
We can only have $d_{\text{H}}(\textbf{x}_{i_1,j_1},
\textbf{x}_{i_2,j_2})=L$ because otherwise, we have
$d_{\text{H}}(\textbf{\text{x}}_{i_1,j_1},
\textbf{\text{x}}_{i_2,j_2})<L$ and we can construct a bijection
$\chi: \textbf{X}_{i_1}\rightarrow\textbf{X}_{i_2}$ such that
$\chi(\textbf{x}_{i_1,j_1})=\textbf{x}_{i_2,j_2}$. Since for all
$\textbf{x}_{i_1,j}\in\textbf{X}_{i_1}$ and
$\textbf{x}_{i_1,j'}\in\textbf{X}_{i_2}$,
$d_{\text{H}}(\textbf{x}_{i_1,j}, \textbf{x}_{i_1,j'})\leq L$ (the
length of $\textbf{x}_{i_1,j}$ and $\textbf{x}_{i_2,j'}$), then by
\eqref{eq-def-chi-d} we can obtain $d_{\chi}(\textbf{X}_{i_1},
\textbf{X}_{i_2})<LM$, and further by Definition \ref{def-dst} we
have $d_{\text{S}}(\textbf{X}_{i_1}, \textbf{X}_{i_2})<LM$, which
contradicts to the assumption that the minimum distance of
$\mathcal C$ is $d=LM$. Hence, \eqref{eq0-vlu-S-M0} must hold.
Again since both $\textbf{x}_{i_1,j_1}$ and $\textbf{x}_{i_2,j_2}$
have length $L$, then \eqref{eq0-vlu-S-M0} implies that for any
fixed $\ell\in[L]$,
$\textbf{\text{x}}_{i_1,j_1}(\ell)\neq\textbf{\text{x}}_{i_2,j_2}(\ell)$.
Since $j_1,j_2$ are any elements of $[M]$, then we have
$W_{i_1,\ell}\cap W_{i_2,\ell}=\emptyset$. Further, since
$i_1,i_2$ are any distinct elements of $[N]$, we have $W_{1,\ell},
W_{2,\ell}, \cdots, W_{N,\ell}$ are mutually disjoint subsets of
$\mathbb A$, which proves Claim 1.

Now, by Claim 1, we have
\begin{align}\label{eq1-vlu-S-M0}\sum_{i=1}^N|W_{i,\ell}|\leq|\mathbb
A|=q.\end{align}

By the construction of $W_{i,\ell}$, for each $i\in[N]$ and
$j\in[M]$, we have $\textbf{\text{x}}_{i,j}\in W_{i,1}\times
W_{i,2}\times\cdots\times W_{i,L}$, which implies that
$\textbf{\text{X}}_i=\{\textbf{\text{x}}_{i,1},
\textbf{\text{x}}_{i,2}, \cdots,
\textbf{\text{x}}_{i,M}\}\subseteq W_{i,1}\times
W_{i,2}\times\cdots\times W_{i,L}$, and hence we have
\begin{align}\label{eq2-vlu-S-M0}|W_{i,1}\times W_{i,2}\times\cdots\times
W_{i,L}|=\prod_{\ell=1}^L|W_{i,\ell}|\geq
|\textbf{\text{X}}_i|=M.\end{align}

Now, consider \eqref{eq1-vlu-S-M0}. By the inequality of
arithmetic and geometric means, for each $\ell\in[L]$, we have
\begin{align*}
\frac{q}{N}\geq\frac{1}{N}\sum_{i=1}^N|W_{i,\ell}|
\geq\left(\prod_{i=1}^N|W_{i,\ell}|\right)^{\frac{1}{N}}.\end{align*}
Combining this with \eqref{eq2-vlu-S-M0}, we have
\begin{align*}
\left(\frac{q}{N}\right)^L&
\geq\prod_{\ell=1}^L\left(\prod_{i=1}^N|W_{i,\ell}|\right)^{\frac{1}{N}}
\nonumber\\
&=\prod_{i=1}^N\left(\prod_{\ell=1}^L|W_{i,\ell}|\right)^{\frac{1}{N}}
\nonumber\\
&\geq(M^{\frac{1}{N}})^N\nonumber\\&=M.\end{align*} From this we
have $\frac{q}{N}\geq M^{\frac{1}{L}}$, which implies $N\leq
qM^{-\frac{1}{L}}$. Hence, $$S_q(L,M,d)\leq qM^{-\frac{1}{L}}.$$
Since $S_q(L,M,d)$ is an integer, so
$$S_q(L,M,d)\leq\left\lfloor qM^{-\frac{1}{L}}\right\rfloor,$$
which completes the proof.
\end{proof}

Consider the redundancy of the codes. For any code $\mathcal
C\subseteq\mathbb A^L$ with constant codeword size $M$ and minimum
distance $d=LM$, by Theorem \ref{thm-bnd-S-M0}, we have
\begin{align*} r(\mathcal C)&=\log_q{q^L\choose M}
-\log_q|\mathcal C|\\&\geq \log_q{q^L\choose M}-\log_q(qM^{-\frac{1}{L}})\\
&=\log_q\frac{q^L!M^{\frac{1}{L}}}{M!(q^L-M)!q}.\end{align*}

The bound \eqref{eq-bnd-S-M0} is tight for the special case that
$M^{\frac{1}{L}}$ is an integer. Codes that achieve equality of
\eqref{eq-bnd-S-M0} are constructed in Section
\uppercase\expandafter{\romannumeral 4}.A.

\subsection{Plotkin-like Bound}
We present the Plotkin-like Bound of sequence-subset codes as the
following theorem.
\begin{thm}[Plotkin-like Bound]\label{thm-pltk-bnd}
Let $\mathcal C$ be an $\left(L,M,N,d\right)_q$ code such that
$rLM<d$, where $r=1-\frac{1}{q}$. Then
$$N\leq\frac{d}{d-rLM}.$$
\end{thm}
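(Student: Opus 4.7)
The plan is to adapt the classical Plotkin double-counting argument to the sequence-subset setting. First, I would fix an arbitrary labeling of each codeword as $\mathbf{X}_i = \{\mathbf{x}_{i,1},\ldots,\mathbf{x}_{i,M}\}$ and consider the pairwise-distance sum
$$S := \sum_{1 \leq i < j \leq N} d_{\text{S}}(\mathbf{X}_i, \mathbf{X}_j),$$
to be bounded from both sides. The lower bound is immediate from the minimum-distance hypothesis: $S \geq \binom{N}{2}d$.

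For the upper bound, the crucial observation is that because $d_{\text{S}}$ is defined in Definition \ref{def-dst} as a \emph{minimum} over injections, it is automatically upper-bounded by $d_{\chi}$ for any specific bijection $\chi$. Using the ``identity matching'' $\mathbf{x}_{i,k}\mapsto\mathbf{x}_{j,k}$ induced by the chosen labelings together with \eqref{eq-def-chi-d} yields
$$d_{\text{S}}(\mathbf{X}_i,\mathbf{X}_j) \leq \sum_{k=1}^M d_{\text{H}}(\mathbf{x}_{i,k},\mathbf{x}_{j,k}).$$
Expanding each Hamming distance into its $L$ coordinate-contributions and swapping the order of summation would then give
$$S \leq \sum_{k=1}^M \sum_{\ell=1}^L N_{k,\ell},$$
where $N_{k,\ell}$ denotes the number of pairs $(i,j)$ with $i<j$ and $\mathbf{x}_{i,k}(\ell) \neq \mathbf{x}_{j,k}(\ell)$.

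For each fixed $(k,\ell)$, the values $\{\mathbf{x}_{i,k}(\ell) : i\in[N]\}$ form a multiset of $N$ symbols drawn from the $q$-ary alphabet $\mathbb{A}$. If $n_1,\ldots,n_q$ denote the symbol frequencies, then $N_{k,\ell} = \binom{N}{2} - \sum_{a=1}^q \binom{n_a}{2}$, and convexity (equivalently Jensen's inequality) gives $\sum_a \binom{n_a}{2} \geq q\binom{N/q}{2}$, leading to the standard estimate $N_{k,\ell} \leq rN^2/2$ with $r = 1 - 1/q$. Summing over the $LM$ layers $(k,\ell)$ yields $S \leq rLMN^2/2$. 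Combining the two bounds, $\binom{N}{2}d \leq rLMN^2/2$ simplifies to $(N-1)d \leq rLMN$, which rearranges to $N \leq d/(d - rLM)$ under the hypothesis $rLM < d$.

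I do not anticipate a serious obstacle. The only conceptual step is recognizing that the min-over-matchings definition of $d_{\text{S}}$ lets us replace each $d_{\text{S}}(\mathbf{X}_i,\mathbf{X}_j)$ by a concrete upper bound, which is exactly the direction needed for the double-counting. After that reduction, the problem decomposes into $LM$ independent classical Plotkin problems, one per coordinate-layer $(\ell,k)$, and the rest is mechanical.
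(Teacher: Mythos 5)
Your proposal is correct, and it reaches the bound by a genuinely different route from the paper's proof. The paper's argument hinges on its Claim 2, obtained by averaging $d_{\chi}(\textbf{X}_{i_1},\textbf{X}_{i_2})$ over all $M!$ permutations of $[M]$, which yields the labeling-independent bound $d_{\text{S}}(\textbf{X}_{i_1},\textbf{X}_{i_2})\leq\frac{1}{M}\sum_{j_1,j_2\in[M]}d_{\text{H}}(\textbf{x}_{i_1,j_1},\textbf{x}_{i_2,j_2})$ involving all $M^2$ cross-pairs; it then performs a single frequency count $n_{\ell,a}$ over all $NM$ sequences per coordinate $\ell$ and applies Cauchy--Schwarz. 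You instead exploit the min-over-injections in Definition \ref{def-dst} with one fixed bijection $\textbf{x}_{i,k}\mapsto\textbf{x}_{j,k}$ induced by an arbitrary labeling, so only the $M$ ``diagonal'' Hamming distances appear, and the counting then splits into $LM$ independent classical Plotkin problems on $N$ symbols each, one per layer $(k,\ell)$. Both chains of inequalities collapse to the same relation $d(N-1)\leq rLMN$, hence the same conclusion $N\leq d/(d-rLM)$. What your version buys is brevity: it dispenses with the permutation-averaging lemma entirely and reduces immediately to the textbook Plotkin count. What the paper's version buys is the intermediate Claim 2 itself, a labeling-free comparison between the sequence-subset distance and the average cross-Hamming distance, which is reusable outside this theorem; for the worst-case bound the two relaxations are equally tight. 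One small point worth stating explicitly if you write this up: the map $\textbf{x}_{i,k}\mapsto\textbf{x}_{j,k}$ is a legitimate bijection between $\textbf{X}_i$ and $\textbf{X}_j$ precisely because both are \emph{sets} of exactly $M$ distinct sequences, so the second term $L(|\textbf{X}_j|-|\textbf{X}_i|)$ in \eqref{eq-def-chi-d} vanishes; that is what licenses the clean inequality $d_{\text{S}}(\textbf{X}_i,\textbf{X}_j)\leq\sum_{k=1}^{M}d_{\text{H}}(\textbf{x}_{i,k},\textbf{x}_{j,k})$.
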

\begin{proof}
Our proof of this theorem is similar to the proof of \cite[Theorem
2.2.1]{Huffman}.

Suppose $\mathcal C=\{\textbf{\text{X}}_1, \textbf{\text{X}}_2,
\cdots, \textbf{\text{X}}_N\}$ such that for each $i\in[N]$,
$\textbf{\text{X}}_i=\{\textbf{\text{x}}_{i,1},
\textbf{\text{x}}_{i,2}, \cdots,
\textbf{\text{x}}_{i,M}\}\subseteq\mathbb A^{L}$. First, we have
the following claim, which we will prove later.\vspace{0.1cm}

\emph{Claim 2}: For any distinct $i_1,i_2\in[N]$, we have
\begin{align*}
d_{\text{S}}(\textbf{\text{X}}_{i_1},
\textbf{\text{X}}_{i_2})\leq\frac{1}{M}\sum_{j_1,j_2\in[M]}
d_{\text{H}}(\textbf{\text{x}}_{i_1,j_1},
\textbf{\text{x}}_{i_2,j_2}).\end{align*}

Now, let $$A=\sum_{i_1,i_2\in[N]}\sum_{j_1,j_2\in[M]}
d_{\text{H}}(\textbf{\text{x}}_{i_1,j_1},
\textbf{\text{x}}_{i_2,j_2}).$$ Since $d$ is the minimum distance
of $\mathcal C$, by the averaging principle \cite{Jukna}, we have
\begin{align}\label{eq2-sum-dsc}
d&\leq{N\choose
2}^{-1}\sum_{\{i_1,i_2\}\subseteq[N]}d_{\text{S}}(\textbf{\text{X}}_{i_1},
\textbf{\text{X}}_{i_2})\nonumber\\&=\frac{1}{2}{N\choose
2}^{-1}\sum_{i_1,i_2\in[N], i_1\neq
i_2}d_{\text{S}}(\textbf{\text{X}}_{i_1},
\textbf{\text{X}}_{i_2})\nonumber\\&\leq\frac{1}{2}{N\choose
2}^{-1}\sum_{i_1,i_2\in[N]}d_{\text{S}}(\textbf{\text{X}}_{i_1},
\textbf{\text{X}}_{i_2})\nonumber\\&\leq\frac{1}{N(N-1)}
\sum_{i_1,i_2\in[N]}\left(\frac{1}{M}\sum_{j_1,j_2\in[M]}
d_{\text{H}}(\textbf{\text{x}}_{i_1,j_1},
\textbf{\text{x}}_{i_2,j_2})\right)\nonumber\\
&=\frac{1}{N(N-1)}\frac{1}{M}\cdot A,\end{align} where the last
inequality is obtained by Claim 2.

For each $a\in\mathbb A$ and $\ell\in[L]$, let $n_{\ell,a}$ be the
number of $(i,j)\in[N]\times[M]$ such that
$\textbf{\text{x}}_{i,j}(\ell)=a$. Then for each fixed
$\ell\in[L]$, we have
\begin{align}\label{eq5-sum-dsc}
\sum_{a\in\mathbb A}n_{\ell,a}=NM.\end{align} Moreover, we have
\begin{align}\label{eq3-sum-dsc}
A&=\sum_{i_1,i_2\in[N]}\sum_{j_1,j_2\in[M]}
d_{\text{H}}(\textbf{\text{x}}_{i_1,j_1},
\textbf{\text{x}}_{i_2,j_2})\nonumber\\&=\sum_{\ell=1}^L\sum_{a\in\mathbb
A}n_{\ell,a}(NM-n_{\ell,a})\nonumber\\
&=L(NM)^2-\sum_{\ell=1}^L\sum_{a\in\mathbb
A}n_{\ell,a}^2.\end{align} For each $\ell\in[L]$, by the
Cauchy-Schwartz inequality,
\begin{align*}
\left(\sum_{a\in\mathbb A}n_{\ell,a}\right)^2\leq
q\sum_{a\in\mathbb A}n_{\ell,a}^2,\end{align*} where $q=|\mathbb
A|$. Combining this with \eqref{eq3-sum-dsc}, we obtain
\begin{align}\label{eq4-sum-dsc}
A&\leq L(NM)^2-\sum_{\ell=1}^L\frac{1}{q}\left(\sum_{a\in\mathbb
A}n_{\ell,a}\right)^2\nonumber\\&=L(NM)^2-\sum_{\ell=1}^L\frac{1}{q}
\left(NM\right)^2\nonumber\\&=\left(1-\frac{1}{q}\right)L(NM)^2,\end{align}
where the first equality is obtained from \eqref{eq5-sum-dsc}.
Combining \eqref{eq2-sum-dsc} and \eqref{eq4-sum-dsc}, we obtain
$$d\leq\frac{1}{N(N-1)}\frac{1}{M}\cdot
\left(1-\frac{1}{q}\right)L(NM)^2.$$ Solving $N$ from the above
inequality we obtain
$$N\leq\frac{d}{d-rLM}~,$$ where $r=1-\frac{1}{q}$.
\end{proof}

\vspace{0.1cm} To complete the proof of Theorem
\ref{thm-pltk-bnd}, we still need to prove Claim 2.
\begin{proof}[Proof of Claim 2]
Let $\mathscr S_M$ denote the permutation group on $[M]$. Note
that for any $j_1,j_2\in[M]$, not necessarily distinct, there are
$(M-1)!$ permutations $\chi\in\mathscr S_M$ such that
$\chi(j_1)=j_2$. We have
\begin{align}\label{eq1-claim1}&\sum_{\chi\in\mathscr
S_M}\sum_{j\in[M]}d_{\text{H}}(\textbf{\text{x}}_{i_1,j},
\textbf{\text{x}}_{i_2,\chi(j)})\nonumber\\&=(M-1)!\sum_{j_1,j_2\in[M]}
d_{\text{H}}(\textbf{\text{x}}_{i_1,j_1},
\textbf{\text{x}}_{i_2,j_2}).\end{align} Further, by Definition
\ref{def-dst} and the averaging principle \cite{Jukna}, we have
\begin{align*}
d_{\text{S}}(\textbf{\text{X}}_{i_1},
\textbf{\text{X}}_{i_2})&\leq\frac{1}{M!}\sum_{\chi\in\mathscr
S_M}d_{\chi}(\textbf{\text{X}}_{i_1},
\textbf{\text{X}}_{i_2})\nonumber\\&=\frac{1}{M!}\sum_{\chi\in\mathscr
S_M}\sum_{j\in[M]}d_{\text{H}}(\textbf{\text{x}}_{i_1,j},
\textbf{\text{x}}_{i_2,\chi(j)})\nonumber\\
&=\frac{(M-1)!}{M!}\sum_{j_1,j_2\in[M]}
d_{\text{H}}(\textbf{\text{x}}_{i_1,j_1},
\textbf{\text{x}}_{i_2,j_2})\nonumber\\&=\frac{1}{M}\sum_{j_1,j_2\in[M]}
d_{\text{H}}(\textbf{\text{x}}_{i_1,j_1},
\textbf{\text{x}}_{i_2,j_2}),\end{align*} where the second
equality comes from \eqref{eq1-claim1}.
\end{proof}

Consider the redundancy of the codes. For any code $\mathcal
C\subseteq\mathbb A^L$ with constant codeword size $M$ and minimum
distance $d>rLM$, where $r=1-\frac{1}{q}$, by Theorem
\ref{thm-pltk-bnd}, we have \begin{align*} r(\mathcal
C)&=\log_q{q^L\choose M}-\log_q|\mathcal C|\\&\geq
\log_q{q^L\choose M}-\log_q(\frac{d}{d-rLM})\\
&=\log_q\frac{q^L!(d-rLM)}{M!(q^L-M)!d}.\end{align*}

For the special case that $d=LM$, we have
$\frac{d}{d-rLM}=q>qM^{-\frac{1}{L}}$, where $r=1-\frac{1}{q}$.
Thus, the bound given in Theorem \ref{thm-bnd-S-M0} is tighter
than the bound given in Theorem \ref{thm-pltk-bnd} for $M>1$. It
is still not known whether the bound in Theorem \ref{thm-pltk-bnd}
is achievable for other cases.

\subsection{Singleton-like Bound}

For each code $\mathcal C=\{\textbf{\text{X}}_1,
\textbf{\text{X}}_2, \cdots,
\textbf{\text{X}}_N\}\subseteq\mathcal P(\mathbb A^{L})$, denote
\begin{align}\label{cup-X-C}V(\mathcal
C)=\bigcup_{i=1}^N\textbf{\text{X}}_i.\end{align} Further, let
$\tilde{S}_q(L,M,K,d)$ denote the maximum number of codewords in a
sequence-subset code $\mathcal C$ over a $q$-ary alphabet $\mathbb
A$ with sequence length $L$, constant codeword size $M$, minimum
sequence-subset distance at least $d$ and $|V(\mathcal C)|\leq K$.
Clearly, for any $K\leq q^L$,
\begin{align}\label{BS-to-S}\tilde{S}_q(L,M,K,d)\leq
\tilde{S}_q(L,M,q^L,d)=S_q(L,M,d).\end{align} We first prove a
recursive bound on $\tilde{S}_q(L,M,K,d)$ \textcolor{blue}{in} the
following theorem.

\begin{thm}\label{thm-recu-M-K}
Suppose $d\leq LM$ and $K\leq q^L$. We have
\begin{align}\label{eq1-recu-M-K}\tilde{S}_q(L,M,K,d)\leq
\left\lfloor\frac{K}{M}\tilde{S}_q(L,M-1,K-1,d)\right\rfloor.\end{align}
\end{thm}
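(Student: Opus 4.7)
The plan is to use a double-counting / element-removal argument in the style of the classical Singleton (or Johnson-type) bound for combinatorial codes, and the crucial observation will be that Corollary~\ref{cor-dst} lets us delete an element that is common to two codewords without changing the sequence-subset distance between them.

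Let $\mathcal C=\{\textbf{X}_1,\ldots,\textbf{X}_N\}$ be an arbitrary $q$-ary sequence-subset code achieving $\tilde{S}_q(L,M,K,d)$, and let $V=V(\mathcal C)=\bigcup_{i=1}^N\textbf{X}_i$, so $|V|\leq K$. First, I would set up the double count: for each $\textbf{x}\in V$ let $N_{\textbf{x}}=|\{i\in[N]:\textbf{x}\in\textbf{X}_i\}|$. Since every codeword has size $M$,
\begin{align*}
\sum_{\textbf{x}\in V}N_{\textbf{x}}=\sum_{i=1}^N|\textbf{X}_i|=NM,
\end{align*}
so by the averaging principle there exists some $\textbf{x}^*\in V$ with $N_{\textbf{x}^*}\geq NM/|V|\geq NM/K$.

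Second, I would ``puncture'' $\mathcal C$ at $\textbf{x}^*$: form
\begin{align*}
\mathcal C'=\bigl\{\textbf{X}_i\setminus\{\textbf{x}^*\}:i\in[N],\textbf{x}^*\in\textbf{X}_i\bigr\}.
\end{align*}
All members of $\mathcal C'$ have size $M-1$, they are distinct (two original codewords both containing $\textbf{x}^*$ would coincide if they agreed after deletion), and their union lies in $V\setminus\{\textbf{x}^*\}$, which has cardinality at most $K-1$. The key step is to verify that $\mathcal C'$ has minimum distance at least $d$. For any two codewords $\textbf{X}_{i_1},\textbf{X}_{i_2}\in\mathcal C$ both containing $\textbf{x}^*$, the elements $\textbf{x}^*$ lies in $\textbf{X}_{i_1}\cap\textbf{X}_{i_2}$, so
\begin{align*}
(\textbf{X}_{i_1}\setminus\{\textbf{x}^*\})\setminus(\textbf{X}_{i_2}\setminus\{\textbf{x}^*\})=\textbf{X}_{i_1}\setminus\textbf{X}_{i_2},
\end{align*}
and symmetrically for the reverse difference. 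Applying Corollary~\ref{cor-dst} twice then gives
\begin{align*}
d_{\text S}(\textbf{X}_{i_1}\setminus\{\textbf{x}^*\},\textbf{X}_{i_2}\setminus\{\textbf{x}^*\})=d_{\text S}(\textbf{X}_{i_1}\setminus\textbf{X}_{i_2},\textbf{X}_{i_2}\setminus\textbf{X}_{i_1})=d_{\text S}(\textbf{X}_{i_1},\textbf{X}_{i_2})\geq d,
\end{align*}
so $d_{\text S}(\mathcal C')\geq d$.

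Consequently $\mathcal C'$ is an admissible code for the parameters $(L,M-1,K-1,d)$, which gives $N_{\textbf{x}^*}=|\mathcal C'|\leq\tilde{S}_q(L,M-1,K-1,d)$. Combining this with $N_{\textbf{x}^*}\geq NM/K$ yields $N\leq\frac{K}{M}\tilde{S}_q(L,M-1,K-1,d)$, and taking the floor (since $N$ is an integer) produces \eqref{eq1-recu-M-K}. The only subtlety is the distance-preservation step in paragraph two; once that invariance under deletion of a shared element is established via Corollary~\ref{cor-dst}, the rest is routine averaging.
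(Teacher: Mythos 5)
Your proof is correct and follows essentially the same route as the paper's: the same averaging argument to find a sequence $\textbf{x}^*$ contained in at least $NM/K$ codewords, the same puncturing of those codewords at $\textbf{x}^*$, and the same appeal to Corollary~\ref{cor-dst} to see that deleting a shared element preserves the pairwise sequence-subset distances. The only difference is cosmetic (you do the averaging before the puncturing, and you explicitly note that the punctured codewords remain distinct, a detail the paper leaves implicit).
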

\begin{proof}
Let $\mathcal C=\{\textbf{\text{X}}_1, \textbf{\text{X}}_2,
\cdots, \textbf{\text{X}}_N\}\subseteq\mathcal P(\mathbb A^{L})$
be a sequence-subset code with constant codeword size $M$, minimum
distance at least $d$ such that $|V(\mathcal C)|\leq K$ and code
size $|\mathcal C|=N=\tilde{S}_q(L,M,K,d)$, where
$\textbf{\text{X}}_i\subseteq\mathbb A^{L}$ for each $i\in[N]$.

For each $\textbf{\text{x}}\in V(\mathcal C)$, let
$$\mathcal C(\textbf{\text{x}})=\{\textbf{\text{X}}\in\mathcal C;
\textbf{\text{x}}\in\textbf{\text{X}}\}$$ and $$\tilde{\mathcal
C}(\textbf{\text{x}})=\{\tilde{\textbf{\text{X}}}
=\textbf{\text{X}}\backslash\{\textbf{\text{x}}\};
\textbf{\text{X}}\in\mathcal C(\textbf{\text{x}})\}.$$ Then
$\tilde{\mathcal C}(\textbf{\text{x}})\subseteq\mathcal P(\mathbb
A^{L})$ has constant codeword size $M-1$, size $|\tilde{\mathcal
C}(\textbf{\text{x}})|=|\mathcal C(\textbf{\text{x}})|$ and
$|V(\tilde{\mathcal C}(\textbf{\text{x}}))|\leq K-1$.

Moreover, for any distinct $\tilde{\textbf{\text{X}}}_{i_1},
\tilde{\textbf{\text{X}}}_{i_2}\in\tilde{\mathcal
C}(\textbf{\text{x}})$, by the construction of $\tilde{\mathcal
C}(\textbf{\text{x}})$, we have $\tilde{\textbf{\text{X}}}_{i_1}
=\textbf{\text{X}}_{i_1}\backslash\{\textbf{\text{x}}\}$ and
$\tilde{\textbf{\text{X}}}_{i_2}
=\textbf{\text{X}}_{i_2}\backslash\{\textbf{\text{x}}\}$ for some
distinct
$\textbf{\text{X}}_{i_1},\textbf{\text{X}}_{i_2}\in\mathcal
C(\textbf{\text{x}})$, so
$\tilde{\textbf{\text{X}}}_{i_1}\backslash
\tilde{\textbf{\text{X}}}_{i_2}
=\textbf{\text{X}}_{i_1}\backslash\textbf{\text{X}}_{i_2}$ and
$\tilde{\textbf{\text{X}}}_{i_2}\backslash
\tilde{\textbf{\text{X}}}_{i_1}
=\textbf{\text{X}}_{i_2}\backslash\textbf{\text{X}}_{i_1}$. By
Corollary \ref{cor-dst}, we have
$$d_{\text{S}}(\tilde{\textbf{\text{X}}}_{i_1},
\tilde{\textbf{\text{X}}}_{i_2})
=d_{\text{S}}(\textbf{\text{X}}_{i_1},\textbf{\text{X}}_{i_2}).$$
Then we have $d_{\text{S}}(\tilde{\mathcal C}(\textbf{\text{x}}))=
d_{\text{S}}(\mathcal C(\textbf{\text{x}}))$. On the other hand,
since $\mathcal C(\textbf{\text{x}})\subseteq\mathcal C$, we have
$d_{\text{S}}(\mathcal C(\textbf{\text{x}}))\geq
d_{\text{S}}(\mathcal C)\geq d$. Thus,
$d_{\text{S}}(\tilde{\mathcal C}(\textbf{\text{x}}))\geq d$.

By the above discussion, for each $\textbf{\text{x}}\in V(\mathcal
C)$, we have \begin{align}\label{eq2-recu-M-K}|\tilde{\mathcal
C}(\textbf{\text{x}})|\leq\tilde{S}_q(L,M-1,K-1,d).\end{align}

Now, we estimate $|\tilde{\mathcal C}(\textbf{\text{x}})|$. Since
$|\tilde{\mathcal C}(\textbf{\text{x}})|=|\mathcal
C(\textbf{\text{x}})|$, it is sufficient to estimate $|\mathcal
C(\textbf{\text{x}})|$. Denote $V(\mathcal C)=\{\textbf{x}_1,
\textbf{x}_2, \cdots, \textbf{x}_{\tilde{K}}\}$, where
$\tilde{K}=|V(\mathcal C)|$. Consider the $N\times \tilde{K}$
matrix $I=(a_{i,j})$ such that $a_{i,j}=1$ if
$\textbf{x}_j\in\textbf{X}_i$, and $a_{i,j}=0$ otherwise. Note
that the number of ones in row $i$ of $I$ is $|\textbf{X}_i|=M$
and the number of ones in column $j$ of $I$ is $|\mathcal
C(\textbf{x}_j)|$. By counting the total number of ones in $I$, we
obtain
$$\sum_{\textbf{\text{x}}\in V(\mathcal C)}|\mathcal
C(\textbf{\text{x}})|=\sum_{\textbf{\text{X}}\in \mathcal
C}|\textbf{\text{X}}|=MN.$$ By the averaging principle
\cite{Jukna}, there exists an $\textbf{x}_{j_0}\in V(\mathcal C)$
such that
$$|\mathcal
C(\textbf{x}_{j_0})|\geq\frac{MN}{|V(\mathcal
C)|}\geq\frac{MN}{K}.$$ Hence,
\begin{align*}N\leq\frac{K}{M}|\mathcal
C(\textbf{x}_{j_0})|=\frac{K}{M}|\tilde{\mathcal
C}(\textbf{x}_{j_0})|.\end{align*} Note that $|\mathcal
C|=\tilde{S}_q(L,M,K,d)=N$. Then we have
\begin{align*}\tilde{S}_q(L,M,K,d)\leq\frac{K}{M}|\tilde{\mathcal
C}(\textbf{\text{x}}_0)|.\end{align*} This, combining with
\eqref{eq2-recu-M-K}, implies that $$\tilde{S}_q(L,M,K,d)\leq
\frac{K}{M}\tilde{S}_q(L,M-1,K-1,d).$$ Noticing that
$\tilde{S}_q(L,M,K,d)$ is an integer, then
$$\tilde{S}_q(L,M,K,d)\leq
\left\lfloor\frac{K}{M}\tilde{S}_q(L,M-1,K-1,d)\right\rfloor,$$
which completes the proof.
\end{proof}

Now, we can prove a Singleton-like bound for sequence-subset codes
as follows.
\begin{thm}[Singleton-like Bound]\label{thm-Sglnt-bnd}
Suppose $rLM_0\!<\!d\!\leq\! LM_0$, where $r=1-\frac{1}{q}$ and
$M_0=\left\lceil\frac{d}{L}\right\rceil$. Then
\begin{align*}
&S_q(L,M,d)\\
&\leq \left\lfloor\frac{q^L}{M}\left\lfloor\frac{q^L-1}{M-1}\cdots
\left\lfloor\frac{q^L-M+M_0+1}{M_0+1}f(L,M_0,d)
\right\rfloor\cdots\right\rfloor\right\rfloor,\end{align*} where
\begin{equation}
f(L,M_0,d)=\left\{\begin{aligned} &\left\lfloor
qM_0^{-\frac{1}{L}}\right\rfloor& ~ ~\text{if}~d=LM_0;
~ ~ ~ ~ ~ ~ ~ ~ ~\\
&\frac{d}{d-rLM_0}& ~ ~\text{if}~rLM_0\!<\!d\!<\! LM_0.\\
\end{aligned} \right. \label{eqn:def-f}
\end{equation}
\end{thm}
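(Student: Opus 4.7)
The plan is to combine the recursive bound of Theorem \ref{thm-recu-M-K} with the base-case bounds from Theorems \ref{thm-bnd-S-M0} and \ref{thm-pltk-bnd}. Since $d\leq LM$, we have $M\geq M_0=\lceil d/L\rceil$, so the parameter $M$ can be iteratively reduced down to $M_0$ by peeling off one coordinate at a time.

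First I would observe that by \eqref{BS-to-S} we have $S_q(L,M,d)=\tilde S_q(L,M,q^L,d)$. Then I would apply Theorem \ref{thm-recu-M-K} exactly $M-M_0$ times. Each application simultaneously decreases the codeword-size parameter and the support-size parameter by one: starting from $\tilde S_q(L,M,q^L,d)$, one step gives $\lfloor\frac{q^L}{M}\tilde S_q(L,M-1,q^L-1,d)\rfloor$, and after $k$ steps we arrive at
\begin{align*}
\left\lfloor\frac{q^L}{M}\left\lfloor\frac{q^L-1}{M-1}\cdots
\left\lfloor\frac{q^L-k+1}{M-k+1}\,\tilde S_q(L,M-k,q^L-k,d)\right\rfloor\cdots\right\rfloor\right\rfloor.
\end{align*}
Setting $k=M-M_0$ leaves the innermost quantity as $\tilde S_q(L,M_0,q^L-M+M_0,d)$, and the outermost denominator in the nested sequence becomes $M_0+1$, with numerator $q^L-M+M_0+1$, which matches the formula in the statement.

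For the base case I would bound $\tilde S_q(L,M_0,q^L-M+M_0,d)\leq \tilde S_q(L,M_0,q^L,d)=S_q(L,M_0,d)$ via \eqref{BS-to-S} and then split on cases according to $d$. If $d=LM_0$, Theorem \ref{thm-bnd-S-M0} gives $S_q(L,M_0,d)\leq\lfloor qM_0^{-1/L}\rfloor$. If $rLM_0<d<LM_0$, then since the codeword size in this base case is $M_0$, the Plotkin-like condition $rLM_0<d$ is satisfied and Theorem \ref{thm-pltk-bnd} applies to yield $S_q(L,M_0,d)\leq\frac{d}{d-rLM_0}$. In either case the innermost factor is exactly $f(L,M_0,d)$ as defined in \eqref{eqn:def-f}. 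Substituting this bound back into the nested floor expression produces precisely the claimed inequality.

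The only subtle points I anticipate are bookkeeping: (i) verifying that the hypothesis $M\geq\lceil d/L\rceil$ legitimizes applying Theorem \ref{thm-recu-M-K} at every intermediate step, since the condition $d\leq L(M-k)$ still holds as long as $M-k\geq M_0$; (ii) confirming that the indexing $q^L-M+M_0+1$ correctly labels the last numerator in the nested chain; and (iii) noting that the Plotkin hypothesis $rLM_0<d$ is built into the case split so that $f(L,M_0,d)$ is always well-defined. None of these should present a real difficulty, so the proof is essentially a clean induction combined with the two standalone base-case bounds.
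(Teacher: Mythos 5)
Your proposal is correct and follows essentially the same route as the paper: apply the recursive bound of Theorem \ref{thm-recu-M-K} exactly $M-M_0$ times starting from $S_q(L,M,d)=\tilde S_q(L,M,q^L,d)$, relax the innermost term to $S_q(L,M_0,d)$ via \eqref{BS-to-S}, and bound that by $f(L,M_0,d)$ using Theorem \ref{thm-bnd-S-M0} when $d=LM_0$ and Theorem \ref{thm-pltk-bnd} when $rLM_0<d<LM_0$. The bookkeeping points you flag (validity of each recursive step while $M-k\geq M_0$, and the indexing of the last numerator) are exactly the implicit checks in the paper's argument, so no gap remains.
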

\begin{proof}
Repeatedly using Theorem \ref{thm-recu-M-K} with $M-M_0$ times, we
obtain
\begin{align*}
\tilde{S}_q(L,M,q^L,d)&\leq
\left\lfloor\frac{q^L}{M}\left\lfloor\frac{q^L-1}{M-1}\cdots
\left\lfloor\frac{q^L-M+M_0+1}{M_0+1}\right.\right.\right.\\
&~~~~\tilde{S}_q(L,M_0,q^L-M+M_0,d)
\bigg{\rfloor}\cdots\bigg{\rfloor}\bigg{\rfloor}.\end{align*}
Moreover, according to \eqref{BS-to-S}, we have
\begin{align*}S_q(L,M,d)=\tilde{S}_q(L,M,q^L,d)\end{align*} and
\begin{align*}\tilde{S}_q(L,M_0,q^L-M+M_0,d)&\leq
S_q(L,M_0,q^L,d)\\&=S_q(L,M_0,d).\end{align*} Combining the above
three equations, we have
\begin{align}S_q(L,M,d)&\leq
\left\lfloor\frac{q^L}{M}\left\lfloor\frac{q^L-1}{M-1}\cdots
\left\lfloor\frac{q^L-M+M_0+1}{M_0+1}\right.\right.\right.\nonumber\\
&~~~~S_q(L,M_0,d)\bigg{\rfloor}\cdots\bigg{\rfloor}\bigg{\rfloor}
.\label{eq2-Sglnt-bnd}\end{align}

Let $f(L,M_0,d)$ be defined as in \eqref{eqn:def-f}. By Theorem
\ref{thm-bnd-S-M0} and Theorem \ref{thm-pltk-bnd}, we have
\begin{align*}S_q(L,M_0,d)\leq
f(L,M_0,d).\end{align*} Combining this with \eqref{eq2-Sglnt-bnd},
we have \begin{align*}
S_q(L,M,d)&\leq
\left\lfloor\frac{q^L}{M}\left\lfloor\frac{q^L-1}{M-1}\cdots
\left\lfloor\frac{q^L-M+M_0+1}{M_0+1}\right.\right.\right.\\
&~~~~f(L,M_0,d) \bigg{\rfloor}\cdots\bigg{\rfloor},\end{align*}
which completes the proof.
\end{proof}

\begin{rem}\label{rem-Sglnt-bnd}
It is easy to see that
\begin{align*}&\left\lfloor\frac{q^L}{M}\left\lfloor\frac{q^L-1}{M-1}\cdots
\left\lfloor\frac{q^L-M+M_0+1}{M_0+1}f(L,M_0,d)
\right\rfloor\cdots\right\rfloor\right\rfloor\\
&\leq
\left(\prod_{k=0}^{M-M_0-1}\frac{q^L-k}{M-k}\right)f(L,M_0,d)\end{align*}
and $${q^L \choose
M}=\left(\prod_{k=0}^{M-M_0-1}\frac{q^L-k}{M-k}\right){q^L-M+M_0
\choose M_0}.$$ Hence, the bound in Theorem \ref{thm-Sglnt-bnd}
gives a bound on the code rate as
$$\frac{S_q(L,M,d)}{{q^L \choose
M}}\leq\frac{1}{{q^L-M+M_0 \choose M_0}}\cdot f(L,M_0,d),$$ where
$f(L,M_0,d)$ is defined as in \eqref{eqn:def-f}.
\end{rem}

Consider the redundancy of the codes. For any code $\mathcal
C\subseteq\mathbb A^L$ with constant codeword size $M$ and minimum
distance $d$ satisfying $rLM_0<d\leq LM_0$, where
$r=1-\frac{1}{q}$ and $M_0=\left\lceil\frac{d}{L}\right\rceil$, by
Remark \ref{rem-Sglnt-bnd}, we have
\begin{align*} S_q(L,M,d)\leq {q^L \choose M}
\frac{1}{{q^L-M+M_0 \choose M_0}}\cdot f(L,M_0,d).\end{align*}
Thus, the redundancy
\begin{align*} r(\mathcal C)&=\log_q{q^L\choose M}-\log_q|\mathcal
C|\\&\geq \log_q{q^L\choose M}-\log_q\left({q^L \choose M}
\frac{1}{{q^L-M+M_0 \choose M_0}}\cdot f(L,M_0,d)\right)\\
&=\log_q\frac{(q^L-M+M_0)!}{M_0!(q^L-M)!f(L,M_0,d)},\end{align*}
where $f(L,M_0,d)$ is defined as in \eqref{eqn:def-f}.

Clearly, Theorem \ref{thm-bnd-S-M0} is a special case of Theorem
\ref{thm-Sglnt-bnd} with $M=M_0=\lceil\frac{d}{L}\rceil$ and
$d=LM_0$. It is still not known whether the bound given in Theorem
\ref{thm-Sglnt-bnd} is achievable for the case that $d<LM_0$ or
$M>M_0$.

\section{Constructions of Sequence-Subset Codes}
In this section, we give some constructions of sequence-subset
codes. As in Section \uppercase\expandafter{\romannumeral 3}, we
will always assume that $\mathbb A$ is an alphabet of size $q$.

\subsection{Construction of Optimal Codes}
The following construction gives a family of optimal
sequence-subset code (with respect to code size) for the special
case that $d=LM$ and $M^{\frac{1}{L}}$ is an integer.

\textbf{Construction 1}: Suppose $d=LM$, $M^{\frac{1}{L}}<q$ is an
integer and $N=\left\lfloor qM^{-\frac{1}{L}}\right\rfloor$.
Partition $\mathbb A$ into $N$ mutually disjoint subsets, say
$W_1,W_2,\cdots,W_N$, such that $|W_i|\geq M^{\frac{1}{L}}$,
$i=1,2,\cdots,N$. For each $i\in[N]$, pick a subset
$\textbf{\text{X}}_i=\{\textbf{\text{x}}_{i,1},
\textbf{\text{x}}_{i,2}, \cdots,
\textbf{\text{x}}_{i,M}\}\subseteq W_i^L$, and let $\mathcal
C=\{\textbf{\text{X}}_i;i\in[N]\}$.

\begin{thm}\label{thm-extv-S-M0}
The code $\mathcal C$ obtained by Construction 1 is an
$(L,M,N,d)_q$ sequence-subset code.
\end{thm}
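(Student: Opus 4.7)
The plan is to verify each of the four parameters in the tuple $(L,M,N,d)_q$ for the code $\mathcal C$ produced by Construction 1. The sequence length $L$ and constant codeword size $M$ are visible directly from the construction, since each $\textbf{\text{X}}_i \subseteq W_i^L \subseteq \mathbb A^L$ and $|\textbf{\text{X}}_i|=M$ by choice. First, I must verify that the construction is actually feasible, i.e., that a partition of $\mathbb A$ into $N$ parts each of size at least $M^{\frac{1}{L}}$ exists, and that each such part admits an $M$-subset of $W_i^L$. The former holds because $N\cdot M^{\frac{1}{L}} = \lfloor qM^{-\frac{1}{L}}\rfloor \cdot M^{\frac{1}{L}} \leq q$, and the latter holds because $|W_i^L|=|W_i|^L\geq M$.

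The heart of the argument is the claim that $d_{\text{S}}(\mathcal C)=LM$. For any distinct $i_1,i_2\in[N]$, I will use the disjointness $W_{i_1}\cap W_{i_2}=\emptyset$ to show that every pair of sequences from different codewords is Hamming-distant $L$. Specifically, for any $\textbf{\text{x}}\in\textbf{\text{X}}_{i_1}\subseteq W_{i_1}^L$ and $\textbf{\text{x}}'\in\textbf{\text{X}}_{i_2}\subseteq W_{i_2}^L$ and any $\ell\in[L]$, one has $\textbf{\text{x}}(\ell)\in W_{i_1}$ and $\textbf{\text{x}}'(\ell)\in W_{i_2}$, so the coordinates differ, giving $d_{\text{H}}(\textbf{\text{x}},\textbf{\text{x}}')=L$. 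Since $|\textbf{\text{X}}_{i_1}|=|\textbf{\text{X}}_{i_2}|=M$, every injection $\chi:\textbf{\text{X}}_{i_1}\to\textbf{\text{X}}_{i_2}$ is a bijection, and by \eqref{eq-def-chi-d} we get $d_{\chi}(\textbf{\text{X}}_{i_1},\textbf{\text{X}}_{i_2})=LM$; hence $d_{\text{S}}(\textbf{\text{X}}_{i_1},\textbf{\text{X}}_{i_2})=LM$. Combined with the trivial upper bound $d_{\text{S}}(\mathcal C)\leq LM$ noted at the start of Section III, this yields $d_{\text{S}}(\mathcal C)=LM=d$.

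Finally, the distinctness of the $N$ codewords (which is needed to conclude $|\mathcal C|=N$) is an immediate byproduct of the same disjointness argument: pairwise disjoint codewords are in particular distinct. I do not anticipate a serious obstacle here; the argument is essentially a tightness certificate for the bound of Theorem \ref{thm-bnd-S-M0}, and the only delicate point is the existence of the partition with parts of size at least $M^{\frac{1}{L}}$, which is a short counting check.
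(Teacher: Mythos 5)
Your proposal is correct and follows essentially the same route as the paper's own proof: verify feasibility of the partition from $N M^{1/L}\leq q$ and $|W_i^L|=|W_i|^L\geq M$, then use the disjointness of the $W_i$ to show every cross-codeword Hamming distance equals $L$, so that $d_{\chi}(\textbf{X}_{i_1},\textbf{X}_{i_2})=LM$ for every bijection $\chi$ and hence $d_{\text{S}}(\mathcal C)=LM=d$. The only cosmetic difference is your appeal to the trivial upper bound $d_{\text{S}}(\mathcal C)\leq LM$, which is redundant since you have already computed $d_{\chi}=LM$ exactly for every injection.
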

\begin{proof}
Since $N=\left\lfloor qM^{-\frac{1}{L}}\right\rfloor$, we have
$N\leq qM^{-\frac{1}{L}}$, and hence $NM^{\frac{1}{L}}\leq q.$ The
set $\mathbb A$ can always be partitioned into
$W_1,W_2,\cdots,W_N$ satisfying the expected conditions. Moreover,
since $|W_i^L|\geq|M|$ for all $i\in[N]$, the subsets
$\textbf{\text{X}}_i=\{\textbf{\text{x}}_{i,1},
\textbf{\text{x}}_{i,2}, \cdots,
\textbf{\text{x}}_{i,M}\}\subseteq W_i^L$, and hence $\mathcal C$,
can always be constructed as in Construction 1.

Clearly, $\mathcal C\subseteq\mathcal P(\mathbb A^{L})$ is a
sequence-subset code with constant codeword size $M$ and
$|\mathcal C|=N=\left\lfloor qM^{-\frac{1}{L}}\right\rfloor$.
Moreover, for any distinct $i_1,i_2\in[N]$ and any
$j_1,j_2\in[M]$, since $W_1,W_1,\cdots,W_N$ are mutually disjoint,
$\textbf{x}_{i_1,j_1}\in W_{i_1}^L$ and $\textbf{x}_{i_2,j_2}\in
W_{i_2}^L$, it is easy to see that
$$d_{\text{H}}(\textbf{x}_{i_1,j_1},\textbf{x}_{i_2,j_2})=L.$$
By \eqref{eq-def-chi-d} and \eqref{eq-def-dst}, for any distinct
$i_1,i_2\in[N]$, we have
$$d_{\text{S}}(\textbf{X}_{i_1},\textbf{X}_{i_2})=LM=d,$$
which implies that $d_{\text{S}}(\mathcal C)=d$. Thus, $\mathcal
C$ is an $(L,M,N,d)_q$ sequence-subset code.
\end{proof}

\begin{exam} As an illustrative example of
Construction 1, we let $\mathbb A=\{0,1,\cdots,15\}$, $L=4$ and
$M=16$. Then $q=16$, $d=64$, $M^{\frac{1}{L}}=2$ and
$N=\left\lfloor qM^{-\frac{1}{L}}\right\rfloor=8$. We partition
$\mathbb A$ into $W_1,W_2,\cdots,W_8$, where
$W_i=\{2(i-1),2i-1\}$, $i=1,2,\cdots,8$. Then we can choose
$\textbf{X}_i=W_i^4$ and let $\mathcal
C=\{\textbf{X}_1,\textbf{X}_2,\cdots,\textbf{X}_8\}$. For example,
we have $\textbf{X}_1=\{0000$, $0001$, $0010$, $0011$, $0100$,
$0101$, $0110$, $0111$, $1000$, $1001$, $1010$, $1011$, $1100$,
$1101$, $1110$, $1111\}$ and $\textbf{X}_2=\{2222$, $2223$,
$2232$, $2233$, $2322$, $2323$, $2332$, $2333$, $3222$, $3223$,
$3232$, $3233$, $3322$, $3323$, $3332$, $3333\}$. Clearly,
$d_{\text{S}}(\textbf{X}_{1},\textbf{X}_{2})=64=d=LM$. In fact, it
is easy to verify that for all distinct $i,j\in\{1,2,\cdots,8\}$
$d_{\text{S}}(\textbf{X}_{i},\textbf{X}_{j})=64=d=LM$.
\end{exam}

Note that by Theorem \ref{thm-bnd-S-M0}, if $d=LM$, then
$S_q(L,M,d)\leq \left\lfloor qM^{-\frac{1}{L}}\right\rfloor,$ and
so the code $\mathcal C$ constructed in Theorem
\ref{thm-extv-S-M0} is optimal with respect to code size, and
hence we have the following corollary.

\begin{cor}\label{cor-extv-S-M0}
Suppose $d=LM$ and $M^{\frac{1}{L}}$ is an integer. We have
\begin{align*}S_q(L,M,d)=
\left\lfloor qM^{-\frac{1}{L}}\right\rfloor.\end{align*}
\end{cor}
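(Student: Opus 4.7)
The plan is to obtain the corollary as an immediate consequence of combining the upper bound already established in Theorem~\ref{thm-bnd-S-M0} with the construction given in Construction~1 and analyzed in Theorem~\ref{thm-extv-S-M0}. Because both the upper bound and the matching lower bound have already been proved in the earlier parts of this section, the corollary is essentially a packaging statement, so I would present it as a short two-part argument.

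First I would appeal to Theorem~\ref{thm-bnd-S-M0}: under the hypothesis $d = LM$, that theorem gives $S_q(L,M,d) \leq \lfloor q M^{-1/L} \rfloor$ without any integrality assumption on $M^{1/L}$. This handles the upper bound direction. Second, for the lower bound, I would invoke Theorem~\ref{thm-extv-S-M0}, noting that when $M^{1/L}$ is an integer with $M^{1/L} < q$, the hypotheses of Construction~1 are satisfied, so the code $\mathcal{C}$ produced there is an $(L, M, N, d)_q$ sequence-subset code with exactly $N = \lfloor q M^{-1/L} \rfloor$ codewords. Therefore $S_q(L,M,d) \geq \lfloor q M^{-1/L} \rfloor$, and combining the two inequalities yields equality.

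The one point I would pause on is the boundary case $M^{1/L} = q$, which is excluded by the strict inequality in Construction~1. In this case we have $M = q^L$ and $\lfloor q M^{-1/L} \rfloor = 1$, so the trivial code $\mathcal{C} = \{\mathbb{A}^L\}$, consisting of the single codeword of size $M = q^L$, is obviously an $(L, M, 1, LM)_q$ code (the minimum distance condition is vacuous), which matches the upper bound. Outside this boundary we also need $M^{1/L} \leq q$ for any codeword of size $M$ over an alphabet of size $q$ to exist at all, so no further cases need to be considered.

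The main obstacle here is really just bookkeeping rather than any new mathematical content: all the substance is in Theorem~\ref{thm-bnd-S-M0} (the extremal counting argument on the column-supports $W_{i,\ell}$ via the AM--GM inequality) and in Theorem~\ref{thm-extv-S-M0} (the straightforward verification that disjoint coordinate-alphabets force pairwise Hamming distance $L$ between sequences from different codewords). I would make sure to state explicitly that the integrality of $M^{1/L}$ is exactly what is needed so that $\mathbb{A}$ can be partitioned into $N$ parts of size at least $M^{1/L}$ with $|W_i^L| \geq M$, which is why the upper bound is tight precisely in this case.
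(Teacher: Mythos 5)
Your proposal is correct and matches the paper's own (implicit) argument exactly: the paper obtains the corollary by combining the upper bound of Theorem~\ref{thm-bnd-S-M0} with the matching construction of Theorem~\ref{thm-extv-S-M0}. Your additional handling of the boundary case $M^{1/L}=q$ (excluded by the strict inequality in Construction~1) is a careful touch the paper omits, but it does not change the substance of the argument.
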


For any fixed $m\in\{2,\cdots,q\}$, we can let $d=Lm^L$, and then
we have $M^{\frac{1}{L}}=\left(\frac{d}{L}\right)^{\frac{1}{L}}=m$
is an integer. In this case, $N=\left\lfloor
qM^{-\frac{1}{L}}\right\rfloor=\left\lfloor
\frac{q}{m}\right\rfloor$.

\vspace{3pt}\begin{rem} For the case that $d=LM$ but
$M^{\frac{1}{L}}$ is not an integer, let $N$ be any fixed positive
integer such that $\left\lfloor \frac{q}{N}\right\rfloor\geq
M^{\frac{1}{L}}$. Then we can partition $\mathbb A$ into $N$
mutually disjoint subsets $W_1,W_1,\cdots,W_{N}$ such that for
each $i\in[N]$, $|W_i|\geq M^{\frac{1}{L}}$. By the same
construction as in Theorem \ref{thm-extv-S-M0}, we can obtain an
$(L,M,N,d)_q$ sequence-subset code. Thus, we have $N^*\leq
S_q(L,M,d)\leq\left\lfloor qM^{-\frac{1}{L}}\right\rfloor$, where
$N^*=\max\left\{N; \left\lfloor \frac{q}{N}\right\rfloor\geq
M^{\frac{1}{L}}\right\}$.
\end{rem}

\subsection{Construction Based on Binary Codes}
In the rest of this section, to distinguish from sequence-subset
code $($i.e., a subset of the power set $\mathcal P(\mathbb A^L)$
of the set $\mathbb A^L)$, we will call any subset of $\mathbb
A^L$ a conventional code. An $(L,N,d)_q$ \emph{conventional code}
is a subset of $\mathbb A^L$ with $N$ codewords and
\textcolor{blue}{the} minimum Hamming distance $d~($recalling that
$q$ is the size of the alphabet $\mathbb A)$. Our following
constructions of sequence-subset codes are based on conventional
codes with respect to the Hamming distance.

The following construction is a modification of Construction 2 of
\cite{Lenz18}.

\textbf{Construction 2}: Let $\mathcal C_1=\{\textbf{\text{x}}_1,
\textbf{\text{x}}_2,\cdots, \textbf{\text{x}}_K\}\subseteq \mathbb
A^L$ be a conventional code over $\mathbb A$ and $\mathcal
C_2=\{\textbf{\text{w}}_1, \textbf{\text{w}}_2,\cdots,
\textbf{\text{w}}_N\}\subseteq\mathbb F_2^{K}$ be a conventional
binary code. For each $\textbf{\text{w}}_i\in\mathcal C_2$, let
$$\textbf{\text{X}}_i=\{\textbf{\text{x}}_j;
j\in\text{supp}(\textbf{\text{w}}_i)\},$$ where
$\text{supp}(\textbf{\text{w}}_i)=\{j\in[K];
\textbf{\text{w}}_i(j)\neq 0\}$ is the support of
$\textbf{\text{w}}_i$. Finally, let $$\mathcal
C=\{\textbf{\text{X}}_1, \textbf{\text{X}}_2,\cdots,
\textbf{\text{X}}_N\}.$$

Then we have the following theorem.
\begin{thm}\label{Cntrn-BCodes}
Suppose $\mathcal C_1$ has the minimum (Hamming) distance $d_1$
and $\mathcal C_2$ has the minimum (Hamming) distance $d_2$. Then
the code $\mathcal C$ obtained by Construction 2 has sequence
length $L$, code size $|\mathcal C|=N$, and the minimum
sequence-subset distance $d_{\text{S}}(\mathcal C)$ satisfies
\begin{align*}
d_{\text{S}}(\mathcal C)\geq d_1\cdot
\left\lceil\frac{d_2}{2}\right\rceil.
\end{align*}
\end{thm}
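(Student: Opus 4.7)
The plan is to verify the three claims in order. The sequence length $L$ is immediate from the construction. For the code size, I would observe that the map $\textbf{w}_i\mapsto\textbf{\text{X}}_i$ is injective: since $\mathcal C_1$ has distinct codewords, $\textbf{\text{X}}_i$ determines $\text{supp}(\textbf{\text{w}}_i)$, and a binary vector is determined by its support, so distinct $\textbf{\text{w}}_i$ yield distinct $\textbf{\text{X}}_i$. Hence $|\mathcal C|=N$.

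The core of the proof is the distance bound. Fix distinct $\textbf{\text{X}}_{i_1},\textbf{\text{X}}_{i_2}\in\mathcal C$ and write $S_1=\text{supp}(\textbf{\text{w}}_{i_1})$, $S_2=\text{supp}(\textbf{\text{w}}_{i_2})$. Because the codewords of $\mathcal C_1$ are distinct, $\textbf{\text{X}}_{i_1}\setminus\textbf{\text{X}}_{i_2}=\{\textbf{\text{x}}_j:j\in S_1\setminus S_2\}$ and similarly for the other difference; in particular, the cardinalities are $a:=|S_1\setminus S_2|$ and $b:=|S_2\setminus S_1|$. The Hamming distance of the binary codewords satisfies $a+b=d_{\text{H}}(\textbf{\text{w}}_{i_1},\textbf{\text{w}}_{i_2})\geq d_2$, so after swapping $i_1,i_2$ if necessary I may assume $a\leq b$, which gives $b\geq\lceil d_2/2\rceil$.

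By Corollary \ref{cor-dst},
\[
d_{\text{S}}(\textbf{\text{X}}_{i_1},\textbf{\text{X}}_{i_2})
= d_{\text{S}}(\textbf{\text{X}}_{i_1}\setminus\textbf{\text{X}}_{i_2},\textbf{\text{X}}_{i_2}\setminus\textbf{\text{X}}_{i_1}).
\]
For any injection $\chi:\textbf{\text{X}}_{i_1}\setminus\textbf{\text{X}}_{i_2}\to\textbf{\text{X}}_{i_2}\setminus\textbf{\text{X}}_{i_1}$, each pair $(\textbf{\text{x}},\chi(\textbf{\text{x}}))$ consists of two distinct codewords of $\mathcal C_1$, hence contributes at least $d_1$ to the sum in \eqref{eq-def-chi-d}. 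Combined with the offset term $L(b-a)$, and using the crude bound $L\geq d_1$ (any two distinct codewords of $\mathcal C_1$ differ in at most $L$ positions, so $d_1\leq L$), this yields
\[
d_\chi(\textbf{\text{X}}_{i_1}\setminus\textbf{\text{X}}_{i_2},\textbf{\text{X}}_{i_2}\setminus\textbf{\text{X}}_{i_1})
\geq a\,d_1 + L(b-a) \geq a\,d_1 + d_1(b-a) = b\,d_1.
\]
Taking the minimum over $\chi$ and using $b\geq\lceil d_2/2\rceil$ gives $d_{\text{S}}(\textbf{\text{X}}_{i_1},\textbf{\text{X}}_{i_2})\geq d_1\lceil d_2/2\rceil$, and minimizing over pairs $(i_1,i_2)$ yields the claimed bound on $d_{\text{S}}(\mathcal C)$.

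The only mildly delicate point is the asymmetric case $a<b$, where one must remember that the offset $L(|\textbf{\text{X}}_2|-|\textbf{\text{X}}_1|)$ in Definition \ref{def-dst} can be discounted against $d_1$ via $L\geq d_1$; otherwise the argument is a straightforward bookkeeping chase through the support structure of the outer binary code.
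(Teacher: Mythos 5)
Your proposal is correct and follows essentially the same route as the paper's proof: reduce to the set differences via Corollary \ref{cor-dst}, lower-bound each matched pair by $d_1$ and the offset term $L(b-a)$ by $d_1(b-a)$ using $L\geq d_1$, and bound the larger difference set by $\left\lceil d_2/2\right\rceil$ from the Hamming distance of the outer binary codewords. The only addition is your explicit justification of $|\mathcal C|=N$ via injectivity of $\textbf{w}_i\mapsto\textbf{X}_i$, which the paper states without argument.
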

\begin{proof}
Clearly, $\mathcal C$ has sequence length $L$ and code size
$|\mathcal C|=N$. It remains to prove that $d_{\text{S}}(\mathcal
C)\geq d_1\cdot \left\lceil\frac{d_2}{2}\right\rceil.$

Let $\textbf{\text{X}}_{i_1}$ and $\textbf{\text{X}}_{i_2}$ be any
distinct codewords of $\mathcal C$. We need to prove
$d_{\text{S}}(\textbf{\text{X}}_{i_1},
\textbf{\text{X}}_{i_2})\geq d_1\cdot
\left\lceil\frac{d_2}{2}\right\rceil.$

Without loss of generality, assume that
$|\textbf{\text{X}}_{i_1}|\leq|\textbf{\text{X}}_{i_2}|$. Then we
have $|\textbf{\text{X}}_{i_1}\backslash\textbf{\text{X}}_{i_2}|
\leq|\textbf{\text{X}}_{i_2}\backslash\textbf{\text{X}}_{i_1}|.$
To simplify notation, denote
$$\tilde{\textbf{\text{X}}}_{i_1}=
\textbf{\text{X}}_{i_1}\backslash\textbf{\text{X}}_{i_2}~~\text{and}~~
\tilde{\textbf{\text{X}}}_{i_2}=
\textbf{\text{X}}_{i_2}\backslash\textbf{\text{X}}_{i_1}.$$ For an
arbitrary injection
$\chi:\tilde{\textbf{\text{X}}}_{i_1}\rightarrow
\tilde{\textbf{\text{X}}}_{i_2}$, by \eqref{eq-def-chi-d},
\begin{align}\label{eq1-Cntrn-BCodes}
d_{\chi}(\tilde{\textbf{\text{X}}}_{i_1},
\tilde{\textbf{\text{X}}}_{i_2})\!=\!\!
\sum_{\textbf{\text{x}}\in\tilde{\textbf{\text{X}}}_{i_1}}
d_{\text{H}}(\textbf{\text{x}},
\chi(\textbf{\text{x}}))\!+\!L(|\tilde{\textbf{\text{X}}}_{i_2}|\!-
\!|\tilde{\textbf{\text{X}}}_{i_1}|).\end{align} Since $\mathcal
C_1$ has the minimum (Hamming) distance $d_1$ and by construction
of $\mathcal C$, $\textbf{\text{x}}$ and $\chi(\textbf{\text{x}})$
are distinct codeword in $\mathcal C_1$, so
$$\sum_{\textbf{\text{x}}\in\tilde{\textbf{\text{X}}}_{i_1}}
d_{\text{H}}(\textbf{\text{x}}, \chi(\textbf{\text{x}}))\geq
|\tilde{\textbf{\text{X}}}_{i_1}|\cdot d_1.$$ Moreover, since
$\mathcal C_1\subseteq \mathbb A^L$, then $L\geq d_1$. Hence,
\eqref{eq1-Cntrn-BCodes} implies that
\begin{align}\label{eq2-Cntrn-BCodes}
d_{\chi}(\tilde{\textbf{\text{X}}}_{i_1},
\tilde{\textbf{\text{X}}}_{i_2})&\geq
|\tilde{\textbf{\text{X}}}_{i_1}|\cdot
d_1+d_1(|\tilde{\textbf{\text{X}}}_{i_2}|-
|\tilde{\textbf{\text{X}}}_{i_1}|)\nonumber\\
&=d_1\cdot|\tilde{\textbf{\text{X}}}_{i_2}|\nonumber\\
&=d_1\cdot|\textbf{\text{X}}_{i_2}\backslash\textbf{\text{X}}_{i_1}|.
\end{align}

By the construction of $\mathcal C$,
$\textbf{\text{X}}_{i_1}=\{\textbf{\text{x}}_j;
j\in\text{supp}(\textbf{\text{w}}_{i_1})\}$ and
$\textbf{\text{X}}_{i_2}=\{\textbf{\text{x}}_j;
j\in\text{supp}(\textbf{\text{w}}_{i_2})\}$ for some distinct
$\textbf{\text{w}}_{i_1}, \textbf{\text{w}}_{i_2}\in\mathcal C_2$.
Then we have
\begin{align*}
|\textbf{\text{X}}_{i_1} \backslash\textbf{\text{X}}_{i_2}|
+|\textbf{\text{X}}_{i_2}\backslash\textbf{\text{X}}_{i_1}|
=d_{\text{H}}(\textbf{\text{w}}_{i_1},
\textbf{\text{w}}_{i_2})\geq d_2,\end{align*} where $d_2$ is the
minimum (Hamming) distance of $\mathcal C_2$. Note that
$|\textbf{\text{X}}_{i_1}\backslash\textbf{\text{X}}_{i_2}|
\leq|\textbf{\text{X}}_{i_2}\backslash\textbf{\text{X}}_{i_1}|.$
Then by the above equation, we have
$|\textbf{\text{X}}_{i_2}\backslash\textbf{\text{X}}_{i_1}|
\geq\frac{d_2}{2}.$ Moreover, since
$|\textbf{\text{X}}_{i_2}\backslash\textbf{\text{X}}_{i_1}|$ is an
integer, so
$$|\textbf{\text{X}}_{i_2}\backslash\textbf{\text{X}}_{i_1}|
\geq\left\lceil\frac{d_2}{2}\right\rceil.$$ Combining this with
\eqref{eq2-Cntrn-BCodes}, we have
$$d_{\chi}(\tilde{\textbf{\text{X}}}_{i_1},
\tilde{\textbf{\text{X}}}_{i_2})\geq
d_1\cdot\left\lceil\frac{d_2}{2}\right\rceil.$$ Note that
$\chi:\textbf{\text{X}}_{i_1}\backslash\textbf{\text{X}}_{i_2}
\rightarrow\textbf{\text{X}}_{i_2}\backslash\textbf{\text{X}}_{i_1}$
is an arbitrary injection. By Definition \ref{def-dst} and
Corollary \ref{cor-dst}, we have
\begin{align*}d_{\text{S}}(\textbf{\text{X}}_{i_1},\textbf{\text{X}}_{i_2})
=d_{\text{S}}(\textbf{\text{X}}_{i_1}
\backslash\textbf{\text{X}}_{i_2},\textbf{\text{X}}_{i_2}
\backslash\textbf{\text{X}}_{i_1})\geq d_1\cdot
\left\lceil\frac{d_2}{2}\right\rceil,\end{align*} which completes
the proof.
\end{proof}

\begin{exam}\label{exm-Con2} Let $\mathcal C_1$
be a binary $[5,3]$ linear code with generator matrix $G_1$ and
$\mathcal C_2$ be a binary linear $[8,3]$ code with generator
matrix $G_2$, where
\begin{eqnarray*}
G_1=\left(\begin{array}{ccccc}
1 & 0 & 0 & 1 & 0 \\
0 & 1 & 0 & 0 & 1 \\
0 & 0 & 1 & 1 & 1 \\
\end{array}\right)
\end{eqnarray*} and
\begin{eqnarray*}
G_2=\left(\begin{array}{cccccccc}
1 & 0 & 0 & 1 & 1 & 1 & 0 & 0 \\
0 & 1 & 0 & 0 & 0 & 1 & 1 & 1 \\
0 & 0 & 1 & 1 & 0 & 1 & 0 & 1 \\
\end{array}\right).
\end{eqnarray*}
We have $L=5$, $K=N=8$, the minimum distance of $\mathcal C_1$ is
$d_1=2$ and the minimum distance of $\mathcal C_2$ is $d_2=4$.
Denote $\mathcal C_1=\{\textbf{\text{x}}_1,
\textbf{\text{x}}_2,\cdots, \textbf{\text{x}}_8\}$, where
$\textbf{\text{x}}_1=00000$, $\textbf{\text{x}}_2=10010$,
$\textbf{\text{x}}_3=01001$, $\textbf{\text{x}}_4=00111$,
$\textbf{\text{x}}_5=11011$, $\textbf{\text{x}}_6=10101$,
$\textbf{\text{x}}_7=01110$, $\textbf{\text{x}}_8=11100$. Then by
Construction 2, for each codeword $\textbf{w}\in\mathcal C_2$, we
can obtain a subset $\textbf{X}_{\textbf{w}}=\{\textbf{x}_j;
j\in\text{supp}(\textbf{w})\}\subseteq\mathbb F_2^{5}$, and this
gives a code $\mathcal C=\{\textbf{X}_{\textbf{w}};
\textbf{w}\in\mathcal C_2\}$. For example, for
$\textbf{w}=10011100$, we have $\textbf{X}_{\textbf{w}}
=\{\textbf{x}_1,\textbf{x}_4,\textbf{x}_5,\textbf{x}_6\}
=\{00000,00111,11011,10101\}$; for $\textbf{w}'=11101110$, we have
$\textbf{X}_{\textbf{w}'}=\{\textbf{x}_1,\textbf{x}_2,\textbf{x}_3,
\textbf{x}_5,\textbf{x}_6,\textbf{x}_7\}=\{00000,
10010,01001,11011,10101,01110\}$. By Corollary \ref{cor-dst}, we
have $d_{\text{S}}(\textbf{\text{X}}_{\textbf{w}},
\textbf{\text{X}}_{\textbf{w}'})=
d_{\text{S}}(\textbf{\text{X}}'_{\textbf{w}},
\textbf{\text{X}}'_{\textbf{w}'})$, where
$\textbf{\text{X}}'_{\textbf{w}}=
\textbf{\text{X}}_{\textbf{w}}\backslash
\textbf{\text{X}}_{\textbf{w}'}=\{\textbf{x}_4\}=\{00111\}$ and
$\textbf{\text{X}}'_{\textbf{w}'}=
\textbf{\text{X}}_{\textbf{w}'}\backslash\textbf{\text{X}}_{\textbf{w}'}
=\{\textbf{x}_2,\textbf{x}_3,
\textbf{x}_7\}=\{10010,01001,01110\}$. Further, by
\eqref{eq-def-chi-d} and \eqref{eq-def-dst}, we can obtain
$d_{\text{S}}(\textbf{\text{X}}'_{\textbf{w}},
\textbf{\text{X}}'_{\textbf{w}'})=2+2L=12$. Note that
$|\textbf{\text{X}}_{\textbf{w}}|\neq|\textbf{\text{X}}_{\textbf{w}'}|$,
so the code $\mathcal C$ is not a constant-codeword-size code.
\end{exam}

\begin{rem}The code $\mathcal C$ obtained by Construction 2 may or
may not have constant codeword size, depending on whether
$\mathcal C_2$ is a constant weight binary code. In fact,
$\mathcal C$ is a constant-codeword-size code if and only if
$\mathcal C_2$ is a constant-weight binary code.
\end{rem}

To compare $|\mathcal C|$ with the bound in Theorem
\ref{thm-Sglnt-bnd}, we let $\mathcal C_1$ be an $(L,K,d_1)$ code
over $\mathbb A$ and $\mathcal C_2$ be a $(K,2\delta,M)$ binary
constant-weight code such that $L>d_1>\left(1-\frac{1}{q}\right)L$
and $\delta<q$. Then by construction 2 and Theorem 8, we can
obtain a sequence-subset code $\mathcal C$ with sequence length
$L$, code size $|\mathcal C|=|\mathcal C_2|$, constant codeword
size $M$, and minimum distance $d_{\text{S}}(\mathcal C)\geq
d=d_1\delta$. For $\mathcal C_1$, by the Plotking bound
\cite{Plotkin}, $$K\leq
K_0\triangleq\left\lfloor\frac{d_1}{d_1-rL}\right\rfloor,$$ where
$r=1-\frac{1}{q}$. Then we have
$$|\mathcal C_2|\leq A(K,2\delta,M)\leq A(K_0,2\delta,M),$$
where for any $n,d$ and $w$, $A(n,d,w)$ denotes the maximum number
of codewords of a binary constant weight code of length $n$,
minimum Hamming distance $d$ and constant weight $w$. By the
Johnson bound \cite{Johnson},
$$A(K_0,2\delta,M)\leq\left\lfloor\frac{K_0}{M}
\left\lfloor\frac{K_0-1}{M-1}\cdots
\left\lfloor\frac{K_0-M+\delta}{\delta}\right\rfloor\cdots
\right\rfloor\right\rfloor.$$ Then we have $$|\mathcal
C|=|\mathcal C_2|\leq\left\lfloor\frac{K_0}{M}
\left\lfloor\frac{K_0-1}{M-1}\cdots
\left\lfloor\frac{K_0-M+\delta}{\delta}\right\rfloor\cdots
\right\rfloor\right\rfloor.$$ On the other hand, since
$L>d_1>\left(1-\frac{1}{q}\right)L$ and $\delta<q$, we have
$M_0=\left\lceil\frac{d_1\delta}{L}\right\rceil=\delta$ and
$\left(1-\frac{1}{q}\right)LM_0<d_1\delta<LM_0$. According to
Theorem \ref{thm-Sglnt-bnd}, we have
\begin{align*}|\mathcal C|&\leq\left\lfloor\frac{q^L}{M}
\left\lfloor\frac{q^L-1}{M-1}\cdots
\left\lfloor\frac{q^L-M+M_0+1}{M_0+1}\frac{d_1\delta}{d_1\delta-rL\delta}
\right\rfloor\cdots\right\rfloor\right\rfloor\\
&=\left\lfloor\frac{q^L}{M} \left\lfloor\frac{q^L-1}{M-1}\cdots
\left\lfloor\frac{q^L-M+\delta+1}{\delta+1}K_0
\right\rfloor\cdots\right\rfloor\right\rfloor.\end{align*} Note
that
\begin{align*}&\left\lfloor\frac{q^L}{M} \left\lfloor\frac{q^L-1}{M-1}\cdots
\left\lfloor\frac{q^L-M+\delta+1}{\delta+1}K_0
\right\rfloor\cdots\right\rfloor\right\rfloor\\
&>\left\lfloor\frac{K_0}{M} \left\lfloor\frac{K_0-1}{M-1}\cdots
\left\lfloor\frac{K_0-M+\delta}{\delta}\right\rfloor\cdots
\right\rfloor\right\rfloor,\end{align*} which can be obtained from
the simple facts that
$K_0=\left\lfloor\frac{d_1}{d_1-rL}\right\rfloor<q^L$ and
$K_0\geq\frac{K_0-M+\delta}{\delta}~($noticing that $\delta=M_0
\leq M)$. Hence, the size of codes obtained from Construction 2
does not achieve the bound in Theorem \ref{thm-Sglnt-bnd}.

\subsection{Construction Based on $q$-ary Codes $(q\geq
2)$}

In this subsection, we present a construction based on $q$-ary
codes, where $q\geq 2$.

\textbf{Construction 3}: Let $\mathbb A$ and $\mathbb B$ be two
alphabets of size $q$ and $\tilde{q}$, respectively. Let $\mathcal
C_1$ be an $(L,M\tilde{q},d_1)_{q}$ conventional code over
$\mathbb A$ and $\mathcal C_2$ be an $(M,N,d_2)_{\tilde{q}}$
conventional code over $\mathbb B$. The $M\tilde{q}$ codewords of
$\mathcal C_1$ can be indexed as
$$\mathcal C_1=\{\textbf{x}_{i,j}: i\in[M], j\in\mathbb B\}.$$ From each
$\textbf{c}=(c_{1},c_{2},\cdots,c_{M})\in\mathcal C_2$, we can
obtain a subset
$$\textbf{X}_{\textbf{c}}=\{\textbf{x}_{1,c_1}, \textbf{x}_{2,c_2}, \cdots,
\textbf{x}_{M,c_M}\}\subseteq\mathcal C_1,$$ and finally, let
\begin{align}\label{cntrn-N-B}
\mathcal C=\{\textbf{X}_{\textbf{c}}; \textbf{c}\in\mathcal
C_2\}.\end{align}

Then $\mathcal C$ is a sequence-subset code over $\mathbb A$ and
we have the following theorem.
\begin{thm}\label{Cntrn-IdCodes}
The code $\mathcal C$ obtained by Construction 3 has sequence
length $L$, constant codeword size $M$, code size $|\mathcal
C|=N$, and minimum sequence-subset distance
\begin{align*}
d_{\text{S}}(\mathcal C)\geq d_1d_2.
\end{align*}
\end{thm}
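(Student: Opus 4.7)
My plan is to verify the four claimed properties in turn, with most of the work devoted to the distance bound. The sequence length $L$ is immediate because every codeword of $\mathcal C_1$ lies in $\mathbb A^L$, and hence every element of every $\textbf{X}_{\textbf{c}}$ does. To see that $|\textbf{X}_{\textbf{c}}|=M$ for every $\textbf{c}\in\mathcal C_2$, I would observe that the $M\tilde q$ codewords of $\mathcal C_1$ are distinct by assumption and are indexed bijectively by pairs $(i,j)\in[M]\times\mathbb B$; in particular, for fixed $\textbf{c}$, the sequences $\textbf{x}_{1,c_1},\ldots,\textbf{x}_{M,c_M}$ have pairwise distinct first indices and are therefore pairwise distinct. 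The equality $|\mathcal C|=N$ will then follow as a byproduct of the distance bound, since the bound implies $\textbf{X}_{\textbf{c}}\neq \textbf{X}_{\textbf{c}'}$ whenever $\textbf{c}\neq\textbf{c}'$.

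For the distance bound, the plan is to pick two distinct $\textbf{c},\textbf{c}'\in\mathcal C_2$, let $I=\{i\in[M]:c_i\neq c_i'\}$, and first pin down the symmetric differences exactly. Because the indexing of $\mathcal C_1$ by $(i,j)$ is injective, $\textbf{x}_{i,c_i}=\textbf{x}_{j,c_j'}$ forces $i=j$ and $c_i=c_j'$, so for $i\notin I$ the two sequences coincide while for $i\in I$ the sequence $\textbf{x}_{i,c_i}$ cannot appear among the $\textbf{x}_{j,c_j'}$. This gives $\textbf{X}_{\textbf{c}}\setminus\textbf{X}_{\textbf{c}'}=\{\textbf{x}_{i,c_i}:i\in I\}$ and $\textbf{X}_{\textbf{c}'}\setminus\textbf{X}_{\textbf{c}}=\{\textbf{x}_{i,c_i'}:i\in I\}$, both of size exactly $|I|\geq d_2$. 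By Corollary \ref{cor-dst}, computing $d_{\text{S}}(\textbf{X}_{\textbf{c}},\textbf{X}_{\textbf{c}'})$ therefore reduces to computing the sequence-subset distance between these two equal-size sets of codewords of $\mathcal C_1$.

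Since the two symmetric differences have equal cardinality, the padding term $L(|\textbf{X}_2|-|\textbf{X}_1|)$ in \eqref{eq-def-chi-d} vanishes and, for any bijection $\chi$ between them, $d_\chi$ is a sum of $|I|$ Hamming distances. Crucially, each summand is of the form $d_{\text{H}}(\textbf{x}_{i,c_i},\chi(\textbf{x}_{i,c_i}))$ where the two arguments lie in disjoint subsets of $\mathcal C_1$ and are therefore distinct codewords of $\mathcal C_1$, so each summand is at least $d_1$. Taking the minimum over $\chi$ then yields $d_{\text{S}}(\textbf{X}_{\textbf{c}},\textbf{X}_{\textbf{c}'})\geq |I|\,d_1\geq d_1 d_2$, as required.

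The only step that needs care is the explicit identification of the symmetric differences; once that is done, the argument is essentially automatic because equal sizes eliminate the $L$-penalty and the minimum distance of $\mathcal C_1$ controls every surviving term. I do not anticipate further obstacles.
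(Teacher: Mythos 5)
Your proposal is correct and follows essentially the same route as the paper's proof: reduce to the symmetric differences $\{\textbf{x}_{i,c_i}:i\in I\}$ and $\{\textbf{x}_{i,c_i'}:i\in I\}$ via Corollary \ref{cor-dst}, note they are equal-size disjoint subsets of $\mathcal C_1$ of cardinality $|I|=d_{\text{H}}(\textbf{c},\textbf{c}')\geq d_2$, and bound each Hamming summand below by $d_1$. Your explicit justification of the symmetric-difference identification (via injectivity of the indexing) and of $|\textbf{X}_{\textbf{c}}|=M$ and $|\mathcal C|=N$ fills in details the paper leaves as "easy to see," but the argument is the same.
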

\begin{proof}
From the construction it is easy to see that $\mathcal C$ has
sequence length $L$, constant codeword size $M$ and code size
$|\mathcal C|=N$. It remains to prove that $d_{\text{S}}(\mathcal
C)\geq d_1d_2$, that is, $d_{\text{S}}(\textbf{X}_{\textbf{c}},
\textbf{X}_{\textbf{c}'})\geq d_1d_2$ for any distinct
$\textbf{X}_{\textbf{c}}$ and $\textbf{X}_{\textbf{c}'}$ in
$\mathcal C$, where $\textbf{c}=(c_{1},c_{2},\cdots,c_{M})$ and
$\textbf{c}'=(c_{1}',c_{2}',\cdots,c_{M}')$ are any pair of
distinct codewords in $\mathcal C_2$.

Let $A$ be the set of all $i\in[M]$ such that $c_{i}\neq c_i'$.
Since $\mathcal C_2$ has the minimum (Hamming) distance $d_2$,
then
$$|A|=d_{\text{H}}(\textbf{c}, \textbf{c}')\geq d_2.$$ Denote
$$\tilde{\textbf{X}}_{\textbf{c}}=\{\textbf{x}_{i,c_i}; i\in
A\} ~~\text{and}~~
\tilde{\textbf{X}}_{\textbf{c}'}=\{\textbf{x}_{i,c_i'}; i\in
A\}.$$ Then by the construction, we have
$$\tilde{\textbf{X}}_{\textbf{c}}=\textbf{X}_{\textbf{c}}
\backslash\textbf{X}_{\textbf{c}'} ~~\text{and}~~
\tilde{\textbf{X}}_{\textbf{c}'}=\textbf{X}_{\textbf{c}'}
\backslash\textbf{X}_{\textbf{c}}.$$ By Corollary \ref{cor-dst},
it suffices to prove that
$d_{\text{S}}(\tilde{\textbf{X}}_{\textbf{c}},
\tilde{\textbf{X}}_{\textbf{c}'})\geq d_1d_2$.

Note that $|\tilde{\textbf{X}}_{\textbf{c}}|=
|\tilde{\textbf{X}}_{\textbf{c}'}|=|A|$ and
$\tilde{\textbf{X}}_{\textbf{c}}\cap
\tilde{\textbf{X}}_{\textbf{c}'}=\emptyset$. Then for any
injection $\chi:\tilde{\textbf{X}}_{\textbf{c}}\rightarrow
\tilde{\textbf{X}}_{\textbf{c}'}$, we have
\begin{align*}d_{\chi}(\tilde{\textbf{X}}_{\textbf{c}},
\tilde{\textbf{X}}_{\textbf{c}'})&=
\sum_{\textbf{x}\in\tilde{\textbf{X}}_{\textbf{c}}}d_{\text{H}}
(\textbf{x}, \chi(\textbf{x}))\\&\geq|A|\cdot d_1\\&\geq
d_1d_2,\end{align*} where the equality comes from
\eqref{eq-def-chi-d}, the first inequality comes from the
assumption that $\mathcal C_1$ has the minimum (Hamming) distance
$d_1$, and the second inequality comes from the fact that $|A|\geq
d_2$. By Definition \ref{def-dst},
$d_{\text{S}}(\tilde{\textbf{X}}_{\textbf{c}},
\tilde{\textbf{X}}_{\textbf{c}'})\geq d_1d_2$, and hence by
Corollary \ref{cor-dst}, $d_{\text{S}}(\textbf{X}_{\textbf{c}},
\textbf{X}_{\textbf{c}'})\geq d_1d_2$. Since
$\textbf{X}_{\textbf{c}}$ and $\textbf{X}_{\textbf{c}'}$ are any
pair of distinct codewords in $\mathcal C$, we have
$d_{\text{S}}(\mathcal C)\geq d_1d_2$, which completes the proof.
\end{proof}

\begin{exam}
For illustration of Construction 3, we give a simple example with
$M=4$ and $q=\tilde{q}=2$. Let $\mathcal C_1$ be the $[5,3]$
binary linear code given in Example \ref{exm-Con2} and $\mathcal
C_2=\{\textbf{c}_1, \textbf{c}_2, \textbf{c}_3, \textbf{c}_4\}$ be
a binary code, where $\textbf{c}_1=0000$, $\textbf{c}_2=0101$,
$\textbf{c}_3=1010$ and $\textbf{c}_4=1111$. The codewords of
$\mathcal C_1$ can be denoted as
$\textbf{x}_{i,j},i=1,2,3,4,j=0,1$ such that
$\textbf{x}_{1,0}=00000$, $\textbf{x}_{2,0}=01001$,
$\textbf{x}_{3,0}=10010$, $\textbf{x}_{4,0}=11011$,
$\textbf{x}_{1,1}=00111$, $\textbf{x}_{2,1}=01110$,
$\textbf{x}_{3,1}=10101$ and $\textbf{x}_{4,1}=11100$. Then by
Construction 3, we have $\mathcal C=\{\textbf{X}_{\textbf{c}_1},
\textbf{X}_{\textbf{c}_2}, \textbf{X}_{\textbf{c}_3},
\textbf{X}_{\textbf{c}_4}\}$, where
$\textbf{X}_{\textbf{c}_1}=\{\textbf{x}_{1,0}, \textbf{x}_{2,0},
\textbf{x}_{3,0}, \textbf{x}_{4,0}\}$,
$\textbf{X}_{\textbf{c}_2}=\{\textbf{x}_{1,0}, \textbf{x}_{2,1},
\textbf{x}_{3,0}, \textbf{x}_{4,1}\}$,
$\textbf{X}_{\textbf{c}_3}=\{\textbf{x}_{1,1}, \textbf{x}_{2,0},
\textbf{x}_{3,1}, \textbf{x}_{4,0}\}$ and
$\textbf{X}_{\textbf{c}_4}=\{\textbf{x}_{1,1}, \textbf{x}_{2,1},
\textbf{x}_{3,1}, \textbf{x}_{4,1}\}$. It is easy to see that
$d_2=2$ and $d_{\text{S}}(\textbf{X}_{\textbf{c}},
\textbf{X}_{\textbf{c}'})\geq d_1d_2=4$ for any distinct
$c,c'\in\mathcal C_2$. For example, by Corollary \ref{cor-dst},
$d_{\text{S}}(\textbf{X}_{\textbf{c}_1},
\textbf{X}_{\textbf{c}_2})=d_{\text{S}}(\{\textbf{x}_{2,0},
\textbf{x}_{4,0}\}, \{\textbf{x}_{2,1},\textbf{x}_{4,1}\})=6$.
\end{exam}

The following is a more general example of Construction 3.

\begin{exam}\label{exm-Con3}
Let $\mathcal C_1$ be an $[L,k,d_1]_q$ linear code such that the
first $k$ symbols of the codewords of $\mathcal C_1$ are the
information symbols. For any given integer $r$ such that $1\leq
r<k$, let $\tilde{q}=q^r$ and $M=q^{s}$, where $s=k-r$. Note that
there exists a bijection $\pi:[M]\rightarrow\mathbb F_q^{s}$.
Moreover, fixing a basis, each element of $\mathbb F_{q^r}$ can be
uniquely represented as a vector in $\mathbb F_q^r$, so we can
identify each element of $\mathbb F_{q^r}$ as a vector in $\mathbb
F_q^r$. Then for each $i\in[M]$ and each $j\in\mathbb F_{q^r}$, we
can let
\begin{align*}
&~\textbf{x}_{i,j}=(x_1, x_2, \cdots, x_L):\\
&(x_1, x_2, \cdots, x_s)=\pi(i) ~~\text{and}~~ (x_{s+1}, \cdots,
x_k)=j.\end{align*} Now, let $\mathcal C_2$ be an
$[M,K,d_2]_{q^r}$ linear code, where $K\in[M]$ is another design
parameter. Then for each
$\textbf{c}=(c_{1},c_{2},\cdots,c_{M})\in\mathcal C_2$, we can
obtain
$$\textbf{X}_{\textbf{c}}=\{\textbf{x}_{1,c_1}, \textbf{x}_{2,c_2}, \cdots,
\textbf{x}_{M,c_M}\}\subseteq\mathcal C_1,$$ that is, for each
$i\in[M]$, $\textbf{x}_{i,c_i}=(x_1, x_2, \cdots, x_L)$ such that
$$(x_1, x_2, \cdots, x_s)=\pi(i) ~~\text{and}~~ (x_{s+1}, \cdots,
x_k)=c_i.$$ Finally, we have
\begin{align*}
\mathcal C=\{\textbf{X}_{\textbf{c}}; \textbf{c}\in\mathcal
C_2\}.\end{align*} By Theorem \ref{Cntrn-IdCodes}, $\mathcal C$ is
a sequence-subset code of sequence length $L$, constant codeword
size $M=q^s$, code size $|\mathcal C|=|\mathcal C_2|=q^{rK}$, and
minimum sequence-subset distance $d_{\text{S}}(\mathcal C)\geq
d_1d_2.$

As a special case of Example \ref{exm-Con3}, suppose $q\geq L-1$
and $s\leq\frac{k}{2}$. We can let $\mathcal C_1$ be an
$[L,k,L-k+1]_q$ MDS code and $\mathcal C_2$ be a
$[q^s,K,q^s-K+1]_{q^r}$ MDS code. Then the minimum distance of
$\mathcal C$ satisfies $d_{\text{S}}(\mathcal C)\geq
(L-k+1)(q^s-K+1).$ This special case is essentially similar to the
method used in \cite{Grass15}.
\end{exam}

Let $\mathcal C$ be a sequence-subset code obtained by
Construction 3. To compare $|\mathcal C|$ with the bound in
Theorem \ref{thm-pltk-bnd}, we consider $\tilde{q}>q$,
$L>d_1>\left(1-\frac{1}{q}\right)\left(1-\frac{1}{\tilde{q}}\right)^{-1}L$
and $M>d_2>(1-\frac{1}{\tilde{q}})M$. Then
$L>d_1>\left(1-\frac{1}{q}\right)L$ and
$LM>d_1d_2>\left(1-\frac{1}{q}\right)LM$. By the Plotking bound
\cite{Plotkin}, we have
\begin{align}\label{eq1-com-Con3}M\tilde{q}=|\mathcal
C_1|\leq\frac{d_1}{d_1-rL},\end{align} where $r=1-\frac{1}{q},$
and
\begin{align}\label{eq2-com-Con3}
N=|\mathcal C_2|\leq\frac{d_2}{d_2-\tilde{r}M},\end{align} where
$\tilde{r}=1-\frac{1}{\tilde{q}},$ On the other hand, by Theorem
\ref{thm-pltk-bnd}, \begin{align}\label{eq3-com-Con3} |\mathcal
C|=N\leq\frac{d_1d_2}{d_1d_2-rLM}.\end{align} By
\eqref{eq1-com-Con3}, we have
$$\frac{rL}{d_1}>1-\frac{1}{M\tilde{q}}\geq 1-\frac{1}{\tilde{q}}=\tilde{r},$$
which implies that
$$1-\frac{\tilde{r}M}{d_2}>1-\frac{rLM}{d_1d_2},$$ and so
$$\frac{d_2}{d_2-\tilde{r}M}<\frac{d_1d_2}{d_1d_2-rLM}.$$ By
\eqref{eq2-com-Con3} and \eqref{eq3-com-Con3}, $|\mathcal C|$ does
not achieve the bound in Theorem \ref{thm-pltk-bnd}.

\subsection{Construction Based on Sequence Index}
In this subsection, if $\textbf{\text{x}}=(\textbf{\text{x}}(1),
\textbf{\text{x}}(2), \cdots, \textbf{\text{x}}(L))\in\mathbb A^L$
and $I=\{i_1,i_2,\cdots,i_m\}\subseteq[L]$ such that
$i_1<i_2<\cdots<i_m$, then we denote
$\textbf{\text{x}}(I)=(\textbf{\text{x}}(i_1),
\textbf{\text{x}}(i_2), \cdots, \textbf{\text{x}}(i_m))$.

The construction given in this subsection is a slight improvement
of the Construction 1 of \cite{Lenz18}.

\textbf{Construction 4}: Let $\mathcal C_1=\{\textbf{s}_1,
\textbf{s}_2,\cdots, \textbf{s}_M\}\subseteq\mathbb A^{L_1}$ be a
conventional code over $\mathbb A$ with block length $L_1$ and
\textcolor{blue}{the} minimum (Hamming) distance $d_1$, and
$\mathcal C_2=\{\textbf{u}_1, \textbf{u}_2,\cdots,
\textbf{u}_N\}\subseteq\mathbb A^{d_1M}$ be a conventional code
over $\mathbb A$ with block length $d_1M$ and
\textcolor{blue}{the} minimum (Hamming) distance $d_2$. For each
$j\in[M]$, let
$$I_j=\{\ell\in\mathbb Z; (j-1)d_1<\ell\leq jd_1\}$$
and for each $i\in[N]$, let
\begin{align*}
\textbf{X}_{i}= \{\textbf{x}_{i,1}, \textbf{x}_{i,2}, \cdots,
\textbf{x}_{i,M}\}\end{align*} such that for each $j\in[M]$,
$$\textbf{x}_{i,j}=(\textbf{\text{s}}_j,
\textbf{u}_{i}(I_j)).$$ Finally, let
\begin{align}\label{eq-cntrn-Id}
\mathcal C=\{\textbf{\text{X}}_{i}; i\in[N]\}.
\end{align}

Then $\mathcal C$ is a sequence-subset code over $\mathbb A$. In
this construction, each codeword $\textbf{s}_j$ of $\mathcal C_1$
serves as an index of the sequence $\textbf{x}_{i,j}$ of the
codeword $\textbf{X}_{i}$, and $\textbf{u}_{i}(I_j)$ is the
information part of $\textbf{x}_{i,j}$. For this reason, this
construction is called a construction based on sequence index.
Moreover, we have the following theorem.
\begin{thm}\label{Cntrn-IdCodes}
The code $\mathcal C$ obtained by Construction 4 has sequence
length $L=L_1+d_1$, constant codeword size $M$, code size
$|\mathcal C|=N$, and minimum sequence-subset distance
\begin{align*}
d_{\text{S}}(\mathcal C)\geq d_2.
\end{align*}
\end{thm}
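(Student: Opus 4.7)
The plan is to verify the three basic parameters first and then prove the distance bound by reducing to a permutation-based analysis on $[M]$.

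For the parameters: Since each $\textbf{x}_{i,j} = (\textbf{s}_j, \textbf{u}_i(I_j))$ has length $L_1 + d_1$, the sequence length is $L = L_1 + d_1$. For fixed $i$, the $M$ sequences $\textbf{x}_{i,1}, \ldots, \textbf{x}_{i,M}$ are pairwise distinct because their first $L_1$ coordinates are the pairwise-distinct indices $\textbf{s}_1, \ldots, \textbf{s}_M \in \mathcal C_1$, so $|\textbf{X}_i| = M$. The same observation shows that for distinct $i, i' \in [N]$, any sequence $\textbf{x}_{i,j} \in \textbf{X}_i$ can only possibly equal $\textbf{x}_{i',j} \in \textbf{X}_{i'}$ (the indices match only at the same position $j$); since $\textbf{u}_i \neq \textbf{u}_{i'}$, there is a $j$ with $\textbf{u}_i(I_j) \neq \textbf{u}_{i'}(I_j)$, so $\textbf{X}_i \neq \textbf{X}_{i'}$ and $|\mathcal C| = N$.

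For the minimum distance, fix distinct $\textbf{X}_i, \textbf{X}_{i'} \in \mathcal C$. Both have size $M$, so every injection in Definition \ref{def-dst} is a bijection, which I can parameterize by a permutation $\sigma$ on $[M]$ via $\chi(\textbf{x}_{i,j}) = \textbf{x}_{i',\sigma(j)}$. The key inequality is that whenever $j \neq \sigma(j)$, the first $L_1$ coordinates of $\textbf{x}_{i,j}$ and $\textbf{x}_{i',\sigma(j)}$ are the distinct codewords $\textbf{s}_j, \textbf{s}_{\sigma(j)} \in \mathcal C_1$, so $d_{\text{H}}(\textbf{x}_{i,j}, \textbf{x}_{i',\sigma(j)}) \geq d_1$; whereas when $\sigma(j) = j$, the index parts cancel and $d_{\text{H}}(\textbf{x}_{i,j}, \textbf{x}_{i',j}) = d_{\text{H}}(\textbf{u}_i(I_j), \textbf{u}_{i'}(I_j))$.

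Letting $F = \{j \in [M] : \sigma(j) \neq j\}$, the above gives
\[
d_\chi(\textbf{X}_i, \textbf{X}_{i'}) \;\geq\; |F|\, d_1 \;+\; \sum_{j \notin F} d_{\text{H}}(\textbf{u}_i(I_j), \textbf{u}_{i'}(I_j)).
\]
Now I use $|I_j| = d_1$ to bound each term in $F$ by $d_1$, giving $\sum_{j \in F} d_{\text{H}}(\textbf{u}_i(I_j), \textbf{u}_{i'}(I_j)) \leq |F|\, d_1$, and combine with the minimum distance of $\mathcal C_2$:
\[
\sum_{j \notin F} d_{\text{H}}(\textbf{u}_i(I_j), \textbf{u}_{i'}(I_j)) \;\geq\; d_{\text{H}}(\textbf{u}_i, \textbf{u}_{i'}) - |F|\, d_1 \;\geq\; d_2 - |F|\, d_1.
\]
Substituting yields $d_\chi(\textbf{X}_i, \textbf{X}_{i'}) \geq |F|\, d_1 + (d_2 - |F|\, d_1) = d_2$ when $|F|\, d_1 \leq d_2$, and trivially $d_\chi \geq |F|\, d_1 \geq d_2$ otherwise. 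Minimizing over $\chi$ (equivalently $\sigma$) gives $d_{\text{S}}(\textbf{X}_i, \textbf{X}_{i'}) \geq d_2$, and hence $d_{\text{S}}(\mathcal C) \geq d_2$.

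The only subtle point, and the step I would be most careful with, is the bookkeeping that turns a ``misalignment'' $\sigma(j) \neq j$ from a loss (potentially destroying the $\mathcal C_2$-guarantee on the information bits) into a gain (the indices now contribute $d_1$). The trade-off is exactly captured by bounding $|F|\, d_1$ on both sides, which is why the length of each information block is deliberately set equal to $d_1$.
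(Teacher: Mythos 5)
Your proof is correct and follows essentially the same route as the paper's: parameterize the injections by permutations of $[M]$, split the index set into fixed points and non-fixed points, lower-bound the non-fixed-point terms by $|F|\,d_1$ via the index code $\mathcal C_1$, and lower-bound the fixed-point terms by $d_2-|F|\,d_1$ via the information code $\mathcal C_2$ together with the fact that each block $I_j$ has length exactly $d_1$. The only differences are cosmetic: you spell out the verification of the codeword size and code size (which the paper dismisses as clear) and add an unnecessary but harmless case split on the sign of $d_2-|F|\,d_1$.
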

\begin{proof}
Clearly, $\mathcal C$ has sequence length $L=L_1+d_1$, constant
codeword size $M$ and code size $|\mathcal C|=N$. It remains to
prove that $d_{\text{S}}(\mathcal C)\geq d_2.$

Let $i_1, i_2\in[N]$ be any two distinct elements of $[N]$, we
need to prove that $d_{\text{S}}(\textbf{X}_{i_1},
\textbf{X}_{i_2})\geq d_2,$ where $\textbf{X}_{i_1}=
\{\textbf{x}_{i_1,1}, \textbf{x}_{i_1,2}, \cdots,
\textbf{x}_{i_1,M}\}$ and $\textbf{X}_{i_2}= \{\textbf{x}_{i_2,1},
\textbf{x}_{i_2,2}, \cdots, \textbf{x}_{i_2,M}\}$. For any
permutation\footnote{Note that any bijection between
$\textbf{X}_{i_1}$ and $\textbf{X}_{i_2}$ can be uniquely
represented by a permutation on the index set $[M]$, so when
applying \eqref{eq-def-chi-d} to the pair
$\{\textbf{X}_{i_1},\textbf{X}_{i_2}\}$, we can use permutations
on $[M]$ to replace bijections between $\textbf{X}_{i_1}$ and
$\textbf{\text{X}}_{i_2}$.} $\chi:[M] \rightarrow[M]$, let
$$\mathcal N=\{j\in[M];\chi(j)=j\}$$ and
$$\tilde{\mathcal N}=\{j\in[M];\chi(j)
\neq j\}.$$ Then $\mathcal N\cap\tilde{\mathcal N}=\emptyset$ and
$\mathcal N\cup\tilde{\mathcal N}=[M]$. Moreover, by
\eqref{eq-def-chi-d}, we have
\begin{align}\label{eq1-Cntrn-IdCodes}
d_{\chi}(\textbf{X}_{i_1}, \textbf{X}_{i_2})&=\!
\sum_{j=1}^Md_{\text{H}}(\textbf{x}_{i_1,j},
\textbf{x}_{i_2,\chi(j)})\nonumber\\
&=\!\sum_{j\in\mathcal N}d_{\text{H}}(\textbf{x}_{i_1,j},
\textbf{x}_{i_2,\chi(j)})+\!\sum_{j\in\tilde{\mathcal
N}}d_{\text{H}}(\textbf{x}_{i_1,j},
\textbf{x}_{i_2,\chi(j)}\nonumber\\
&=\!\sum_{j\in\mathcal N}d_{\text{H}}(\textbf{x}_{i_1,j},
\textbf{x}_{i_2,j})+\!\sum_{j\in\tilde{\mathcal
N}}d_{\text{H}}(\textbf{x}_{i_1,j},
\textbf{x}_{i_2,\chi(j)}.\end{align} We will estimate the two
terms of the right side of Equation \eqref{eq1-Cntrn-IdCodes}
separately.

First, by the construction, we have
\begin{align*}\sum_{j=1}^M\!d_{\text{H}}(\textbf{\text{x}}_{i_1,j},
\textbf{\text{x}}_{i_2,j})
&=\sum_{j=1}^M\!d_{\text{H}}(\textbf{\text{u}}_{i_1}(I_j),
\textbf{\text{u}}_{i_2}(I_j))\\&=d_{\text{H}}(\textbf{\text{u}}_{i_1},
\textbf{\text{u}}_{i_2})=d_2.\end{align*} Moreover, since for each
$i\in[N]$ and $j\in[M]$, $\textbf{\text{u}}_{i}(I_j)$ has length
$d_1$, then again by construction of $\mathcal C$, we have
\begin{align*}d_{\text{H}}(\textbf{\text{x}}_{i_1,j},
\textbf{\text{x}}_{i_2,j})
=d_{\text{H}}(\textbf{\text{u}}_{i_1}(I_j),
\textbf{\text{u}}_{i_2}(I_j))\leq d_1.\end{align*} Hence, we
obtain
\begin{align*}\sum_{j\in\mathcal N}
\!d_{\text{H}}(\textbf{\text{x}}_{i_1,j},
\textbf{\text{x}}_{i_2,j})\!
&=\!\sum_{j=1}^M\!d_{\text{H}}(\textbf{\text{x}}_{i_1,j},
\textbf{\text{x}}_{i_2,j})-\!\sum_{j\in\tilde{\mathcal
N}}\!d_{\text{H}}(\textbf{\text{x}}_{i_1,j},
\textbf{\text{x}}_{i_2,j})\\
&=\!\sum_{j=1}^M\!d_{\text{H}}(\textbf{\text{u}}_{i_1}(I_j),
\textbf{\text{u}}_{i_2}(I_j))\\&~ ~ ~ -\!\sum_{j\in\tilde{\mathcal
N}}\!d_{\text{H}}(\textbf{\text{u}}_{i_1}(I_j),
\textbf{\text{u}}_{i_2}(I_j))\\&\geq d_2-|\tilde{\mathcal N}|\cdot
d_1.\end{align*}

Second, since $\mathcal C_1$ has the minimum (Hamming) distance
$d_1$, then by construction of $\mathcal C$, we have
\begin{align*}\sum_{j\in\tilde{\mathcal
N}}d_{\text{H}}(\textbf{\text{x}}_{i_1,j},
\textbf{\text{x}}_{i_2,\chi(j)} \geq\!\sum_{j\in\tilde{\mathcal
N}}\!d_{\text{H}}(\textbf{\text{s}}_{j},
\textbf{\text{s}}_{\chi(j)})\geq|\tilde{\mathcal N}|\cdot
d_1.\end{align*}

Combining the above two inequalities with
\eqref{eq1-Cntrn-IdCodes}, we obtain
\begin{align*}
d_{\chi}(\textbf{\text{X}}_{i_1},
\textbf{\text{X}}_{i_2})&=\!\sum_{j\in\mathcal
N}d_{\text{H}}(\textbf{\text{x}}_{i_1,j},
\textbf{\text{x}}_{i_2,j})+\!\sum_{j\in\tilde{\mathcal
N}}d_{\text{H}}(\textbf{\text{x}}_{i_1,j},
\textbf{\text{x}}_{i_2,\chi(j)}\\&\geq \left(d_2-|\tilde{\mathcal
N}|\cdot d_1\right)+|\tilde{\mathcal N}|\cdot
d_1\\&=d_2.\end{align*}

Since $\chi:[M]\rightarrow [M]$ is an arbitrary bijection, then by
Definition \ref{def-dst}, we have
\begin{align*}d_{\text{S}}(\textbf{\text{X}}_{i_1},
\textbf{\text{X}}_{i_2})\geq d_2.\end{align*} Moreover, since
$i_1$ and $i_2$ are any two distinct elements of $[N]$, so we have
\begin{align*} d_{\text{S}}(\mathcal C)\geq d_2,
\end{align*} which completes the proof.
\end{proof}

\begin{exam}\label{exam-Cons4} Let $\mathbb
A=\mathbb F_2$, $\mathcal C_1=\{\textbf{s}_1, \textbf{s}_2\}$ and
$\mathcal C_2=\{\textbf{u}_1, \textbf{u}_2, \textbf{u}_3\}$, where
$\textbf{s}_1=0000$, $\textbf{s}_2=1111$, $\textbf{u}_1=00000000$,
$\textbf{u}_2=11111000$ and $\textbf{u}_3=01010111$. We can check
that $L_1=d_1=4$, $M=2$, $N=3$ and $d_2=5$. We can divide each
$\textbf{u}_i$ into $M=2$ segments, each of length $L_1=4$, and
denote each $\textbf{u}_i=\textbf{u}_{i,1}\textbf{u}_{i,2}$. For
example, $\textbf{u}_{3,1}=0101$ and $\textbf{u}_{3,2}=0111$. By
Construction 4,
we can obtain $\mathcal C=\{\textbf{X}_1, \textbf{X}_2,
\textbf{X}_3\}$, where
$\textbf{X}_1=\{\textbf{s}_1\textbf{u}_{1,1},
\textbf{s}_2\textbf{u}_{1,2}\}=\{00000000,11110000\}$,
$\textbf{X}_2=\{\textbf{s}_1\textbf{u}_{2,1},
\textbf{s}_2\textbf{u}_{2,2}\}=\{00001111,11111000\}$ and
$\textbf{X}_3=\{\textbf{s}_1\textbf{u}_{3,1},
\textbf{s}_2\textbf{u}_{3,2}\}=\{00000101,11110111\}$. It is easy
to verify that $d_{\text{S}}(\mathcal C)=5=d_2$.
\end{exam}

In Construction 1 of \cite{Lenz18}, each sequence
$\textbf{x}_{i,j}=(\textbf{\text{s}}_j, \textbf{u}_{i,j})$ such
that each $\textbf{u}_{i,j}$ is viewed as an element of the field
$\mathbb F_{q^{L-\lceil\log M\rceil}}$ and
$(\textbf{u}_{i,1},\cdots,\textbf{u}_{i,M})$ is a codeword of an
MDS code over $\mathbb F_{q^{L-\lceil\log M\rceil}}$. In
comparison, our construction uses codes $($i.e., $\mathcal C_2)$
of length $d_1M$ over $\mathbb F_{q}$ (rather than its extension
field), and $\textbf{u}_{i,1},\cdots,\textbf{u}_{i,M}$ are
obtained by dividing each codeword of $\mathcal C_2$ into $M$
segments of length $d_1$, which allows us to construct $\mathcal
C_2$ with greater sequence-subset distance. For example, suppose
$q=2$, $d_1=8$ $M=10$ and $N=2^{48}$. Then by Construction 1 of
\cite{Lenz18}, we need a $[10,6]$ MDS codes over the field
$\mathbb F_{2^8}$, which has minimum distance $5$. In comparison,
by our construction, we can let $\mathcal C_2$ be a $[80,48]$
linear code over $\mathbb F_2$ with minimum distance
$d_2=10~($e.g., see \cite{Grassl}$)$. By Theorem 10, the
corresponding sequence-subset code has a greater minimum distance.

Construction 4 can be extended to the following construction.

\textbf{Construction 4$'$}: Let $\mathcal C_1$, $\mathcal C_2$ and
$I_j, j\in[M],$ be the same as in Construction 4. Let $n$ be a
given positive integer. For each $n$-tuple
$\textbf{\text{i}}=(i_1,i_2,\cdots,i_{n})\in[N]^{n}$, let
\begin{align*}
\textbf{\text{X}}_{\textbf{\text{i}}}=
\{\textbf{\text{x}}_{\textbf{\text{i}},1},
\textbf{\text{x}}_{\textbf{\text{i}},2}, \cdots,
\textbf{\text{x}}_{\textbf{\text{i}},M}\}\end{align*} such that
for each $j\in[M]$,
$$\textbf{\text{x}}_{\textbf{\text{i}},j}=(\textbf{\text{s}}_j,
\textbf{\text{u}}_{i_1}\!(I_j), \cdots\!,
\textbf{\text{u}}_{i_{n}}\!(I_j)).$$ Finally, let
\begin{align*}
\tilde{\mathcal C}=\{\textbf{\text{X}}_{\textbf{\text{i}}};
\textbf{\text{i}}=(i_1,i_2,\cdots,i_{n})\in[N]^{n}\}.\end{align*}

Then we have the following theorem.
\begin{thm}
The code $\tilde{\mathcal C}$ obtained by Construction 4$'$ has
sequence length $L=L_1+nd_1$, constant codeword size $M$ and code
size $|\tilde{\mathcal C}|=N^{n}$, and minimum sequence-subset
distance
\begin{align*}
d_{\text{S}}(\tilde{\mathcal C})\geq d_2.
\end{align*}
\end{thm}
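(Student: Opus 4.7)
The plan is to adapt the proof of Theorem 10 almost verbatim, with a single additional observation that reduces the multi-segment information case to the one-segment case. First I would read off the three structural parameters directly from the construction: each sequence $\textbf{\text{x}}_{\textbf{\text{i}},j}$ has length $L_1+nd_1$ and each codeword $\textbf{\text{X}}_{\textbf{\text{i}}}$ has exactly $M$ sequences by definition, while $|\tilde{\mathcal C}|=N^n$ will follow a posteriori from the distance bound together with the fact that $d_2 > 0$, which forces the map $\textbf{\text{i}} \mapsto \textbf{\text{X}}_{\textbf{\text{i}}}$ to be injective.

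For the distance claim, fix distinct tuples $\textbf{\text{i}} = (i_1,\ldots,i_n)$ and $\textbf{\text{i}}' = (i_1',\ldots,i_n')$ in $[N]^n$ and pick any $\ell^* \in [n]$ with $i_{\ell^*} \neq i_{\ell^*}'$, so that $d_{\text{H}}(\textbf{\text{u}}_{i_{\ell^*}}, \textbf{\text{u}}_{i_{\ell^*}'}) \geq d_2$. For an arbitrary bijection $\chi:[M]\to[M]$, I would decompose the pointwise Hamming distance as
\[d_{\text{H}}(\textbf{\text{x}}_{\textbf{\text{i}},j}, \textbf{\text{x}}_{\textbf{\text{i}}',\chi(j)}) = d_{\text{H}}(\textbf{\text{s}}_j,\textbf{\text{s}}_{\chi(j)}) + \sum_{\ell=1}^n d_{\text{H}}(\textbf{\text{u}}_{i_\ell}(I_j), \textbf{\text{u}}_{i_\ell'}(I_{\chi(j)})),\]
and discard every information-part summand except the one with $\ell = \ell^*$; this is legal since each discarded term is non-negative.

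With those terms dropped, the resulting lower bound on $d_\chi(\textbf{\text{X}}_{\textbf{\text{i}}}, \textbf{\text{X}}_{\textbf{\text{i}}'})$ has exactly the same form as the quantity analyzed in the proof of Theorem 10, with $\textbf{\text{u}}_{i_{\ell^*}}$ and $\textbf{\text{u}}_{i_{\ell^*}'}$ playing the roles of $\textbf{\text{u}}_{i_1}$ and $\textbf{\text{u}}_{i_2}$ there. I would therefore repeat that argument: split $[M]$ into the fixed-point set $\mathcal N$ of $\chi$ and its complement $\tilde{\mathcal N}$; on $\mathcal N$ the retained info terms sum to at least $d_{\text{H}}(\textbf{\text{u}}_{i_{\ell^*}}, \textbf{\text{u}}_{i_{\ell^*}'}) - |\tilde{\mathcal N}|d_1 \geq d_2 - |\tilde{\mathcal N}|d_1$, while on $\tilde{\mathcal N}$ the index terms contribute at least $|\tilde{\mathcal N}|d_1$ by the minimum distance of $\mathcal C_1$; the $|\tilde{\mathcal N}|d_1$ cancels and gives $d_\chi \geq d_2$. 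Taking the minimum over $\chi$ and applying Definition \ref{def-dst} then yields $d_{\text{S}}(\textbf{\text{X}}_{\textbf{\text{i}}}, \textbf{\text{X}}_{\textbf{\text{i}}'}) \geq d_2$.

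The main obstacle is resisting a tempting wrong move: if one keeps \emph{all} $n$ information segments in the lower bound simultaneously, the bookkeeping produces a spurious negative term of order $(n-1)|\tilde{\mathcal N}|d_1$ coming from the crude inequality $d_{\text{H}}(\textbf{\text{u}}_{i_\ell}(I_j), \textbf{\text{u}}_{i_\ell'}(I_j)) \leq d_1$, and the resulting bound drops below $d_2$ when $|\tilde{\mathcal N}|$ is large. The right idea is to use only one distinguishing coordinate $\ell^*$ in the lower bound and rely on the remaining segments merely through their non-negativity.
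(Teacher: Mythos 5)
Your proposal is correct and follows essentially the same route as the paper: the paper picks a coordinate block where the tuples differ (WLOG the first), projects each sequence onto the index part plus that one information segment, invokes the Theorem 10 argument on the projected sets, and concludes via the fact that passing to subsequences can only decrease the sequence-subset distance --- which is exactly your ``discard the non-negative terms for $\ell\neq\ell^*$'' step written in a different order. Your extra remarks (deducing $|\tilde{\mathcal C}|=N^n$ from injectivity forced by $d_2>0$, and the warning against keeping all $n$ segments in the lower bound) are sound refinements of what the paper states as ``clearly.''
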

\begin{proof}
Clearly, the code $\tilde{\mathcal C}$ has sequence length
$L=L_1+nd_1$, constant codeword size $M$ and code size
$|\tilde{\mathcal C}|=N^{n}$. To estimate the minimum distance of
$\tilde{\mathcal C}$, suppose
$(i_1,i_2,\cdots,i_{n})=\textbf{\text{i}}\neq
\textbf{\text{i}}'=(i_1',i_2',\cdots,i_{n}')\in[N]^{n}$. Without
loss of generality, assume $i_1\neq i_1'$. For each $j\in[M]$,
consider the subsequence
$\textbf{\text{x}}'_{\textbf{\text{i}},j}=(\textbf{\text{s}}_j,
\textbf{\text{u}}_{i_1}\!(I_j))$ of
$\textbf{\text{x}}_{\textbf{\text{i}},j}$ and the subsequence
$\textbf{\text{x}}'_{\textbf{\text{i}}',j}=(\textbf{\text{s}}_j,
\textbf{\text{u}}_{i_1'}\!(I_j))$ of
$\textbf{\text{x}}_{\textbf{\text{i}}',j}$. By the same
discussions as in the proof of Theorem \ref{Cntrn-IdCodes}, we can
prove that $d_{\text{S}}(\textbf{\text{X}}'_{\textbf{i}},
\textbf{\text{X}}'_{\textbf{i}'})\geq d_2,$ where
$\textbf{\text{X}}'_{\textbf{i}}=
\{\textbf{\text{x}}'_{\textbf{\text{i}},j}; j\in[M]\}$ and
$\textbf{\text{X}}'_{\textbf{i}'}=
\{\textbf{\text{x}}'_{\textbf{\text{i}}',j}; j\in[M]\}$. Since
sequences in $\textbf{\text{X}}'_{\textbf{i}}~($resp.
$\textbf{\text{X}}'_{\textbf{i}'})$ are subsequences of
$\textbf{\text{X}}_{\textbf{i}}~($resp.
$\textbf{\text{X}}_{\textbf{i}'})$, then it is easy to see that
$d_{\text{S}}(\textbf{\text{X}}_{\textbf{i}},
\textbf{\text{X}}_{\textbf{i}'})\geq
d_{\text{S}}(\textbf{\text{X}}'_{\textbf{i}},
\textbf{\text{X}}'_{\textbf{i}'})\geq d_2.$ Hence, the minimum
sequence-subset distance of $\tilde{\mathcal C}$ satisfies
$d_{\text{S}}(\tilde{\mathcal C})\geq d_2$.
\end{proof}

\begin{exam} Let $\mathcal C_1$ and $\mathcal
C_2$ be the same as in Example \ref{exam-Cons4}, and let $n=2$. As
in Example \ref{exam-Cons4}, we can denote each $\textbf{u}_i$ as
$\textbf{u}_i=\textbf{u}_{i,1}\textbf{u}_{i,2}$. By Construction
4$'$, for each $\textbf{i}=(i_1,i_2)\in\{1,2,3\}^2$, we have
$\textbf{X}_{\textbf{i}}=
\{\textbf{s}_1\textbf{u}_{i_1,1}\textbf{u}_{i_2,1},
\textbf{s}_2\textbf{u}_{i_1,2}\textbf{u}_{i_2,2}\}$. For example,
for $\textbf{i}=(1,1)$,
$\textbf{X}_{(1,1)}=\{000000000000,111100000000\}$; for
$\textbf{i}=(1,2)$,
$\textbf{X}_{(1,2)}=\{000000001111,111100001000\}$; for
$\textbf{i}=(3,2)$,
$\textbf{X}_{(3,2)}=\{000001011111,111101111000\}$. We can
estimate $d_{\text{S}}(\textbf{X}_{(1,2)}, \textbf{X}_{(3,2)})$ as
follows. Consider $d_{\text{S}}(\textbf{\text{X}}'_{(1,2)},
\textbf{\text{X}}'_{(3,2)})$, where
$\textbf{X}'_{(1,2)}=\{\textbf{s}_1\textbf{u}_{2,1},
\textbf{s}_2\textbf{u}_{2,2}\}=\{00000000,11110000\}$ is a
subsequence of $\textbf{\text{X}}_{(1,2)}$ and
$\textbf{X}'_{(3,2)}=\{\textbf{s}_1\textbf{u}_{3,1},
\textbf{s}_2\textbf{u}_{3,2}\}=\{00000101,11110111\}$ is a
subsequence of $\textbf{\text{X}}_{(3,2)}$. Note that
$\textbf{\text{X}}'_{(1,2)}$ and $\textbf{\text{X}}'_{(3,2)}$ are
two distinct codewords of the code constructed in Example
\ref{exam-Cons4}, so we have
$d_{\text{S}}(\textbf{\text{X}}'_{(1,2)},
\textbf{\text{X}}'_{(3,2)})\geq 5$, and hence
$d_{\text{S}}(\textbf{\text{X}}_{(1,2)},
\textbf{\text{X}}_{(3,2)})\geq
d_{\text{S}}(\textbf{\text{X}}'_{(1,2)},
\textbf{\text{X}}'_{(3,2)})\geq 5$. Similarly, we can verify that
$d_{\text{S}}(\textbf{X}_{\textbf{i}},
\textbf{X}_{\textbf{i}'})\geq 5$ for all distinct
$\textbf{i},\textbf{i}'\in\{1,2,3\}^2$. Thus, the minimum
sequence-subset distance of $\tilde{\mathcal C}$ satisfies
$d_{\text{S}}(\tilde{\mathcal C})\geq 5=d_2$, where
$\tilde{\mathcal
C}=\{\textbf{X}_{\textbf{i}};\textbf{i}\in\{1,2,3\}^2\}$. In fact,
we have $d_{\text{S}}(\tilde{\mathcal C})=5$ because we can verify
that $d_{\text{S}}(\textbf{\text{X}}_{(1,2)},
\textbf{\text{X}}_{(3,2)})=5$.
\end{exam}

\section{Conclusions and Discussions}

We introduced a new metric over the power set of the set of all
vectors over a finite alphabet, which generalizes the classical
Hamming distance and was used to establish a uniform framework to
design error-correcting codes for DNA storage channel. Some upper
bounds on the size of the sequence-subset codes were derived and
some constructions of such codes were proposed.

\subsection{Open Problems in Sequence-subset Codes}
It is still an open problem to analyze the tight upper bound on
the size of sequence-subset codes and design optimal codes for
general parameters of sequence length, codeword size and minimum
distance. Another interesting problem is how to design
sequence-subset codes for DNA storage channel that can be
efficiently encoded and decoded.

\subsection{Sequence-Subset Distance for Multisets}

The sequence-subset distance $($Definition \ref{def-dst}$)$ can be
directly generalized to multisets of sequences in $\mathbb A^L$.
The following is an example of sequence-subset distance between
multisets.

\begin{exam}
Suppose $\mathbb A=\{0,1\}$ and $L=4$. Consider
$\textbf{X}_1=\{\textbf{x}_1,\textbf{x}_2,\textbf{x}_3\}$ and
$\textbf{X}_2=\{\textbf{y}_1,\textbf{y}_2,\textbf{y}_3,\textbf{y}_4\}$,
where $\textbf{x}_1=\textbf{x}_2=0101$, $\textbf{x}_3=1011$,
$\textbf{y}_1=0111$, $\textbf{y}_2=1101$, and
$\textbf{y}_3=\textbf{y}_4=1001$. Let
$\chi_0:\textbf{X}_1\rightarrow\textbf{X}_2$ be such that
$\chi_0(\textbf{x}_i)=\textbf{y}_i, i=1,2,3$. Then we can obtain
$d_{\text{H}}(\textbf{x}_i,\chi_0(\textbf{x}_i))=1$ for all
$i\in\{1,2,3\}$, and by \eqref{eq-def-chi-d}, we have
$d_{\chi_0}(\textbf{X}_1,\textbf{X}_2)=7$. Note that
$d_{\text{H}}(\textbf{x}_i,\textbf{y}_j)\geq 1$ for all
$\textbf{x}_i\in\textbf{X}_1$ and $\textbf{y}_j\in\textbf{X}_2$.
Then we have $d_{\chi}(\textbf{X}_1,\textbf{X}_2)\geq 7$ for all
$\chi\in\mathscr X$, and hence by \eqref{eq-def-dst},
$d_{\text{S}}(\textbf{X}_1,\textbf{X}_2)
=d_{\chi_0}(\textbf{X}_1,\textbf{X}_2)=7$.
\end{exam}

By similar discussions as in Section
\uppercase\expandafter{\romannumeral 2}.A, we can prove that the
function $d_{\text{S}}(\textbf{X},\textbf{Y})$ is a distance
function, where $\textbf{X}$ and $\textbf{Y}$ are multisets of
sequences in $\mathbb A^L$. Using sequence-subset distance between
multisets, we can allow the output of the DNA storage channel to
be multisets (rather than sets) of sequences in $\mathbb A^L$.
Since in the real DNA storage, some DNA strands may have many
copies that are sequenced, multisets are more suitable than sets
for the output of the DNA storage channel.

Another advantage of using multisets as the output of the DNA
storage channel is that it captures the case when there are
$t~(>1)$ strands that are changed to the same strand by
substitution errors. Note that if using sets as the output of the
channel, then $t-1$ of these sequences have to be viewed as lost
sequences, which induces a larger sequence-subset distance between
the input and output.

To study the properties of codes over the space of all multisets
of $\mathbb A^L$ with sequence-subset distance is also a possible
research direction.

\appendices

\section{Proof of Lemma \ref{lem-dst}}
If $\textbf{\text{X}}_1\cap\textbf{\text{X}}_2=\emptyset$, the
claim is naturally true. In the following, we assume that
$\textbf{\text{X}}_1\cap\textbf{\text{X}}_2\neq\emptyset$.

First, we claim that for each $\chi\in\mathscr X$ such that
$d_{\text{S}}(\textbf{\text{X}}_1,\textbf{\text{X}}_2)=
d_{\chi}(\textbf{\text{X}}_1,\textbf{\text{X}}_2)$ and each
$\textbf{y}\in\textbf{X}_1\cap\textbf{X}_2$, there exists an
$\textbf{\text{x}}\in\textbf{\text{X}}_1$ such that
$\textbf{\text{y}}=\chi(\textbf{\text{x}})$. This can be proved,
by contradiction, as follows. Suppose there is a
$\textbf{\text{y}}\in\textbf{\text{X}}_1\cap\textbf{\text{X}}_2$
such that $\textbf{\text{y}}\neq\chi(\textbf{\text{x}}')$ for all
$\textbf{\text{x}}'\in\textbf{\text{X}}_1$. Since
$\textbf{y}\in\textbf{X}_1\cap\textbf{X}_2$, then we have
$\chi(\textbf{y})\neq\textbf{\text{y}}$, and hence we can let
$\chi':\textbf{\text{X}}_1\rightarrow\textbf{\text{X}}_2$ be such
that $\chi'(\textbf{\text{y}})=\textbf{\text{y}}$ and
$\chi'(\textbf{\text{x}}')=\chi(\textbf{\text{x}}')$ for all
$\textbf{\text{x}}'\in\textbf{\text{X}}_1\backslash\{\textbf{\text{y}}\}$
(see Fig. \ref{map-1} for an illustration). Note that
$d_{\text{H}}(\textbf{\text{y}},
\chi'(\textbf{\text{y}}))=0<d_{\text{H}}(\textbf{\text{y}},
\chi(\textbf{\text{y}}))$ and $d_{\text{H}}(\textbf{\text{x}}',
\chi'(\textbf{\text{x}}'))=d_{\text{H}}(\textbf{\text{x}}',
\chi(\textbf{\text{x}}'))$ for all
$\textbf{\text{x}}'\in\textbf{\text{X}}_1\backslash\{\textbf{\text{y}}\}.$
By \eqref{eq-def-chi-d}, we have
$d_{\chi'}(\textbf{\text{X}}_1,\textbf{\text{X}}_2)<
d_{\chi}(\textbf{\text{X}}_1,\textbf{\text{X}}_2)$, which
contradicts to \eqref{eq-def-dst}. Hence, by contradiction, for
each
$\textbf{\text{y}}\in\textbf{\text{X}}_1\cap\textbf{\text{X}}_2$,
there exists an $\textbf{\text{x}}\in\textbf{\text{X}}_1$ such
that $\textbf{\text{y}}=\chi(\textbf{\text{x}})$.

\renewcommand\figurename{Fig}
\begin{figure}[htbp]
\begin{center}
\includegraphics[height=3.0cm]{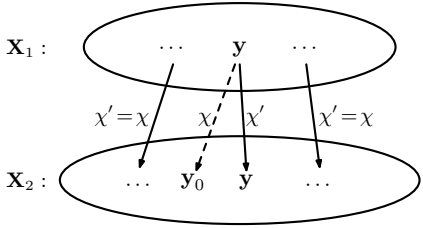}
\end{center}
\vspace{-0.2cm}\caption{An illustration of the injections in the
proof of Lemma \ref{lem-dst}: For the injection $\chi$, there
exists a
$\textbf{\text{y}}\in\textbf{\text{X}}_1\cap\textbf{\text{X}}_2$
such that $\chi(\textbf{\text{y}})\neq\textbf{\text{y}}$. Denote
$\chi(\textbf{\text{y}})=\textbf{\text{y}}_0$. Then we can modify
the injection $\chi$ to a different injection $\chi'$ by letting
$\chi'(\textbf{\text{y}})=\textbf{\text{y}}$, and the image of all
other elements of $\textbf{\text{X}}_1$ keep
unchanged.}\label{map-1}
\end{figure}
\renewcommand\figurename{Fig}
\begin{figure}[htbp]
\begin{center}
\includegraphics[height=3.0cm]{dnafig1.3}
\end{center}
\vspace{-0.2cm}\caption{An illustration of the bijections in the
proof of Lemma \ref{lem-dst}: For the bijection $\chi$, we have
$\chi(\textbf{\text{x}})=\textbf{\text{y}}$ and
$\chi(\textbf{\text{y}})=\textbf{\text{y}}_0\neq\textbf{\text{y}}$,
where
$\textbf{\text{y}}\in\textbf{\text{X}}_1\cap\textbf{\text{X}}_2$.
We modify the bijection $\chi$ to a different bijection $\chi'$ by
letting $\chi'(\textbf{\text{x}})=\textbf{\text{y}}_0$ and
$\chi'(\textbf{\text{y}})=\textbf{\text{y}}$, and the image of all
other elements of $\textbf{\text{X}}_1$ keeping
unchanged.}\label{map-2}
\end{figure}

Now, pick a $\chi\in\mathscr X$ such that
$d_{\text{S}}(\textbf{\text{X}}_1,\textbf{\text{X}}_2)=
d_{\chi}(\textbf{\text{X}}_1,\textbf{\text{X}}_2)$ and denote
$$\mathcal
N(\chi)=\{\textbf{\text{y}}'\in\textbf{\text{X}}_1\cap\textbf{\text{X}}_2;
\chi(\textbf{\text{y}}')\neq\textbf{\text{y}}'\}.$$ If $\mathcal
N(\chi)=\emptyset$, then by the definition of $\mathcal N(\chi)$,
$\chi(\textbf{\text{x}})=\textbf{\text{x}}$ for all
$\textbf{\text{x}}\in\textbf{\text{X}}_1\cap\textbf{\text{X}}_2$
and we can choose $\chi_0=\chi$. Otherwise, pick a
$\textbf{\text{y}}\in\mathcal N(\chi)$ and we have
$\chi(\textbf{\text{y}})=\textbf{\text{y}}_0$ for some
$\textbf{\text{y}}_0\in\textbf{\text{X}}_2\backslash\{\textbf{\text{y}}\}$.
Moreover, by previous discussion, there exists an
$\textbf{\text{x}}\in\textbf{\text{X}}_1$ such that
$\textbf{\text{y}}=\chi(\textbf{\text{x}})$. Then we can let
$\chi':\textbf{\text{X}}_1\rightarrow\textbf{\text{X}}_2$ be such
that $\chi'(\textbf{\text{x}})=\textbf{\text{y}}_0$,
$\chi'(\textbf{\text{y}})=\textbf{\text{y}}$ and
$\chi'(\textbf{\text{x}}')=\chi_0(\textbf{\text{x}}')$ for all
$\textbf{\text{x}}'\in\textbf{\text{X}}_1\backslash\{\textbf{\text{x}},
\textbf{\text{y}}\}$ (see Fig. \ref{map-2} for an illustration).
Note that
\begin{align*}d_{\text{H}}(\textbf{\text{x}},
\chi'(\textbf{\text{x}}))+d_{\text{H}}(\textbf{\text{y}},
\chi'(\textbf{\text{y}}))&=d_{\text{H}}(\textbf{\text{x}},
\textbf{\text{y}}_0)+d_{\text{H}}(\textbf{\text{y}},
\textbf{\text{y}})\\&=d_{\text{H}}(\textbf{\text{x}},
\textbf{\text{y}}_0)\\&\leq d_{\text{H}}(\textbf{\text{x}},
\textbf{\text{y}})+d_{\text{H}}(\textbf{\text{y}},
\textbf{\text{y}}_0)\\&=d_{\text{H}}(\textbf{\text{x}},
\chi(\textbf{\text{x}}))+d_{\text{H}}(\textbf{\text{y}},
\chi(\textbf{\text{y}}))\end{align*} and by construction of
$\chi'$,
$$d_{\text{H}}(\textbf{\text{x}}',
\chi'(\textbf{\text{x}}'))=d_{\text{H}}(\textbf{\text{x}}',
\chi(\textbf{\text{x}}')),\forall
\textbf{\text{x}}'\in\textbf{\text{X}}_1\backslash\{\textbf{\text{x}},
\textbf{\text{y}}\}.$$ By \eqref{eq-def-dst}, we have
$$d_{\text{S}}(\textbf{\text{X}}_1,\textbf{\text{X}}_2)=
d_{\chi}(\textbf{\text{X}}_1,\textbf{\text{X}}_2)=
d_{\chi'}(\textbf{\text{X}}_1,\textbf{\text{X}}_2).$$ Again by
construction of $\chi'$, we have $\mathcal N(\chi')=\mathcal
N(\chi)\backslash\{\textbf{\text{y}}\},$ and hence $$|\mathcal
N(\chi')|=|\mathcal N(\chi)|-1,$$ where
$$\mathcal
N(\chi')=\{\textbf{\text{y}}\in\textbf{\text{X}}_1\cap\textbf{\text{X}}_2;
\chi'(\textbf{\text{y}})\neq\textbf{\text{y}}\}.$$

If $\mathcal N(\chi')=\emptyset$, then
$\chi'(\textbf{\text{x}})=\textbf{\text{x}}$ for all
$\textbf{\text{x}}\in\textbf{\text{X}}_1\cap\textbf{\text{X}}_2$
and we can choose $\chi_0=\chi'$. Otherwise, by the same
discussion, we can obtain a
$\chi'':\textbf{\text{X}}_1\rightarrow\textbf{\text{X}}_2$ such
that $d_{\text{S}}(\textbf{\text{X}}_1,\textbf{\text{X}}_2)=
d_{\chi''}(\textbf{\text{X}}_1,\textbf{\text{X}}_2)$ and
$|\mathcal N(\chi'')|=|\mathcal N(\chi')|-1$, and so on. Noting
that $\mathcal
N(\chi'')\subseteq\textbf{\text{X}}_1\cap\textbf{\text{X}}_2$ is a
finite set, we can always find a $\chi_0\in\mathscr X$ such that
$d_{\text{S}}(\textbf{\text{X}}_1,\textbf{\text{X}}_2)=
d_{\chi_0}(\textbf{\text{X}}_1,\textbf{\text{X}}_2)$ and
$$\mathcal
N(\chi_0)=\{\textbf{\text{y}}\in\textbf{\text{X}}_1\cap\textbf{\text{X}}_2;
\chi_0(\textbf{\text{y}})\neq\textbf{\text{y}}\}=\emptyset.$$
Hence, we have $\chi_0(\textbf{\text{x}})=\textbf{\text{x}}$ for
all
$\textbf{\text{x}}\in\textbf{\text{X}}_1\cap\textbf{\text{X}}_2$,
which completes the proof.

\section{Proof of Lemma \ref{lem-dst-subset}}

It suffices to prove that if $\textbf{X}_2'\subseteq \textbf{X}_2$
and $|\textbf{X}_1|\leq|\textbf{X}_2'|=|\textbf{X}_2|-1$, then
$$d_{\text{S}}(\textbf{\text{X}}_1,\textbf{\text{X}}_2')\leq
d_{\text{S}}(\textbf{\text{X}}_1,\textbf{\text{X}}_2).$$

Without loss of generality, we can assume
\begin{align*}\textbf{X}_1&=\{\textbf{x}_1,
\cdots,\textbf{x}_n\}, \\
\textbf{X}_2'&=\{\textbf{y}_1,\cdots,\textbf{y}_n,
\textbf{y}_{n+1}, \cdots, \textbf{y}_{n+s-1}\}\end{align*} and
\begin{align*}
~~~~~~~~\textbf{X}_2=\{\textbf{y}_1,\cdots,\textbf{y}_n,
\textbf{y}_{n+1}, \cdots, \textbf{y}_{n+s-1}, \textbf{y}_{n+s}\},
\end{align*}
where $s\geq 1$, such that
$$d_{\text{S}}(\textbf{\text{X}}_1,\textbf{\text{X}}_2')
=\sum_{i=1}^nd_{\text{H}}(\textbf{\text{x}}_i,\textbf{\text{y}}_i)+L(s-1).$$
By Definition \ref{def-dst}, we can suppose
$$d_{\text{S}}(\textbf{\text{X}}_1,\textbf{\text{X}}_2)
=\sum_{i=1}^nd_{\text{H}}(\textbf{\text{x}}_i,\textbf{\text{y}}_{\ell_i})+Ls,$$
where $\{\ell_i;i=1,2,\cdots,n\}$ is a subset of
$\{1,2,\cdots,n+s\}$. We have the following two cases.

Case 1: $n+s\notin\{\ell_1,\ell_2,\cdots,\ell_n\}$. In this case,
we have
\begin{align*}d_{\text{S}}(\textbf{\text{X}}_1,\textbf{\text{X}}_2')
&=\sum_{i=1}^nd_{\text{H}}(\textbf{\text{x}}_i,\textbf{\text{y}}_{i})+L(s-1)\\
&\leq\sum_{i=1}^nd_{\text{H}}(\textbf{\text{x}}_i,\textbf{\text{y}}_{\ell_i})
+L(s-1) \\
&<\sum_{i=1}^n(d_{\text{H}}(\textbf{\text{x}}_i,\textbf{\text{y}}_{\ell_i})
+Ls\\
&=d_{\text{S}}(\textbf{\text{X}}_1,\textbf{\text{X}}_2),\end{align*}
where the first inequality is obtained by \eqref{eq-def-dst}.

Case 2: There exists a $k\in\{1,2,\cdots,n\}$ such that
$n+s=\ell_{k}$. Noticing that $s\geq 1$, then there exists an
$m\in\{1,2,\cdots,n+s-1\}$ such that
$m\notin\{\ell_1,\ell_2,\cdots,\ell_n\}$. Denote $\ell'_k=m$ and
$\ell'_i=\ell_i$ for $i\in\{1,2,\cdots,n\}\backslash\{k\}$. Then
we have \begin{align}\label{eq2-pf-dst-subset}\{\ell'_1, \ell'_2,
\cdots, \ell'_n\}\subseteq\{1,2,\cdots, n+s-1\}.\end{align}

Moreover, noticing that
$\{\textbf{x}_{k},\textbf{y}_{m},\textbf{y}_{\ell_k}\}\subseteq
\textbf{\text{X}}_1\cup\textbf{\text{X}}_2\subseteq\mathbb A^L$,
then $d_{\text{H}}(\textbf{x}_{k},\textbf{y}_{m})\leq L$ and
$d_{\text{H}}(\textbf{x}_{k},\textbf{y}_{\ell_k})\leq L$. Hence,
we can obtain
\begin{align}\label{eq-pf-dst-subset}
d_{\text{H}}(\textbf{x}_{k},\textbf{y}_{m})-
d_{\text{H}}(\textbf{x}_{k},\textbf{y}_{\ell_k})\leq L.
\end{align}
And further we have
\begin{align*}d_{\text{S}}(\textbf{\text{X}}_1,\textbf{\text{X}}_2')
&=\sum_{i=1}^nd_{\text{H}}(\textbf{\text{x}}_i,\textbf{\text{y}}_{i})+L(s-1)\\
&\leq\sum_{i=1}^nd_{\text{H}}(\textbf{\text{x}}_i,\textbf{\text{y}}_{\ell'_i})
+L(s-1)\\
&=\sum_{i=1}^nd_{\text{H}}(\textbf{\text{x}}_i,\textbf{\text{y}}_{\ell_i})
-d_{\text{H}}(\textbf{x}_{k},\textbf{y}_{\ell_k})
+d_{\text{H}}(\textbf{x}_{k},\textbf{y}_{m})\\&~~~ +L(s-1) \\
&\leq\sum_{i=1}^n(d_{\text{H}}(\textbf{\text{x}}_i,\textbf{\text{y}}_{\ell_i})
+L+L(s-1)\\&=d_{\text{S}}(\textbf{\text{X}}_1,\textbf{\text{X}}_2),\end{align*}
where the first inequality is obtained by
\eqref{eq2-pf-dst-subset} and \eqref{eq-def-dst}, and the second
inequality is obtained by \eqref{eq-pf-dst-subset}.

Thus, we always have
$d_{\text{S}}(\textbf{\text{X}}_1,\textbf{\text{X}}_2')\leq
d_{\text{S}}(\textbf{\text{X}}_1,\textbf{\text{X}}_2)$, which
completes the proof.

\section{Proof of Theorem \ref{dst-mtrc}}

By Definition \ref{def-dst}, it is easy to see that for any two
subsets $\textbf{\text{X}}_1$ and $\textbf{\text{X}}_2$ of
$\mathbb A^L$,
$d_{\text{S}}(\textbf{\text{X}}_1,\textbf{\text{X}}_2)
=d_{\text{S}}(\textbf{\text{X}}_2,\textbf{\text{X}}_1)\geq 0$.
Moreover, by Corollary \ref{cor-dst}, we can easily see that
$d_{\text{S}}(\textbf{\text{X}}_1,\textbf{\text{X}}_2)=0$ if and
only if $\textbf{\text{X}}_1=\textbf{\text{X}}_2$. To prove that
$d_{\text{S}}(\cdot,\cdot)$ is a distance function, we only need
to prove the triangle inequality, that is,
$$d_{\text{S}}(\textbf{\text{X}}_1,\textbf{\text{X}}_2)\leq
d_{\text{S}}(\textbf{\text{X}}_1,\textbf{\text{X}}_3)+
d_{\text{S}}(\textbf{\text{X}}_2,\textbf{\text{X}}_3)$$ for any
three subsets $\textbf{\text{X}}_1$, $\textbf{\text{X}}_2$ and
$\textbf{\text{X}}_3$ of $\mathbb A^L$. Without loss of
generality, we can assume that $|\textbf{\text{X}}_1|\leq
|\textbf{\text{X}}_2|$. Then we have the following three cases.

\textbf{Case 1}. $|\textbf{\text{X}}_1|\leq
|\textbf{\text{X}}_2|\leq|\textbf{\text{X}}_3|$. In this case, we
can fix a subset $\textbf{X}_3'\subseteq \textbf{X}_3$ of size
$|\textbf{X}_3'|=|\textbf{X}_2|$. Then by Lemma
\ref{lem-dst-subset},
$d_{\text{S}}(\textbf{\text{X}}_1,\textbf{\text{X}}_3')\leq
d_{\text{S}}(\textbf{\text{X}}_1,\textbf{\text{X}}_3)$ and
$d_{\text{S}}(\textbf{\text{X}}_2,\textbf{\text{X}}_3')\leq
d_{\text{S}}(\textbf{\text{X}}_2,\textbf{\text{X}}_3)$. It
suffices to prove that
$$d_{\text{S}}(\textbf{\text{X}}_1,\textbf{\text{X}}_2)\leq
d_{\text{S}}(\textbf{\text{X}}_1,\textbf{\text{X}}_3')+
d_{\text{S}}(\textbf{\text{X}}_2,\textbf{\text{X}}_3').$$ Without
loss of generality, we can assume
\begin{align*}\textbf{\text{X}}_1&=\{\textbf{\text{x}}_1,
\cdots,\textbf{\text{x}}_n\},\\
\textbf{\text{X}}_2&=\{\textbf{\text{y}}_1,\cdots,\textbf{\text{y}}_n,
\textbf{\text{y}}_{n+1},\cdots,\textbf{\text{y}}_{n+s}\},\\
\textbf{\text{X}}_3'&=\{\textbf{\text{z}}_1,\cdots,\textbf{\text{z}}_n,
\textbf{\text{z}}_{n+1},\cdots,\textbf{\text{z}}_{n+s}\}
\end{align*}
such that
$$d_{\text{S}}(\textbf{\text{X}}_1,\textbf{\text{X}}_3')
=\sum_{i=1}^nd_{\text{H}}(\textbf{\text{x}}_i,\textbf{\text{z}}_i)+Ls,$$
$$d_{\text{S}}(\textbf{\text{X}}_2,\textbf{\text{X}}_3')
=\sum_{i=1}^{n+s}d_{\text{H}}(\textbf{\text{y}}_i,\textbf{\text{z}}_i)$$
and
$$d_{\text{S}}(\textbf{\text{X}}_1,\textbf{\text{X}}_2)
=\sum_{i=1}^nd_{\text{H}}(\textbf{\text{x}}_i,\textbf{\text{y}}_{\ell_i})+Ls,$$
where $s\geq 0$ and
$\{\ell_1,\ell_2,\cdots,\ell_n\}\subseteq\{1,2,\cdots,n+s\}$. Then
we have
\begin{align*}d_{\text{S}}(\textbf{\text{X}}_1,\textbf{\text{X}}_2)
&=\sum_{i=1}^nd_{\text{H}}(\textbf{\text{x}}_i,\textbf{\text{y}}_{\ell_i})+Ls\\
&\leq\sum_{i=1}^nd_{\text{H}}(\textbf{\text{x}}_i,\textbf{\text{y}}_{i})+Ls \\
&\leq\sum_{i=1}^n(d_{\text{H}}(\textbf{\text{x}}_i,\textbf{\text{z}}_{i})
+d_{\text{H}}(\textbf{\text{y}}_i,\textbf{\text{z}}_{i}))+Ls\\
&\leq\sum_{i=1}^nd_{\text{H}}(\textbf{\text{x}}_i,
\textbf{\text{z}}_i)\!+\!Ls
\!+\!\sum_{i=1}^{n+s}d_{\text{H}}(\textbf{\text{y}}_i,
\textbf{\text{z}}_i)\\
&=d_{\text{S}}(\textbf{\text{X}}_1,\textbf{\text{X}}_3')
+d_{\text{S}}(\textbf{\text{X}}_2,\textbf{\text{X}}_3')\\
&\leq d_{\text{S}}(\textbf{\text{X}}_1,\textbf{\text{X}}_3)
+d_{\text{S}}(\textbf{\text{X}}_2,\textbf{\text{X}}_3),\end{align*}
where the first inequality is obtained by \eqref{eq-def-dst} and
the last inequality is obtained by Lemma \ref{lem-dst-subset}.

\textbf{Case 2}. $|\textbf{\text{X}}_1|\leq
|\textbf{\text{X}}_3|\leq|\textbf{\text{X}}_2|$. In this case, we
can assume
\begin{align*}\textbf{\text{X}}_1&=\{\textbf{\text{x}}_1,
\cdots,\textbf{\text{x}}_n\},\\
\textbf{\text{X}}_3&=\{\textbf{\text{y}}_1,\cdots,\textbf{\text{y}}_n,
\textbf{\text{y}}_{n+1},\cdots,\textbf{\text{y}}_{n+s}\},\\
\textbf{\text{X}}_2&=\{\textbf{\text{z}}_1,\cdots,\textbf{\text{z}}_n,
\textbf{\text{z}}_{n+1},\cdots,\textbf{\text{z}}_{n+s},
\textbf{\text{z}}_{n+s+1},\cdots,\textbf{\text{z}}_{n+s+t}\}
\end{align*}
such that
$$d_{\text{S}}(\textbf{\text{X}}_1,\textbf{\text{X}}_3)
=\sum_{i=1}^nd_{\text{H}}(\textbf{\text{x}}_i,\textbf{\text{y}}_i)+Ls,$$
$$d_{\text{S}}(\textbf{\text{X}}_2,\textbf{\text{X}}_3)
=\sum_{i=1}^{n+s}d_{\text{H}}(\textbf{\text{y}}_i,\textbf{\text{z}}_i)+Lt$$
and
$$d_{\text{S}}(\textbf{\text{X}}_1,\textbf{\text{X}}_2)
=\sum_{i=1}^nd_{\text{H}}(\textbf{\text{x}}_i,
\textbf{\text{z}}_{\ell_i})+L(s+t),$$ where $s,t\geq 0$ and
$\{\ell_1,\ell_2,\cdots,\ell_n\}\subseteq\{1,2,\cdots,n+s+t\}$.
Then we have
\begin{align*}d_{\text{S}}(\textbf{\text{X}}_1,\textbf{\text{X}}_2)
&=\sum_{i=1}^nd_{\text{H}}(\textbf{\text{x}}_i,
\textbf{\text{z}}_{\ell_i})+L(s+t)\\
&\leq\sum_{i=1}^nd_{\text{H}}(\textbf{\text{x}}_i,
\textbf{\text{z}}_{i})+L(s+t)\\
&\leq\sum_{i=1}^n(d_{\text{H}}(\textbf{\text{x}}_i,
\textbf{\text{y}}_{i})
+d_{\text{H}}(\textbf{\text{y}}_i,\textbf{\text{z}}_{i}))+L(s+t)\\
&\leq\sum_{i=1}^nd_{\text{H}}(\textbf{\text{x}}_i,
\textbf{\text{y}}_i)\!+\!Ls
\!+\!\sum_{i=1}^{n+s}d_{\text{H}}(\textbf{\text{y}}_i,
\textbf{\text{z}}_i)\!+\!Lt\\
&=d_{\text{S}}(\textbf{\text{X}}_1,\textbf{\text{X}}_3)
+d_{\text{S}}(\textbf{\text{X}}_2,\textbf{\text{X}}_3),\end{align*}
where the first inequality is obtained by \eqref{eq-def-dst}.

\textbf{Case 3}. $|\textbf{\text{X}}_3|\leq
|\textbf{\text{X}}_1|\leq|\textbf{\text{X}}_2|$. In this case, we
can assume
\begin{align*}\textbf{\text{X}}_3&=\{\textbf{\text{x}}_1,
\cdots,\textbf{\text{x}}_n\},\\
\textbf{\text{X}}_1&=\{\textbf{\text{y}}_1,\cdots,\textbf{\text{y}}_n,
\textbf{\text{y}}_{n+1},\cdots,\textbf{\text{y}}_{n+s}\},\\
\textbf{\text{X}}_2&=\{\textbf{\text{z}}_1,\cdots,\textbf{\text{z}}_n,
\textbf{\text{z}}_{n+1},\cdots,\textbf{\text{z}}_{n+s},
\textbf{\text{z}}_{n+s+1},\cdots,\textbf{\text{z}}_{n+s+t}\}
\end{align*}
such that
$$d_{\text{S}}(\textbf{\text{X}}_1,\textbf{\text{X}}_3)
=\sum_{i=1}^nd_{\text{H}}(\textbf{\text{x}}_i,\textbf{\text{y}}_i)+Ls,$$
$$d_{\text{S}}(\textbf{\text{X}}_2,\textbf{\text{X}}_3)
=\sum_{i=1}^{n}d_{\text{H}}(\textbf{\text{x}}_i,
\textbf{\text{z}}_i)+L(s+t)$$ and
$$d_{\text{S}}(\textbf{\text{X}}_1,\textbf{\text{X}}_2)
=\sum_{i=1}^{n+s}d_{\text{H}}(\textbf{\text{y}}_i,
\textbf{\text{z}}_{\ell_i})+Lt,$$ where $s,t\geq 0$ and
$\{\ell_1,\ell_2,\cdots,\ell_n\}\subseteq\{1,2,\cdots,n+s+t\}$.
Then we have
\begin{align*}d_{\text{S}}(\textbf{\text{X}}_1,\textbf{\text{X}}_2)
&=\sum_{i=1}^{n+s}d_{\text{H}}(\textbf{\text{y}}_i,
\textbf{\text{z}}_{\ell_i})+Lt\\
&\leq\sum_{i=1}^{n+s}d_{\text{H}}(\textbf{\text{y}}_i,
\textbf{\text{z}}_{i})+Lt\\
&\leq\sum_{i=1}^{n+s}(d_{\text{H}}(\textbf{\text{x}}_i,
\textbf{\text{y}}_{i})
+d_{\text{H}}(\textbf{\text{x}}_i,\textbf{\text{z}}_{i}))+Lt\\
&\leq\sum_{i=1}^nd_{\text{H}}(\textbf{\text{x}}_i,
\textbf{\text{y}}_i)\!+\!Ls
\!+\!\sum_{i=1}^{n}d_{\text{H}}(\textbf{\text{x}}_i,
\textbf{\text{z}}_i)\!+\!L(s\!+\!t)\\
&=d_{\text{S}}(\textbf{\text{X}}_1,\textbf{\text{X}}_3)
+d_{\text{S}}(\textbf{\text{X}}_2,\textbf{\text{X}}_3),\end{align*}
where the first inequality is obtained by \eqref{eq-def-dst}, and
the third inequality is obtained from the simple fact that
$d_{\text{H}}(\cdot,\cdot)\leq L$.

For all cases, we have proved that
$$d_{\text{S}}(\textbf{\text{X}}_1,\textbf{\text{X}}_2)\leq
d_{\text{S}}(\textbf{\text{X}}_1,\textbf{\text{X}}_3)+
d_{\text{S}}(\textbf{\text{X}}_2,\textbf{\text{X}}_3).$$ Hence,
$d_{\text{S}}(\cdot,\cdot)$ satisfies the triangle inequality.

By the above discussion, we proved that
$d_{\text{S}}(\cdot,\cdot)$ is a distance function over $\mathcal
P(\mathbb A^L)$.






\end{document}